\documentclass[pdflatex,sn-mathphys-num]{sn-jnl}

\usepackage{graphicx}%
\usepackage{multirow}%
\usepackage{amsmath,amssymb,amsfonts}%
\usepackage{amsthm}%
\usepackage{mathrsfs}%
\usepackage[title]{appendix}%
\usepackage{xcolor}%
\usepackage{textcomp}%
\usepackage{manyfoot}%
\usepackage{booktabs}%
\usepackage{algorithm}%
\usepackage{algorithmicx}%
\usepackage{algpseudocode}%
\usepackage{listings}%
\usepackage{array}
\usepackage{longtable}
\usepackage{makecell}
\usepackage{enumitem}

\theoremstyle{thmstyleone}%
\newtheorem{theorem}{Theorem}
\newtheorem{proposition}[theorem]{Proposition}%
\newtheorem{lemma}[theorem]{Lemma}
\newtheorem{corollary}[theorem]{Corollary}

\theoremstyle{thmstyletwo}%
\newtheorem{example}{Example}%
\newtheorem{remark}{Remark}%

\theoremstyle{thmstylethree}%

\begin{document}

\title[Article Title]{Self-Orthogonal Twisted Generalized Reed-Solomon Codes and Their Application to Quantum Error-Correcting Codes}


\author[1]{\fnm{Yanxin} \sur{Chen}}\email{15581953002@163.com}

\author[1]{\fnm{Yanli} \sur{Wang}}\email{wyl0207111@163.com}

\author*[1]{\fnm{Tongjiang} \sur{Yan}}\email{yantoji@163.com}

\affil[1]{\orgname{China University of Petroleum (East China)}, \orgaddress{\city{Qingdao}, \postcode{266580}, \state{Shandong}, \country{China}}}


\abstract{In this paper, two classes of twisted generalized Reed-Solomon (TGRS) codes with multi-twists are studied. Firstly, some sufficient and necessary conditions for these codes to be self-orthogonal and self-dual are established. Then several explicit constructions of self-orthogonal and self-dual codes are presented, from which quantum stabilizer codes are further derived. Finally, some corresponding examples are given, especially that some of these codes are MDS, AMDS or NMDS and that some of the resulting quantum stabilizer codes are optimal, achieving the quantum Singleton bound.}

\keywords{Twisted generalized Reed-Solomon codes, Self-orthogonal codes, Self-dual codes, Quantum stabilizer codes}



\maketitle

\section{Introduction}\label{sec1}

Let $\mathbb{F}_q$ be the finite field with $q=p^h$ elements, where $p$ is a prime and $h$ is a positive integer. Let $\mathbb{F}_q^* = \mathbb{F}_q \setminus \{ 0 \} $ and $\mathbb{F}_q^n$ denote the $n$-dimensional vector space over $\mathbb{F}_q$. Set $S \subseteq \mathbb{F}_q^n $, $\operatorname{Span}(S)$ denotes the linear subspace of $\mathbb{F}_q^n$ spanned by $S$. An $[n,k,d]_q$ linear code $\mathcal{C}$ is a $k$-dimensional linear subspace of $\mathbb{F}_q^n$. The hull of a linear code $\mathcal{C}$ is defined as $ \operatorname{Hull}(\mathcal{C}) = \mathcal{C} \cap \mathcal{C}^\perp $, where $\mathcal{C}^\perp $ is the dual code of $\mathcal{C}$. If $ \operatorname{Hull}(\mathcal{C}) = \mathcal{C} $, then $\mathcal{C} $ is called a self-orthogonal code. If $ \operatorname{Hull}(\mathcal{C}) = \mathcal{C} = \mathcal{C}^\perp $, then $\mathcal{C} $ is called a self-dual code. If $ \operatorname{Hull}(\mathcal{C}) = \{ \boldsymbol{0} \} $, then $\mathcal{C} $ is called a linear complementary dual (LCD) code. In addition, the well-known Singleton bound says that $S(\mathcal{C}) =  n-k+1 - d$ for any $[n,k,d]_q $ code $\mathcal{C}$. If $S(\mathcal{C}) = 0$, then $\mathcal{C}$ is called a maximum distance separable (MDS) code. If $S(\mathcal{C}) = 1$, then $\mathcal{C}$ is called almost-MDS (AMDS) code. If $S(\mathcal{C}) = S(\mathcal{C}^\perp) = 1$, then $\mathcal{C}$ is called near-MDS (NMDS) code.

In recent years, self-orthogonal codes and MDS codes have found many important applications in different fields. Self-orthogonal codes can be used to construct LCD codes \cite{carlet2018euclidean} and various quantum error-correcting codes, such as pure additive quantum codes \cite{steane2002enlargement}, quantum stabilizer codes \cite{ketkar2006nonbinary} and so on, thus providing effective protection for quantum information in quantum computing and quantum communication \cite{calderbank1997quantum,gottesman1996class}. For a given code rate, MDS codes can correct maximum number of errors \cite{huffman2010fundamentals}. Moreover, MDS codes are closely connected with combinatorial designs and finite geometry \cite{huffman2010fundamentals,macwilliams1977theory}. Consequently, the study of MDS codes, including weight distributions, constructions, self-orthogonal property and so on, has been widely investigated \cite{baicheva200410,georgiou2002mds,grassl2008self,guenda2012new,kim2004euclidean,shi2018self}.

In 2017, inspired by twisted Gabidulin codes \cite{sheekey2015new}, Beelen, et al. firstly introduced twisted Reed-Solomon (TRS) codes, and also showed that TRS codes could be well decoded \cite{beelen2017twisted}. Different from generalized Reed-Solomon (GRS) codes, a twisted generalized Reed-Solomon (TGRS) code is not necessarily MDS. And so many scholars have studied the TGRS code, including MDS properties, NMDS properties, LCD properties, self-dual properties, self-orthogonal properties and so on \cite{zhao2025research,huang2023mds,liu2021construction,singh2024mds,meena2025class,liang2025four,sui2022onlymds,gu2023twisted,sui2023new,sui2022mds,zhu2021self,huang2021mds,ding2025new,hu2025p,zhao2025hermitian,guo2023duality,zhang2025almost,zhu20241,zhu2022self,yang2025two,zhu2024class,luo2022two}. In particular, we list some results as follows:

\begin{itemize}
    \item In 2023, Sui et al. provided a sufficient and necessary condition for the TGRS codes with the general 2 twists to be MDS \cite{sui2023new}. In 2025, Ding et al. extended the work of Sui et al. and obtained a sufficient and necessary condition for the TGRS codes with the general $l$ twists to be MDS \cite{ding2025new}. In addition, Zhao et al. presented a sufficient and necessary condition under which an A-TGRS code is MDS by determining the explicit inverse of the Vandermonde matrix \cite{zhao2025research}.

    \item In 2021, Huang et al. presented a sufficient and necessary condition for the (+)-TGRS code to be self-dual, and then constructed several classes of self-dual MDS and NMDS codes \cite{huang2021mds}. In 2023, Sui et al. provided a sufficient and necessary condition for the TGRS codes with the general 2 twists to be self-dual by employing the technique of block matrix, and then constructed several classes of self-dual MDS and NMDS codes \cite{sui2023new}. In 2025, Ding et al. derived a sufficient and necessary condition for TGRS codes with the general $l$ twists to be self-dual using a similar method, and obtained some self-dual TGRS codes \cite{ding2025new}, and then Hu et al. gave a sufficient condition for the $(\mathcal{L}, \mathcal{P})$-TGRS code to be self-dual \cite{hu2025p}.

    \item In 2022, Zhu et al. derived a sufficient and necessary condition for the TGRS code with $h + t \leq k - 1$ to be self-orthogonal by calculating the dual code of the Schur square of the standard twisted Reed-Solomon code \cite{zhu2022self}. In 2024, Zhu et al. presented a sufficient and necessary condition for any punctured code of the $[1,0]$-TGRS code to be self-orthogonal \cite{zhu20241}. In 2025, Yang et al. gave sufficient and necessary conditions for TGRS codes with the hook $h=(0,0)$ and the twists $t=(1,2)$ and $t=(1,3)$ to be MDS and self-orthogonal, respectively, and then constructed two classes of self-orthogonal MDS codes and several classes of LCD codes \cite{yang2025two}. In 2026, Liang et al. established necessary and sufficient conditions for TGRS codes with the hook $h=(k-1,k-1,\cdots,k-1)$ and the twists $t=(1,2,\cdots,l)$ to be NMDS and self-orthogonal, respectively, and then constructed some LCD MDS codes \cite{liang2025multi}.
\end{itemize}

Inspired by these efforts, in this paper, two special classes of $(\mathcal{L},\mathcal{P})$-TGRS codes are considered. We will give sufficient and necessary conditions for TGRS codes to be self-orthogonal. Based on the above results, several classes of self-orthogonal MDS codes and quantum stabilizer codes are constructed via TGRS codes.

The rest of this paper is organized as follows. In Section \ref{sec2}, some basic definitions and results on TGRS codes are presented. In Section \ref{sec3}, sufficient and necessary conditions for the TGRS codes to be self-orthogonal are given. In Section \ref{sec4}, several classes of self-orthogonal codes and quantum stabilizer codes are constructed based on TGRS codes, including self-orthogonal MDS codes and quantum MDS codes. In Section \ref{sec5}, we conclude the whole paper.

\section{Preliminaries}\label{sec2}

For any two vectors $\boldsymbol{x}=(x_1,x_2,\cdots,x_n)$ and $\boldsymbol{y}=(y_1,y_2,\cdots,y_n)$, the Euclidean inner product of $\boldsymbol{x}$ and $\boldsymbol{y}$ is defined as $\boldsymbol{x} \cdot \boldsymbol{y} = \sum \limits_{i=1}^n x_i y_i$. The dual code of a linear code $\mathcal{C}$ is defined to be 
\[
\mathcal{C}^{\perp}=\left\{ \boldsymbol{x} \in \mathbb{F}_q^n \mid\ \boldsymbol{x} \cdot \boldsymbol{c} = \sum_{i=1}^n x_i c_i =0 \text{ for } \boldsymbol{c} \in \mathcal{C} \right\}. 
\]
If $\mathcal{C} \subseteq \mathcal{C}^{\perp}$, then $\mathcal{C}$ is a self-orthogonal code.

For convenience, the following notations are adopted unless otherwise specified.

\begin{itemize}[leftmargin=2.5em]
    \item $\boldsymbol{\alpha} = (\alpha_1,\alpha_2,\cdots,\alpha_n) \in \mathbb{F}_q^n$ with $\alpha_i \neq \alpha_j$ $(i \neq j)$.
    \item $\boldsymbol{v} = (v_1,v_2,\cdots,v_n) \in (\mathbb{F}_q^*)^n$.
    \item $u_i = \prod\limits_{j=1, j \neq i}^n (\alpha_i - \alpha_j)^{-1}$ for $1 \leq i \leq n$.
\end{itemize}

The evaluation map associated with $\boldsymbol{\alpha}$ and $\boldsymbol{v}$ is defined as
\[
\begin{aligned}
ev_{\boldsymbol{\alpha},\boldsymbol{v}} : \mathbb{F}_q[x] &\rightarrow \mathbb{F}_q^n,\\
f(x) &\mapsto (v_1f(\alpha_1),v_2f(\alpha_2),\cdots,v_nf(\alpha_n)).
\end{aligned}
\]

For positive integers $k \leq n \leq q$, denote a $k \times (n-k)$ matrix
\[
A(\eta) = \begin{bmatrix}
\eta_{11} & \eta_{12} & \cdots & \eta_{1,n-k} \\
\eta_{21} & \eta_{22} & \cdots & \eta_{2,n-k} \\
\vdots & \vdots & &\vdots \\
\eta_{k1} & \eta_{k2} & \cdots & \eta_{k,n-k}
\end{bmatrix}
\]
over $\mathbb{F}_q$, then the polynomial set
\[
\mathcal{S} = \left\{ f(x) = \sum_{i=0}^{k-1} f_i x^i + \sum_{i=0}^{k-1} f_{i} \sum_{j=1}^{n-k} \eta_{i+1,j} x^{k-1+j} \mid\ f_i \in \mathbb{F}_q \text{ for } 0 \leq i \leq k-1\right\}
\]
is a $k$-dimensional subspace of $\mathbb{F}_q[x]$ over $\mathbb{F}_q$, and a twisted generalized Reed-Solomon (TGRS) code is defined as $\mathcal{C} = \left\{ ev_{\alpha,v}(f(x)) \mid f(x) \in \mathcal{S}\right\}$.

In this paper, we consider the following two cases:
\[
A_1(\eta) = \begin{bmatrix}
0 & 0 & \cdots & 0 \\
\vdots & \vdots &  &\vdots \\
\eta_{11} & \eta_{12} & \cdots & 0 \\
\eta_{21} & \eta_{22} & \cdots & 0
\end{bmatrix},
A_2(\eta) = \begin{bmatrix}
    \eta_{1} & \eta_{2} & \cdots & \eta_{n-k} \\
    0 & 0 & \cdots & 0 \\
    \vdots & \vdots &  &\vdots \\
    0 & 0 & \cdots & 0
\end{bmatrix}.
\]

Hence we write 

\begin{align*}
    \mathcal{S}_{1} &= \left\{ f(x) = \sum_{i=0}^{k-1} f_i x^i + \sum_{i=1}^2 f_{k-3+i} \sum_{j=1}^2 \eta_{ij} x^{k-1+j} \mid\ f_i \in \mathbb{F}_q \text{ for } 0 \leq i \leq k-1\right\}, \\
    \mathcal{S}_{2} &= \left\{ f(x) = \sum_{i=0}^{k-1} f_i x^i + f_{0} \sum_{j=1}^{n-k} \eta_{j} x^{k-1+j} \mid\ f_i \in \mathbb{F}_q \text{ for } 0 \leq i \leq k-1\right\}.
\end{align*}

Therefore, $\mathcal{C}_1 =\left\{ ev_{\boldsymbol{\alpha},\boldsymbol{v}}(f(x)) \mid f(x) \in \mathcal{S}_1 \right\}$ has the generator matrix
\[
G_1=\begin{bmatrix}
    v_1 & \cdots & v_n \\
    v_1 \alpha_1 & \cdots & v_n \alpha_n \\
    \vdots &  & \vdots \\
    v_1 \alpha_1^{k-3} & \cdots & v_n \alpha_n^{k-3} \\
    v_1(\alpha_1^{k-2} + \sum\limits_{i=1}^2 \eta_{1i} \alpha_1^{k-1+i})
     & \cdots & v_n(\alpha_n^{k-2} + \sum\limits_{i=1}^2 \eta_{1i} \alpha_n^{k-1+i})) \\
    v_1(\alpha_1^{k-1} + \sum\limits_{i=1}^2 \eta_{2i} \alpha_1^{k-1+i})) & \cdots & v_n(\alpha_n^{k-1} + \sum\limits_{i=1}^2 \eta_{2i} \alpha_n^{k-1+i}))
\end{bmatrix}_{k \times n},
\]
and  $\mathcal{C}_2=\left\{ ev_{\boldsymbol{\alpha},\boldsymbol{v}}(f(x)) \mid f(x) \in \mathcal{S}_2 \right\}$ has the generator matrix
\[
G_2=\begin{bmatrix}
    v_1(1+\sum\limits_{i=1}^{n-k}\eta_i \alpha_1^{k-1+i}) & \cdots & v_n(1+\sum\limits_{i=1}^{n-k}\eta_i \alpha_n^{k-1+i}) \\
    v_1 \alpha_1 & \cdots & v_n \alpha_n \\
    \vdots &  & \vdots \\
    v_1 \alpha_1^{k-1} & \cdots & v_n \alpha_n^{k-1}
\end{bmatrix}_{k \times n}.
\]

\begin{lemma}\label{key}\cite{sui2022mds}
    Let $\prod\limits_{i=1}^n(x-\alpha_i) = \sum\limits_{i=0}^n \sigma_i x^{n-i}$. Let $\Lambda_0=1$ and $ \boldsymbol{y}=(\Lambda_0,\Lambda_1,\cdots,\Lambda_n)$ be the unique solution of the following system of equations
\[
\begin{bmatrix}
    \sigma_0 & 0 & 0 & \cdots & 0 \\
    \sigma_1 & \sigma_0 & 0 & \cdots & 0 \\
    \sigma_2 & \sigma_1 & \sigma_0 & \cdots & 0 \\
    \vdots & \vdots & \vdots &  & \vdots \\
    \sigma_{n} & \sigma_{n-1} & \sigma_{n-2} & \cdots & \sigma_{0}
\end{bmatrix}
\begin{bmatrix}
    \Lambda_0 \\ \Lambda_1 \\ \vdots \\ \Lambda_{n} 
\end{bmatrix}
= 
\begin{bmatrix}
    1 \\ 0 \\ \vdots \\ 0
\end{bmatrix}.
\]
For any fixed $0 \leq t \leq n$, if $\alpha_i^{n-1+t} = \sum\limits_{j=0}^{n-1}f_j \alpha_i^j $ for $1 \leq i \leq n$, then $f_{n-1} = \Lambda_{t} $. 
\end{lemma}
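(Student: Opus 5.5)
The plan is to identify both sides of the claimed equality with the coefficient of $z^t$ in the power series $1/\prod_{j=1}^n(1-\alpha_j z)$. I would work in $\mathbb{F}_q[[z]]$ throughout.

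\textbf{Step 1: the left-hand side.} Since the $\alpha_i$ are distinct, the Vandermonde matrix is invertible, so the coefficients $f_0,\dots,f_{n-1}$ are uniquely determined. The polynomial $g(x)=\sum_{j=0}^{n-1} f_j x^j$ is then the Lagrange interpolation polynomial of $x^{n-1+t}$ at the nodes $\alpha_1,\dots,\alpha_n$. Writing it in Lagrange form,
\[
g(x)=\sum_{i=1}^n \alpha_i^{n-1+t}\prod_{j\neq i}\frac{x-\alpha_j}{\alpha_i-\alpha_j},
\]
and reading off the coefficient of $x^{n-1}$ gives
\[
f_{n-1}=\sum_{i=1}^n u_i\,\alpha_i^{n-1+t},
\]
with $u_i$ as defined in the notation list.

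\textbf{Step 2: the right-hand side.} Put $\sigma(z)=\sum_{i=0}^n\sigma_i z^i$, the reciprocal polynomial of $\prod_{i=1}^n(x-\alpha_i)$. Then
\[
\sigma(z)=z^n\prod_{i=1}^n(z^{-1}-\alpha_i)=\prod_{i=1}^n(1-\alpha_i z).
\]
The lower-triangular system in Lemma \ref{key} says exactly that $\sigma(z)\sum_{j=0}^{n}\Lambda_j z^j\equiv 1 \pmod{z^{n+1}}$. Because $\sigma_0=1$, $\sigma(z)$ is invertible in $\mathbb{F}_q[[z]]$. Hence $\Lambda_t$ is the coefficient of $z^t$ in $1/\sigma(z)$ for every $0\le t\le n$.

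\textbf{Step 3: the partial fraction identity.} The key claim is
\[
\frac{1}{\prod_{j=1}^n(1-\alpha_j z)}=\sum_{i=1}^n \frac{u_i\,\alpha_i^{n-1}}{1-\alpha_i z}.
\]
To prove it, multiply both sides by $\prod_j(1-\alpha_j z)$. This reduces the claim to a polynomial identity of degree at most $n-1$ in $z$:
\[
1=\sum_{i=1}^n u_i\alpha_i^{n-1}\prod_{j\neq i}(1-\alpha_j z).
\]
Substituting $x=1/z$ and multiplying by $x^{n-1}$, this becomes
\[
x^{n-1}=\sum_i \alpha_i^{n-1}\,u_i\prod_{j\ne i}(x-\alpha_j).
\]
That is just Lagrange interpolation of $x^{n-1}$ at the distinct nodes $\alpha_i$, so it holds. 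Working in $x$ this way avoids dividing by $\alpha_i$, so the case where some $\alpha_i=0$ needs no separate treatment.

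\textbf{Step 4: conclusion.} Expand each term $1/(1-\alpha_i z)$ as a geometric series. The coefficient of $z^t$ on the right of Step 3 is $\sum_i u_i\alpha_i^{n-1+t}$. By Step 1 this equals $f_{n-1}$, and by Step 2 it equals $\Lambda_t$.

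The only step needing care is Step 3, specifically the passage between the $z$-form and the $x$-form and the check that degrees match so that the identity holds as polynomials. The rest is bookkeeping.
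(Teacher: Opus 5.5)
Your proof is correct and complete. Step 1 is the $x^{n-1}$-coefficient of the Lagrange interpolant. Step 2 correctly reads the triangular system as $\sigma(z)\Lambda(z)\equiv 1 \pmod{z^{n+1}}$, with $\sigma(z)=\prod_i(1-\alpha_i z)$ invertible in $\mathbb{F}_q[[z]]$. In Step 3 the reversal $A(z)\mapsto x^{n-1}A(1/x)$ is a bijection on polynomials of degree at most $n-1$, so reducing the partial-fraction identity to interpolation of $x^{n-1}$ is legitimate.

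There is no in-paper proof to compare against: the paper quotes this lemma from \cite{sui2022mds} without proof. Only your Step 1 appears in the paper, in the proof of the theorem giving $H_2$. There the lemma is combined with $\sum_i u_i\alpha_i^j=0$ for $0\le j\le n-2$ and $\sum_i u_i\alpha_i^{n-1}=1$ to get $\sum_i u_i\alpha_i^{n-1+t}=\Lambda_t$. That identity is the only form in which the paper uses the lemma, and your Steps 2--4 prove it directly.

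Your generating-function route also buys more than needed. It identifies $\Lambda_t$ as the complete homogeneous symmetric polynomial $h_t(\alpha_1,\dots,\alpha_n)$ and gives $f_{n-1}=[z^t]\,\sigma(z)^{-1}$ for every $t\ge 0$, not just $t\le n$.

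If you would rather stay within the paper's linear-algebra manipulations, you can check directly that $S_s=\sum_i u_i\alpha_i^{n-1+s}$ solves the system. Clearly $S_0=1$. For $1\le m\le n$, using $\sum_{r=0}^{n}\sigma_r\alpha_i^{n-r}=0$,
\[
\sum_{s=0}^{m}\sigma_{m-s}S_s=\sum_{i=1}^n u_i\,\alpha_i^{m-1}\sum_{r=0}^{m}\sigma_r\alpha_i^{n-r}=-\sum_{r=m+1}^{n}\sigma_r\sum_{i=1}^n u_i\,\alpha_i^{n-r+m-1}=0,
\]
since every exponent lies in $[m-1,n-2]$. Uniqueness of the solution of the unit lower-triangular system then gives $S_t=\Lambda_t$.

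A third route avoids the $u_i$ entirely. Divide $x^{n-1+t}$ by $\prod_i(x-\alpha_i)$. The first $t$ rows of the system force the quotient's coefficients to be $\Lambda_0,\dots,\Lambda_{t-1}$. Comparing coefficients of $x^{n-1}$ in the remainder then gives $f_{n-1}=-\sum_{j<t}\sigma_{t-j}\Lambda_j=\Lambda_t$.
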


\begin{lemma}\cite{sui2023new}
    Let $\prod\limits_{i=1}^n (x - \alpha_i) = \sum\limits_{i=0}^n \sigma_i x^{n-i}$. 
Then $\mathcal{C}_1$ has the parity check matrix
\[
H_1=\begin{bmatrix}
    \cdots & \frac{u_j}{v_j} & \cdots \\
    \cdots & \frac{u_j}{v_j} \alpha_j & \cdots \\
    \vdots & \vdots & \vdots \\
    \cdots & \frac{u_j}{v_j} \alpha_j ^{n-k-3} & \cdots \\
    \cdots & \frac{u_j}{v_j} \alpha_j ^{n-k-2} (1- \sum\limits_{i=1}^{2} \eta_{i2} \sum\limits_{t=0}^{4-i} \sigma_t \alpha_j^{4-i-t}) & \cdots \\
    \cdots & \frac{u_j}{v_j} \alpha_j^{n-k-2} (\sum\limits_{i=0}^1 \sigma_i \alpha_j^{1-i} - \sum\limits_{i=1}^{2} \eta_{i1} \sum\limits_{t=0}^{4-i} \sigma_t \alpha_j^{4-i-t}) & \cdots
\end{bmatrix}_{(n-k) \times n}.
\]
\end{lemma}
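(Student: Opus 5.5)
The plan is to verify directly that $G_1H_1^{T}=0$ and that $H_1$ has rank $n-k$. Together with $\dim\mathcal{C}_1=k$, these two facts give $\mathcal{C}_1^\perp=\operatorname{Span}(\text{rows of }H_1)$.

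\textbf{Step 1: a moment identity.} Every entry of $G_1H_1^{T}$ is a linear combination of sums of the form
\[
M_m=\sum_{j=1}^n u_j\alpha_j^{m}.
\]
Lagrange interpolation gives $M_m=0$ for $0\le m\le n-2$ and $M_{n-1}=1$. For $m=n-1+t$ with $0\le t\le n$, write $\alpha_i^{n-1+t}=\sum_{l=0}^{n-1} f_l\alpha_i^{l}$. Since $\sum_j u_j\alpha_j^{l}=\delta_{l,n-1}$ for $l\le n-1$, Lemma~\ref{key} gives $M_{n-1+t}=f_{n-1}=\Lambda_t$. We set $\Lambda_s=0$ for $s<0$. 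The triangular system defining the $\Lambda_s$ then says $\sum_{t=0}^{s}\sigma_t\Lambda_{s-t}=0$ for $s\ge1$.

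Put $p_r(x)=\sum_{t=0}^{r}\sigma_t x^{r-t}$. This yields the key identity
\[
\sum_{j=1}^n u_j\alpha_j^{m}p_r(\alpha_j)=\sum_{t=0}^{r}\sigma_t\Lambda_{m+r-t-(n-1)} .
\]
Let $s=m+r-(n-1)$. The right-hand side is $0$ if $s<0$ or $1\le s\le r$, and equals $1$ if $s=0$. In other words, multiplying by $p_r$ shifts the moment window so that exponents up to $n-2+r$ still give zero apart from a single $1$.

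\textbf{Step 2: the inner products.} The rows of $G_1$ are $v_j g_a(\alpha_j)$, where $g_a=x^a$ for $a\le k-3$, and
\[
g_{k-2}=x^{k-2}+\eta_{11}x^{k}+\eta_{12}x^{k+1},\qquad g_{k-1}=x^{k-1}+\eta_{21}x^{k}+\eta_{22}x^{k+1}.
\]
The rows of $H_1$ are $(u_j/v_j)h_b(\alpha_j)$. Here $h_b=x^b$ for $b\le n-k-3$, and
\[
h_{n-k-2}=x^{n-k-2}\bigl(1-\eta_{12}p_3-\eta_{22}p_2\bigr),\qquad h_{n-k-1}=x^{n-k-2}\bigl(p_1-\eta_{11}p_3-\eta_{21}p_2\bigr).
\]
The $v_j$ cancel, so each entry of $G_1H_1^{T}$ equals $\sum_j u_j g_a(\alpha_j)h_b(\alpha_j)$. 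I will expand each entry into terms $\sum_j u_j\alpha_j^{m}p_r(\alpha_j)$ and evaluate it with Step 1, going through the cases in turn.
\begin{itemize}
\item For $a\le k-3$ and $b\le n-k-3$, all degrees are at most $n-2$, so the entry is $0$.
\item Untwisted monomials against the last two rows of $H_1$ give exponents $m=a+n-k-2$ with $r\in\{0,1,2,3\}$. Here $s\le r$, so only the $s=0$ terms survive. These are exactly the terms that the $\eta$-corrections are designed to cancel.
\item The $2\times2$ block formed by $g_{k-2},g_{k-1}$ against $h_{n-k-2},h_{n-k-1}$ must be tracked term by term. For example, $x^{k-1}\cdot x^{n-k-2}p_1$ has $s=0$ and contributes $1$, while $\eta_{21}x^{k}\cdot x^{n-k-2}(-p_2)$ contributes $-\eta_{21}$ only if $s=0$, and so on. The choice of the degrees $4-i$ in $p_{4-i}$ should make every surviving $s=0$ contribution pair off against an opposite one.
\end{itemize}
I expect this $2\times2$ block bookkeeping, especially the cross terms $\eta_{1i}\eta_{2i'}$ involving $p_3$ at exponent $n+2$, to be the main obstacle. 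I would organize it by recording $(m,r,s)$ for each monomial in a small table.

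\textbf{Step 3: rank.} Each row of $H_1$ is $(u_j/v_j)h_b(\alpha_j)$ with $\deg h_b\le n-k+1\le n-1$ once $k\ge 2$. Evaluation of polynomials of degree $<n$ at $n$ distinct points is injective, so it suffices to show that the $h_b$ are linearly independent in $\mathbb{F}_q[x]$. Modulo $\operatorname{Span}(1,x,\dots,x^{n-k-3})$, the first $n-k-2$ rows vanish. For $h_{n-k-2}$ and $h_{n-k-1}$, I would compare coefficients of $x^{n-k-2}$, $x^{n-k-1}$ and the top degrees. The $p_1$ term in $h_{n-k-1}$ supplies a $\sigma_0x^{n-k-1}=x^{n-k-1}$ that is absent from the $1$ in $h_{n-k-2}$, so a short $2\times2$ determinant check should give independence.

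If that determinant could vanish for special $\eta$, I would instead argue by dimensions. From Step 2, the row space of $H_1$ lies in $\mathcal{C}_1^\perp$, which has dimension $n-k$. I would then show independence via the triangular shape of the coefficient matrix of $\{h_b\}$ in the basis $x^{0},\dots,x^{n-1}$.
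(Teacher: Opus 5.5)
Your route is the one the paper uses. The paper only cites this lemma from Sui et al., but it proves its $H_2$ analogue the same way: the moments $\sum_j u_j\alpha_j^m$ are evaluated through Lemma~\ref{key}, and the rank comes from evaluating polynomials of degree $<n$.

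The orthogonality part closes more easily than you fear. In the $2\times2$ block every term has $s\le r$. The extreme case is $\eta_{i2}x^{k+1}\cdot x^{n-k-2}p_3$, with $m=n-1$ and $r=s=3$. So all $\eta\eta$ cross terms vanish by your identity, and no explicit $\Lambda_t$ is ever needed. Each entry has exactly two surviving $s=0$ terms, and they cancel:
\begin{itemize}
\item $(m,r)=(n-4,3)$ and $(n-1,0)$ give $-\eta_{12}+\eta_{12}$ in $g_{k-2}h_{n-k-2}$;
\item $(n-4,3)$ and $(n-2,1)$ give $-\eta_{11}+\eta_{11}$ in $g_{k-2}h_{n-k-1}$;
\item $(n-3,2)$ and $(n-1,0)$ give $-\eta_{22}+\eta_{22}$ in $g_{k-1}h_{n-k-2}$;
\item $(n-3,2)$ and $(n-2,1)$ give $-\eta_{21}+\eta_{21}$ in $g_{k-1}h_{n-k-1}$.
\end{itemize}
Your case list skips the twisted rows against $h_b$ for $b\le n-k-3$, but those only involve exponents $\le n-2$, so they vanish.

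The genuine gap is the rank step. The $2\times2$ minor you propose, on the coefficients of $x^{n-k-2}$ and $x^{n-k-1}$, can vanish. For $\eta_{11}=\eta_{12}=\eta_{22}=0$ it equals $1-\eta_{21}\sigma_1$, which is $0$ when $\eta_{21}\sigma_1=1$, even though the rows are still independent.

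Your fallback does not repair this. The row space of $H_1$ lying in $\mathcal{C}_1^\perp$ only gives $\operatorname{rank}H_1\le n-k$, which is the wrong inequality. Also, in the monomial basis the last two rows of the coefficient matrix are not triangular.

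The missing observation is to use your own $p_r$ as a basis.
\begin{itemize}
\item The first $n-k-2$ polynomials have degree $\le n-k-3$, while the last two are divisible by $x^{n-k-2}$.
\item So any dependence forces $c(1-\eta_{22}p_2-\eta_{12}p_3)+d(p_1-\eta_{21}p_2-\eta_{11}p_3)=0$.
\item Since $1,p_1,p_2,p_3$ are monic of degrees $0,1,2,3$, they form a basis.
\item Reading off the coordinates at $1$ and at $p_1$ gives $c=d=0$ for every choice of $\eta$.
\end{itemize}
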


\begin{theorem}
    Let $2 \leq k < n $. Let $\eta = (\eta_1 , \cdots , \eta_{n-k}) \in \mathbb{F}_q^{n-k}$ and $\eta_{n-k} \neq 0$. Let $\prod\limits_{i=1}^n (x - \alpha_i) = \sum\limits_{i=0}^n \sigma_i x^{n-i}$. 
Then $\mathcal{C}_2$ has the parity check matrix
\[
H_2=\begin{bmatrix}
    \cdots & \frac{u_j}{v_j} (1-\eta_{n-k}\sum\limits_{i=0}^{n-1}\sigma_{n-1-i}\alpha_j^i) & \cdots\\
    \cdots & \frac{u_j}{v_j} (b_1+\alpha_j) & \cdots \\
    \cdots & \frac{u_j}{v_j} (b_2+\alpha_j^2) & \cdots \\
    \vdots & \vdots & \vdots \\
    \cdots & \frac{u_j}{v_j} (b_{n-k-1}+\alpha_j^{n-k-1}) & \cdots 
\end{bmatrix}_{(n-k) \times n},
\]
where $b_1 = \sigma_1 - \frac{\eta_{n-k-1}}{\eta_{n-k}}$ , $b_i = \sigma_i - \frac{\eta_{n-k-i}}{\eta_{n-k}} - \sum\limits_{j=1}^{i-1} \sigma_{i-j} b_j$ for $ 2 \leq i \leq n-k-1$.
\end{theorem}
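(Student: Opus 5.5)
We show that $H_2$ has rank $n-k$ and that $G_2H_2^{T}=0$. Since $\mathcal{C}_2$ has dimension $k$, this identifies $\operatorname{Span}$ of the rows of $H_2$ with $\mathcal{C}_2^{\perp}$.

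\emph{Rank.} Each row of $H_2$ has the form $\left(\frac{u_j}{v_j}g(\alpha_j)\right)_j$ for a polynomial $g$ of degree at most $n-1$. Since $u_j/v_j\neq 0$ and the $\alpha_j$ are distinct, the evaluation map on polynomials of degree $\le n-1$ is injective. So it suffices that the $n-k$ polynomials are linearly independent. The first one is $1-\eta_{n-k}\sum_{i=0}^{n-1}\sigma_{n-1-i}x^i$, which has degree exactly $n-1$: its $x^{n-1}$-coefficient is $-\eta_{n-k}\sigma_0\neq 0$. The others, $b_s+x^s$ for $1\le s\le n-k-1$, have distinct degrees $s<n-1$. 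Hence the $n-k$ polynomials are independent.

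\emph{Power sums.} Set $S_m=\sum_{j=1}^n u_j\alpha_j^m$. By Lagrange interpolation, $S_m$ is the $x^{n-1}$-coefficient of the interpolating polynomial of $x^m$ at $\alpha_1,\dots,\alpha_n$. Therefore $S_m=0$ for $0\le m\le n-2$, and by Lemma~\ref{key}, $S_{n-1+t}=\Lambda_t$ for $t\ge 0$. The triangular system in Lemma~\ref{key} gives the convolution identity $\sum_{t=0}^{r}\sigma_{r-t}\Lambda_t=0$ for $r\ge1$, with $\Lambda_0=1$. With these tools, every inner product of a row of $G_2$ with a row of $H_2$ becomes a finite linear combination of the $\Lambda_t$.

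\emph{Orthogonality.} We check four cases in order.
\begin{enumerate}[label=(\roman*)]
\item Row $v\alpha^r$ of $G_2$ ($1\le r\le k-1$) against row $s\ge1$ of $H_2$. The inner product is $S_{r+s}+b_sS_r$. Since $r+s\le n-2$, both terms vanish.
\item Row $v\alpha^r$ against the first row of $H_2$. The inner product is $S_r-\eta_{n-k}\sum_i\sigma_{n-1-i}S_{r+i}$, where $S_r=0$. Only the terms with $r+i\ge n-1$ survive. Substituting $t=r+i-n+1$ gives $-\eta_{n-k}\sum_{t=0}^{r}\sigma_{r-t}\Lambda_t$, which is $0$ by the convolution identity since $r\ge1$.
\item The first row of $G_2$ against row $s\ge 1$ of $H_2$. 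Here $S_s=0$ and $b_sS_0=0$. The twist terms contribute $\sum_i\eta_i\left(S_{k-1+i+s}+b_sS_{k-1+i}\right)$, which reduces to
\[
\sum_{i=n-k-s}^{n-k}\eta_i\Lambda_{i+s-n+k}+\eta_{n-k}b_s .
\]
We must show this vanishes. We will do this by induction on $s$: rewrite the recursion defining $b_s$ as $\sum_{j=0}^{s}\sigma_{s-j}b_j=\sigma_s-\eta_{n-k-s}/\eta_{n-k}$ (with $b_0:=1$ formally absorbed), and invert the $\sigma$-convolution using $\Lambda$. The result should be exactly $b_s=-\sum_{m=0}^{s}\frac{\eta_{n-k-s+m}}{\eta_{n-k}}\Lambda_m$.
\item The first row of $G_2$ against the first row of $H_2$. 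This produces a double sum $\sum_i\eta_i\sum_l\sigma_{n-1-l}S_{k-1+i+l}$ together with a few lower-order terms. Collapsing the inner sum by the convolution identity should leave only boundary terms, which cancel against the contribution $S_0-\eta_{n-k}\sum_l\sigma_{n-1-l}S_l=-\eta_{n-k}\sigma_0\Lambda_0+\dots$.
\end{enumerate}

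The main obstacle is the bookkeeping in cases (iii) and (iv): matching the index ranges of the $\Lambda$ sums against the recursive definition of $b_s$, and verifying that the diagonal term in (iv) telescopes to zero. For (iii) we will first verify small $s$ directly ($b_1=\sigma_1-\eta_{n-k-1}/\eta_{n-k}$, using $\Lambda_1=-\sigma_1$) and then run the general induction via the generating-function identity $\left(\sum\sigma_ix^i\right)\left(\sum\Lambda_ix^i\right)=1$.
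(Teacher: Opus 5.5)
Your proposal is correct and is essentially the paper's proof: the same degree/interpolation argument gives $\operatorname{rank}(H_2)=n-k$, and the same four inner-product cases are reduced via $\sum_j u_j\alpha_j^m$ and Lemma~\ref{key}, with your closed form $\eta_{n-k}b_s=-\sum_{m=0}^{s}\eta_{n-k-s+m}\Lambda_m$ making explicit the induction on $s$ that the paper's case (2) uses only implicitly. There are two bookkeeping slips to fix when you write it out: the recursion is $\sum_{j=1}^{s}\sigma_{s-j}b_j=\sigma_s-\eta_{n-k-s}/\eta_{n-k}$, so extending the sum to $j=0$ needs $b_0:=0$ rather than $1$; and in (iv) each inner sum equals $\sum_{m=0}^{k+i-1}\sigma_{k+i-1-m}\Lambda_m=0$ with no boundary terms left over, so the $-\eta_{n-k}$ is cancelled instead by the twist term $\sum_i\eta_i S_{k-1+i}=\eta_{n-k}S_{n-1}=\eta_{n-k}$.
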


\begin{proof}
    To verify that $H_2$ is a parity check matrix of $\mathcal{C}_2$ ,we need to prove that rank($H_2$)=$n-k$ and $G_2 H_2^T = 0$.

Firstly, we show that rank($H_2$)=$n-k$. 

Suppose that $(f_0,f_1,\cdots,f_{n-k-1})$ is a solution of the system of equations
\[(x_0,x_1,\cdots,x_{n-k-1})H_2 = \mathbf{0}.\] 
Let
\[ f(x) = f_0 (1- \eta_{n-k} \sum_{i=0}^{n-1} \sigma_{n-1-i} x^i) + \sum_{i=1}^{n-k-1} f_i (b_i + x^i),\]
where $b_1 = \sigma_1 - \frac{\eta_{n-k-1}}{\eta_{n-k}}$ and $b_i = \sigma_i - \frac{\eta_{n-k-i}}{\eta_{n-k}} - \sum\limits_{j=1}^{i-1} \sigma_{i-j} b_j$ for $ 2 \leq i \leq n-k-1$. Then $f(\alpha_i)=0$ for $1 \leq i \leq n$. But deg($f(x)$)$\leq n-1 < n$, so we have $f(x)=0$. This implies that $f_0 = 0$. Furthermore, we have $f_1 = f_2 = \cdots = f_{n-k-1} = 0$. Therefore, the system of $(x_0,x_1,\cdots,x_{n-k-1})H_2 = \mathbf{0}$ has only the trivial solution. Hence, rank($H_2$)=$n-k$.

In the following, we prove that $G_2H_2^T = \mathbf{0}$.

Denote by $V_n(\alpha)$ an $n \times n$ matrix over $\mathbb{F}_q$ given by 
\[
V_n(\alpha)=\begin{bmatrix}
    1 & 1 & \cdots & 1 \\
    \alpha_1 & \alpha_2 & \cdots & \alpha_n \\
    \vdots & \vdots &   & \vdots \\
    \alpha_1^{n-1} & \alpha_2^{n-1} & \cdots & \alpha_n^{n-1}
\end{bmatrix}.
\]
Consider the system of equations over $\mathbb{F}_q$ given by
\[
V_n(\alpha)(x_1,x_2,\cdots,x_n)^T = (0,0,\cdots,1)^T.
\]
Because $V_n(\alpha)$ is invertible, by the explicit formula for the inverse of the Vandermonde matrix, the system has a unique solution $(u_1,u_2,\cdots,u_n)^T$, where $u_i = \prod\limits_{j=1, j \neq i}^n (\alpha_i - \alpha_j)^{-1}$ for $i=1,2,\cdots,n$. So we obtain that
\[
\left\{
\begin{aligned}
    \sum_{i=1}^n u_i \alpha_i^j &= 0 , \; \text{for } 0 \leq j \leq n-2 , \\
    \sum_{i=1}^n u_i \alpha_i^{n-1} &= 1.
\end{aligned}
\right.
\]

Let $G_2 = \begin{pmatrix}
    g_0 , g_1 , \cdots , g_{k-1}
\end{pmatrix}^T$ and $H_2 = \begin{pmatrix}
    h_0 , h_1 , \cdots ,h_{n-k-1}
\end{pmatrix}^T$, where $g_i$ is the $(i+1)$-th row of $G_2$ and $h_j$ is the $(j+1)$-th row of $H_2$ for all $0 \leq i \leq k-1$ and $0 \leq j \leq n-k-1$. From the above statements, we obtain the following cases. 

(1) It is straightforward to verify that $g_i h_j^T = 0$ for $1 \leq i \leq k-1$, $1 \leq j \leq n-k-1$.

(2) For $1 \leq j \leq n-k-1$, we have
\[ \hspace{2em}
\begin{aligned}
g_0 h_j^T &= \sum_{i=1}^n u_i (1+\sum_{s=1}^{n-k}\eta_s \alpha_i^{k+s-1})(b_j + \alpha_i^j) \\
    &= b_j \sum_{i=1}^n u_i + \sum_{i=1}^n u_i \alpha_i^j +  b_j \sum_{s=1}^{n-k} \eta_{s} \sum_{i=1}^n u_i \alpha_i^{k+s-1} + \sum_{s=1}^{n-k} \eta_s \sum_{i=1}^n u_i \alpha_i^{k+s-1+j} \\
    &= b_j \eta_{n-k} + \sum_{s=n-k-j}^{n-k} \eta_s \sum_{i=1}^n u_i \alpha_i^{k+s-1+j} \\
    &= b_j \eta_{n-k} + \sum_{s=0}^j \eta_{n-k-j+s} \sum_{i=1}^n u_i \alpha_i^{n-1+s}.
\end{aligned}
\]
By Lemma~{\ref{key}}, we have 
\[ \hspace{2em}
\begin{aligned}
    \sum_{i=1}^n u_i \alpha_i^{n-1+t} = \sum_{i=1}^n u_i \sum_{k=0}^{n-1} f_k \alpha_i^k = \sum_{k=0}^{n-1} f_k \sum_{i=1}^n u_i \alpha_i^k = f_{n-1} = \Lambda_t.
\end{aligned}
\]
Using Lemma~{\ref{key}}, it follows that 
\[
\begin{aligned}
g_0 h_j^T &= (\sigma_j - \frac{\eta_{n-k-j}}{\eta_{n-k}} - \sum_{t=1}^{j-1} \sigma_{j-t} b_t) \eta_{n-k} + \sum_{s=0}^j \eta_{n-k-j+s} \Lambda_s \\
&= \eta_{n-k} \sigma_j - \eta_{n-k-j} - \eta_{n-k} \sum_{t=1}^{j-1} \sigma_{j-t} b_t + \sum_{i=0}^j \eta_{n-k-j+i} \Lambda_i \\
&= \sum_{i=1}^j \eta_{n-k-j+i} \sum_{t=0}^i \sigma_t \Lambda_{i-t} \\
&= 0.
\end{aligned}
\]

(3) For $1 \leq i \leq k-1 $, it follows from Lemma~{\ref{key}} that 
\[ \hspace{2em}
\begin{aligned}
g_i h_0^T &= \sum_{j=1}^n u_j \alpha_j^i (1-\eta_{n-k}\sum_{t=0}^{n-1}\sigma_{n-1-t}\alpha_j^t) \\
&= \sum_{j=1}^n u_j \alpha_j^i - \eta_{n-k} \sum_{t=0}^{n-1}\sigma_{n-1-t} \sum_{j=1}^n u_j \alpha_j^{i+t} \\
&= - \eta_{n-k} \sum_{t=n-1-i}^{n-1}\sigma_{n-1-t} \sum_{j=1}^n u_j \alpha_j^{i+t} \\
&= - \eta_{n-k} \sum_{t=0}^i \sigma_{i-t} \sum_{j=1}^n u_j \alpha_j^{n-1+t} \\
&= - \eta_{n-k} \sum_{t=0}^{i} \sigma_{i-t} \Lambda_{t} \\
&= 0.
\end{aligned}
\]

(4) For $i=0$, $j=0$, we have 
\[ \hspace{2em}
\begin{aligned}
g_0 h_0^T =& \sum_{i=1}^n u_i (1+\sum_{s=1}^{n-k}\eta_s \alpha_i^{k+s-1})(1-\eta_{n-k}\sum_{t=0}^{n-1}\sigma_{n-1-t}\alpha_i^t) \\
=& \sum_{i=1}^n u_i - \eta_{n-k} \sum_{t=0}^{n-1}\sigma_{n-1-t} \sum_{i=1}^n u_i \alpha_i^t + \sum_{s=1}^{n-k}\eta_s \sum_{i=1}^n u_i \alpha_i^{k+s-1} \\
&- \eta_{n-k} \sum_{s=1}^{n-k} \eta_s \sum_{t=0}^{n-1} \sigma_{n-1-t} \sum_{i=1}^n u_i \alpha_i^{k+s-1+t} \\
=& - \eta_{n-k} + \eta_{n-k} - \eta_{n-k} \sum_{s=1}^{n-k} \eta_s \sum_{t=0}^{k+s-1} \sigma_{k+s-1-t} \sum_{i=1}^n u_i \alpha_i^{n-1+t} \\
=& -\eta_{n-k} \sum_{s=1}^{n-k} \eta_s \sum_{t=0}^{k+s-1} \sigma_{k+s-1-t} \Lambda_{t} \\
=& 0.
\end{aligned}
\]

Consequently, we conclude that $G_2 H_2^T = \mathbf{0}$, and so $H_2$ is the parity check matrix of $\mathcal{C}_2$.

\end{proof}

\section{The self-orthogonal TGRS Codes}\label{sec3}

\begin{theorem} \label{block}
    Let $\prod\limits_{i=1}^n (x - \alpha_i) = \sum\limits_{i=0}^n \sigma_i x^{n-i}$. Then

\medskip

\noindent \textbf{\textit{Case 1.}} \enspace Let $k \leq \frac{n-4}{2}$. Then $\mathcal{C}_1$ is self-orthogonal if and only if there exists an element $\lambda \in \mathbb{F}_q^*$ satisfying $ v_i^2 = \lambda u_i $ for all $1 \leq i \leq n$.

\medskip

\medskip

\noindent \textbf{\textit{Case 2.}} \enspace Let $n$ be odd and $k=\frac{n-3}{2}$. Then $\mathcal{C}_1$ is self-orthogonal if and only if the two following conditions hold:

\textup{(i)} There exists an element $\lambda \in \mathbb{F}_q^*$ satisfying $ v_i^2 = \lambda u_i $ for all $1 \leq i \leq n$.

\textup{(ii)} $\eta_{12} = \eta_{22} = 0 $.

\medskip

\noindent \textbf{\textit{Case 3.}} \enspace Let $n$ be even and $k=\frac{n-2}{2}$. Then $\mathcal{C}_1$ is self-orthogonal if and only if  the two following conditions hold:

\textup{(i)} There exists an element $\lambda \in \mathbb{F}_q^*$ satisfying $ v_i^2 = \lambda u_i $ for all $1 \leq i \leq n$.

\textup{(ii)} $\eta_{i2}(\eta_{i2} \sigma_1 - 2 \eta_{i1}) = 0 $ for $i = 1,2$, and $ \eta_{12} \eta_{22} \sigma_1 = \eta_{11}\eta_{22} + \eta_{12}\eta_{21} $.

\medskip

\noindent \textbf{\textit{Case 4.}} \enspace Let $n$ be odd and $k=\frac{n-1}{2}$. Then $\mathcal{C}_1$ is self-orthogonal if and only if the two following conditions hold:

\textup{(i)} There exists an element $\lambda \in \mathbb{F}_q^*$ satisfying $ v_i^2 = \lambda u_i $ for all $1 \leq i \leq n$.

\textup{(ii)} $G_1^{\prime} = A^{\prime} H_1^{\prime}$, where 
\[
G_1^{\prime} = \begin{bmatrix}
    \eta_{12} & 0 \\
    \eta_{22} & 0
    \end{bmatrix},
    A^{\prime} = \begin{bmatrix}
        \eta_{12}(\sigma_1^2 - \sigma_2) - \eta_{11} \sigma_1 & \eta_{11} - \eta_{12} \sigma_1 \\
        1 + \eta_{22}(\sigma_1^2 - \sigma_2) - \eta_{21} \sigma_1 & \eta_{21} - \eta_{22} \sigma_1
    \end{bmatrix},
    H_1^{\prime} = \begin{bmatrix}
        - \eta_{22} & - \eta_{12} \\
        - \eta_{21} & - \eta_{11}
    \end{bmatrix}.
\]

\medskip

\noindent \textbf{\textit{Case 5.}} \enspace Let $n$ be even and $k=\frac{n}{2}$. Then $\mathcal{C}_1$ is self-dual if and only if the two following conditions hold:

\textup{(i)} There exists an element $\lambda \in \mathbb{F}_q^*$ satisfying $ v_i^2 = \lambda u_i $ for all $1 \leq i \leq n$.

\textup{(ii)} $G_1^{\prime\prime} = A^{\prime\prime} H_1^{\prime}$, where 
\[
\begin{aligned} 
G_1^{\prime\prime} &= \begin{bmatrix}
    \eta_{11} - \eta_{12} \sigma_1 & \eta_{12} \\
    \eta_{21} - \eta_{22} \sigma_1 & \eta_{22}
\end{bmatrix}, 
H_1^{\prime} = \begin{bmatrix}
        - \eta_{22} & - \eta_{12} \\
        - \eta_{21} & - \eta_{11}
    \end{bmatrix}, \\
A^{\prime\prime} &= \begin{bmatrix}
    1 + \eta_{11}(\sigma_1^2-\sigma_2) + \eta_{12}(2\sigma_1\sigma_2-\sigma_1^3-\sigma_3) & \eta_{12}(\sigma_1^2-\sigma_2) - \eta_{11}\sigma_1 \\
    -\sigma_1 + \eta_{21}(\sigma_1^2-\sigma_2) + \eta_{22}(2\sigma_1\sigma_2-\sigma_1^3-\sigma_3) & 1 + \eta_{22}(\sigma_1^2-\sigma_2) - \eta_{21}\sigma_1
\end{bmatrix}.
\end{aligned}    
\]

\end{theorem}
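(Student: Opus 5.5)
The plan is to use the standard criterion: $\mathcal{C}_1$ is self-orthogonal if and only if $\mathcal{C}_1 \subseteq \mathcal{C}_1^\perp$. Equivalently, every row of $G_1$ must lie in the row space of the parity check matrix $H_1$ given by the lemma from \cite{sui2023new}. This means $G_1 = A H_1$ for some $k\times(n-k)$ matrix $A$ over $\mathbb{F}_q$. In Case 5, where $k=n-k$, the same identity with $A$ invertible gives self-duality.

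\textbf{Step 1: reduction to $\lambda$.} The first $k-2$ rows of $G_1$ are $(v_j\alpha_j^m)$ for $0\le m\le k-3$. The first $n-k-2$ rows of $H_1$ are $(\frac{u_j}{v_j}\alpha_j^m)$ for $0\le m\le n-k-3$. Since $k\le n/2$, we have $k-3\le n-k-3$. If $G_1=AH_1$, look at the row of $G_1$ with $m=0$. It is a combination $\sum_j c_j\frac{u_j}{v_j}(\text{polynomial in }\alpha_j)$ of degree $\le n-1$. Comparing with the interpolation identity $\sum_i u_i\alpha_i^j=\delta_{j,n-1}$ for $j\le n-1$ shows that $v_j^2/u_j$ is the value of a polynomial of degree $\le n-1$ at $\alpha_j$.

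I then plan to argue, as in Huang et al.\ and Sui et al., that this polynomial must be a constant $\lambda$. To do this, test against the rows with $m=0,\dots,k-3$ and against the twisted rows. The cleanest route is the converse direction: assume $v_i^2=\lambda u_i$, so that $\frac{u_j}{v_j}=\lambda^{-1}v_j$. Then $H_1$ becomes $\lambda^{-1}$ times an evaluation matrix with the same $\boldsymbol{v}$. This makes $\mathcal{C}_1^\perp$ itself a code of the form $\{ev_{\boldsymbol\alpha,\boldsymbol v}(g): g\in \mathcal{T}\}$ for an explicit polynomial space $\mathcal{T}$. Consequently $\mathcal{C}_1\subseteq\mathcal{C}_1^\perp$ iff $\mathcal{S}_1\subseteq \mathcal{T}$ modulo polynomials of degree $\ge n$ vanishing on all $\alpha_i$. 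Since all degrees involved are $\le n-1$, this is simply $\mathcal{S}_1\subseteq\mathcal{T}$. For necessity of $v_i^2=\lambda u_i$, I would use that the rows $v_j\alpha_j^m$ for small $m$ of $G_1$ must be orthogonal to themselves and to one another. This gives $\sum_j v_j^2\alpha_j^m=0$ for $0\le m\le 2k-4$, and together with the twisted rows up to degree $n-2$. This forces the vector $(v_j^2)$ to be proportional to $(u_j)$ via the Vandermonde system, since the solution space of $\sum_j x_j\alpha_j^m=0$ for $0\le m\le n-2$ is spanned by $(u_j)$. In Cases 1--2 the available degrees reach $n-2$ only after including the twisted rows. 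I expect to need the products of twisted rows, which contain degrees up to $2k+2$, and this is where the hypotheses on $k$ enter.

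\textbf{Step 2: the polynomial containment, case by case.} With $v_i^2=\lambda u_i$, self-orthogonality is equivalent to $\sum_j u_j f(\alpha_j)g(\alpha_j)=0$ for all $f,g\in\mathcal{S}_1$. Using $\sum_j u_j\alpha_j^{m}=\Lambda_{m-n+1}$ for $m\ge n-1$ (Lemma~\ref{key} gives $\Lambda_0=1$, $\Lambda_1=-\sigma_1$, $\Lambda_2=\sigma_1^2-\sigma_2$, $\Lambda_3=2\sigma_1\sigma_2-\sigma_1^3-\sigma_3$), and $0$ for $m\le n-2$, compute the Gram entries between the basis polynomials $x^i$ for $i\le k-3$, $x^{k-2}+\eta_{11}x^k+\eta_{12}x^{k+1}$, and $x^{k-1}+\eta_{21}x^k+\eta_{22}x^{k+1}$. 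The maximum degree of a product is $2k+2$. In Case 1 we have $2k+2\le n-2$, so all entries vanish automatically. In Case 2 only the $x^{k+1}\cdot x^{k+1}$ terms reach degree $n-1$, which gives $\eta_{12}^2=\eta_{12}\eta_{22}=\eta_{22}^2=0$. In Case 3 degrees $n-1,n$ appear, and we get $\eta_{i2}\eta_{j2}\Lambda_1+(\eta_{i1}\eta_{j2}+\eta_{i2}\eta_{j1})\Lambda_0=0$, which is exactly condition (ii). In Cases 4 and 5 more entries survive, including those with the plain rows $x^{k-3}$ and $x^{k-2}$ parts. Collecting them into the $2\times2$ (or $2\times 2$ block) conditions should reproduce the stated matrix identities $G_1'=A'H_1'$ and $G_1''=A''H_1'$. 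Here I would alternatively follow the block-matrix approach of \cite{sui2023new}: write $G_1=AH_1$, solve for the unknown last two columns of $A$ by comparing coefficients, and read off the residual $2\times 2$ equation.

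The main obstacle is Cases 4 and 5. There, matching the surviving Gram entries with the specific entries of $A'$ and $A''$, which involve $\Lambda_2$ and $\Lambda_3$, requires careful bookkeeping of the last two rows of $H_1$ and their $\sigma_t$-expansions. The necessity of the constant $\lambda$ in the boundary cases also needs care.
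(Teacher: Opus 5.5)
\textbf{Step~1 has a genuine gap, and it cannot be closed.} Your Step~2 is sound, but Step~1 --- the necessity of a \emph{constant} $\lambda$ --- fails because that claim is false.

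Self-orthogonality gives exactly the relations $\sum_j v_j^2h(\alpha_j)=0$ for $h$ in the span of the products $ff'$, $f,f'\in\mathcal{S}_1$. These $h$ have degree at most $2k+2$, so their evaluation vectors span a space $W$ with $\dim W\le 2k+3$. The relations therefore only confine $(v_j^2)$ to $W^{\perp}$, which has dimension $\ge n-2k-3\ge 2$ once $k<\frac{n-4}{2}$. The hypothesis on $k$ works against you, not for you.

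A concrete counterexample to Case~1: let $q=2^h$ (so every element is a square), $n<q$ and $k\le\frac{n-5}{2}$. Pick $\beta\in\mathbb{F}_q\setminus\{\alpha_1,\dots,\alpha_n\}$, set $g(x)=x-\beta$ and $v_j=(u_jg(\alpha_j))^{2^{h-1}}$, so $v_j^2=u_jg(\alpha_j)\neq0$. Since $\deg(gff')\le 2k+3\le n-2$, we get $\sum_jv_j^2f(\alpha_j)f'(\alpha_j)=\sum_ju_j(gff')(\alpha_j)=0$. So $\mathcal{C}_1$ is self-orthogonal, yet $v_j^2/u_j=\alpha_j-\beta$ is not constant.

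The same device breaks Case~2: take $\eta_{12}=\eta_{22}=0$, so $\deg(gff')\le 2k+1=n-2$. It also breaks Case~4. Take $n=2k+1$, $\eta_{11}=\eta_{12}=\eta_{22}=0$ and $\eta_{21}=\eta\neq0$. The only Gram entry that can survive is $\sum_ju_j(\alpha_j-\beta)(\alpha_j^{k-1}+\eta\alpha_j^{k})^2=2\eta-\eta^2(\sigma_1+\beta)$. This vanishes for $\beta=2\eta^{-1}-\sigma_1$, which equals $\sigma_1$ in characteristic $2$; assume it is $\notin\{\alpha_j\}$. Then $\mathcal{C}_1$ is self-orthogonal although (i) fails, and (ii) fails too, since here it reads $\eta^2=0$.

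\textbf{What the paper does and what is actually provable.} No argument can give the ``only if'' direction as stated in these cases, and the paper's proof does not supply one. It simply asserts that the span inclusion holds iff $\boldsymbol v=\lambda\frac{\boldsymbol u}{\boldsymbol v}$, and writes ``similarly to Case 1'' thereafter. What is provable is:
\begin{itemize}
\item sufficiency in all five cases;
\item under the standing assumption $v_j^2=\lambda u_j$, the equivalence of self-orthogonality (self-duality in Case~5) with (ii).
\end{itemize}

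Your Step~2 establishes exactly this, by a route different from the paper's. You evaluate the Gram entries $\sum_ju_jff'$ via $\Lambda_0,\dots,\Lambda_3$; the paper instead solves $G_1=AH_1$ for the last columns of $A$ (Systems (1)--(4)). Your route is shorter and removes the bookkeeping you were worried about in Cases 4--5. All plain-times-twisted entries have degree $\le n-2$ and vanish. The three twisted-times-twisted entries $(1,1)$, $(1,2)$, $(2,2)$ are precisely the scalar equations making up (ii) in Cases 2--5; in Cases 4--5 the matrix identities $G_1'=A'H_1'$ and $G_1''=A''H_1'$ contain the $(1,2)$ equation twice.

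A genuine characterization would have to replace (i) by $v_j^2=u_jg(\alpha_j)$ for a polynomial $g$ and carry $g$ through the same Gram computation. Self-orthogonality of the GRS subcode spanned by the first $k-2$ rows already forces $\deg g\le n-2k+4$.
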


\begin{proof}

\textbf{\textit{Case 1.}} Since $k \leq \frac{n-4}{2} $, we have $k+1 \leq n-k-3 $. Let $\boldsymbol{\alpha}^{i} = (\alpha_1^i, \alpha_2^i, \cdots, \alpha_n^i)$ and $\boldsymbol{v}\boldsymbol{\alpha}^i = (v_1 \alpha_1^i, v_2 \alpha_2^i, \cdots, v_n \alpha_n^i) $.

The code $\mathcal{C}_1$ is self-orthogonal if and only if each row of $G_1$ can be expressed as a linear combination of the rows of $H_1$. That is, 
\[
\boldsymbol{v}, \boldsymbol{v} \boldsymbol{\alpha}, \cdots, \boldsymbol{v} \boldsymbol{\alpha}^{k-3}, \boldsymbol{v}(\boldsymbol{\alpha}^{k-2} + \sum_{i=1}^2 \eta_{1i} \boldsymbol{\alpha}^{k-1+i}), \boldsymbol{v}(\boldsymbol{\alpha}^{k-1} + \sum_{i=1}^2 \eta_{2i} \boldsymbol{\alpha}^{k-1+i})
\]
can be expressed as linear combinations of
\[
\begin{aligned}
&\frac{\boldsymbol{u}}{\boldsymbol{v}}, \frac{\boldsymbol{u}}{\boldsymbol{v}} \boldsymbol{\alpha}, \cdots, \frac{\boldsymbol{u}}{\boldsymbol{v}} \boldsymbol{\alpha}^{n-k-3}, 
\frac{\boldsymbol{u}}{\boldsymbol{v}} (\boldsymbol{\alpha}^{n-k-2} - \sum_{i=1}^{2} \eta_{i2} \sum_{t=0}^{4-i} \sigma_t \boldsymbol{\alpha}^{n-k+2-i-t}), \\
&\frac{\boldsymbol{u}}{\boldsymbol{v}} (\sum_{i=0}^1 \sigma_i \boldsymbol{\alpha}^{n-k-1-i} - \sum_{i=1}^{2} \eta_{i1} \sum_{t=0}^{4-i} \sigma_t \boldsymbol{\alpha}^{n-k-2-i-t}).
\end{aligned}
\]
Equivalently, this holds if and only if 
\begin{gather*}
\text{Span} \Big\{ \boldsymbol{v}, \boldsymbol{v} \boldsymbol{\alpha}, \cdots, \boldsymbol{v} \boldsymbol{\alpha}^{k-3}, \boldsymbol{v}(\boldsymbol{\alpha}^{k-2} + \sum_{i=1}^2 \eta_{1i} \boldsymbol{\alpha}^{k-1+i}), \boldsymbol{v}(\boldsymbol{\alpha}^{k-1} + \sum_{i=1}^2 \eta_{2i} \boldsymbol{\alpha}^{k-1+i}) \Big\}
\\[4pt]
\subseteq 
\text{Span} \Big\{ \frac{\boldsymbol{u}}{\boldsymbol{v}}, \frac{\boldsymbol{u}}{\boldsymbol{v}} \boldsymbol{\alpha}, \cdots, \frac{\boldsymbol{u}}{\boldsymbol{v}} \boldsymbol{\alpha}^{n-k-3} \Big\}.
\end{gather*}
The condition holds if and only if there exists an element $\lambda \in \mathbb{F}_q^*$ such that $\boldsymbol{v} = \lambda \frac{\boldsymbol{u}}{\boldsymbol{v}} $. It completes the proof of Case 1.

\medskip

\textbf{\textit{Case 2.}} Since $k=\frac{n-3}{2}$, we have 
\begin{align*}
H_1 &= \begin{bmatrix}
    \cdots & \frac{u_j}{v_j} & \cdots \\
    \cdots & \frac{u_j}{v_j} \alpha_j & \cdots \\
    \vdots & \vdots & \vdots \\
    \cdots & \frac{u_j}{v_j} \alpha_j ^{k} & \cdots \\
    \cdots & \frac{u_j}{v_j} \alpha_j ^{k+1} (1- \sum\limits_{i=1}^{2} \eta_{i2} \sum\limits_{t=0}^{4-i} \sigma_t \alpha_j^{4-i-t}) & \cdots \\
    \cdots & \frac{u_j}{v_j} \alpha_j^{k+1} (\sum\limits_{i=0}^1 \sigma_i \alpha_j^{1-i} - \sum\limits_{i=1}^{2} \eta_{i1} \sum\limits_{t=0}^{4-i} \sigma_t \alpha_j^{4-i-t}) & \cdots
\end{bmatrix}_{(n-k) \times n} \\
 &= \begin{bmatrix}
    h_0 \\ h_1 \\ \vdots \\ h_{n-k-1}
\end{bmatrix}.
\end{align*}

Let $G_1 = \begin{pmatrix}
    g_0 , g_1 , \cdots , g_{k-1}
\end{pmatrix}^T$, where $g_i$ is the $(i+1)$-th row of $G_1$ for all $0 \leq i \leq k-1$. The code $\mathcal{C}_1$ is self-orthogonal if and only if each row of $G_1$ can be expressed as a linear combination of the rows of $H_1$. 
Equivalently, this holds if and only if the following two conditions are satisfied:

(a) $\boldsymbol{v}, \boldsymbol{v} \boldsymbol{\alpha}, \cdots, \boldsymbol{v} \boldsymbol{\alpha}^{k-3} $ are linear representations of $\frac{\boldsymbol{u}}{\boldsymbol{v}}, \frac{\boldsymbol{u}}{\boldsymbol{v}} \boldsymbol{\alpha}, \cdots, \frac{\boldsymbol{u}}{\boldsymbol{v}} \boldsymbol{\alpha}^{k-3} $.

(b) The remaining two row vectors of $G_1$ 
\[\boldsymbol{v}(\boldsymbol{\alpha}^{k-2} + \sum_{i=1}^2 \eta_{1i} \boldsymbol{\alpha}^{k-1+i}), \boldsymbol{v}(\boldsymbol{\alpha}^{k-1} + \sum_{i=1}^2 \eta_{2i} \boldsymbol{\alpha}^{k-1+i}) \]
are linear representations of the remaining five row vectors of $H_1$
\[\frac{\boldsymbol{u}}{\boldsymbol{v}} \boldsymbol{\alpha}^{k-2}, \frac{\boldsymbol{u}}{\boldsymbol{v}} \boldsymbol{\alpha}^{k-1} , \frac{\boldsymbol{u}}{\boldsymbol{v}} \boldsymbol{\alpha}^{k}, \frac{\boldsymbol{u}}{\boldsymbol{v}} (\boldsymbol{\alpha}^{k+1} - \sum_{i=1}^{2} \eta_{i2} \sum_{t=0}^{4-i} \sigma_t \boldsymbol{\alpha}^{k+5-i-t}), \]
\[ \frac{\boldsymbol{u}}{\boldsymbol{v}} (\sum_{i=0}^1 \sigma_i \boldsymbol{\alpha}^{k+3-i} - \sum_{i=1}^{2} \eta_{i1} \sum_{t=0}^{4-i} \sigma_t \boldsymbol{\alpha}^{k+5-i-t}). \]
Similarly to the proof of Case 1, the condition (a) holds if and only if $\boldsymbol{v} = \lambda \frac{\boldsymbol{u}}{\boldsymbol{v}} $ for some $\lambda \in \mathbb{F}_q^*$. This establishes the condition (i).

The condition (b) holds if and only if there exists $A_1 = (a_{ij})_{2 \times 5}$ such that
\[ g_{k-3+i} = \sum_{j=1}^{5} a_{ij}h_{n-k-6+j} \] 
for $i=1,2$.

This is equivalent to the following system of linear equations: 
\begin{equation}
\left\{
\begin{aligned}
    a_{i4} \eta_{12} + a_{i5} \eta_{11} &= 0, \\
    a_{i4} (\eta_{22} + \eta_{12} \sigma_1) + a_{i5} (\eta_{21} + \eta_{11} \sigma_1 ) &= 0, \\
    - a_{i4} (\eta_{22} \sigma_1 + \eta_{12} \sigma_2) + a_{i5} (1 - \eta_{21} \sigma_1 - \eta_{11} \sigma_2 ) &= 0, \\
    a_{i4} (1 - \eta_{22} \sigma_2 - \eta_{12} \sigma_3) + a_{i5} (\sigma_1 - \eta_{21} \sigma_2 - \eta_{11} \sigma_3 ) &= \eta_{i2},
\end{aligned}
\right.
\end{equation}
for $i=1,2$.

It follows from System (1) that 
\[
A_1 = \begin{bmatrix}
    1 & 0 & \eta_{11} & \eta_{12} & 0 \\
    0 & 1 & \eta_{21} & \eta_{22} & 0 
\end{bmatrix}
\]
and $\eta_{i2} = 0$ for $i=1,2$.

Therefore, the condition (ii) holds. It completes the proof of Case 2.

\medskip

\noindent \textbf{\textit{Case 3.}} Since $k=\frac{n-2}{2}$, we have 
\begin{align*}
H_1 &=\begin{bmatrix}
    \cdots & \frac{u_j}{v_j} & \cdots \\
    \cdots & \frac{u_j}{v_j} \alpha_j & \cdots \\
    \vdots & \vdots & \vdots \\
    \cdots & \frac{u_j}{v_j} \alpha_j ^{k-1} & \cdots \\
    \cdots & \frac{u_j}{v_j} \alpha_j ^{k} (1- \sum\limits_{i=1}^{2} \eta_{i2} \sum\limits_{t=0}^{4-i} \sigma_t \alpha_j^{4-i-t}) & \cdots \\
    \cdots & \frac{u_j}{v_j} \alpha_j^{k} (\sum\limits_{i=0}^1 \sigma_i \alpha_j^{1-i} - \sum\limits_{i=1}^{2} \eta_{i1} \sum\limits_{t=0}^{4-i} \sigma_t \alpha_j^{4-i-t}) & \cdots
\end{bmatrix}_{(n-k) \times n} \\
 &= \begin{bmatrix}
    h_0 \\ h_1 \\ \vdots \\ h_{n-k-1}
\end{bmatrix}.
\end{align*}

Similarly to the proof of Case 1, the condition (i) holds. 

Let $G_1 = \begin{pmatrix}
    g_0 , g_1 , \cdots , g_{k-1}
\end{pmatrix}^T$, where $g_i$ is the $(i+1)$-th row of $G_1$ for all $0 \leq i \leq k-1$. And the remaining two row vectors of $G_1$ are linear representations of the remaining four row vectors of $H_1$ if and only if there exists $A_2 = (a_{ij})_{2 \times 4}$ such that 
\[ g_{k-3+i} = \sum_{j=1}^{4} a_{ij}h_{n-k-5+j} \]
for $i=1,2$.

This is equivalent to the following system of linear equations: 
\begin{equation}
\left\{
\begin{aligned}
    a_{i3} \eta_{12} + a_{i4} \eta_{11} &= 0, \\
    a_{i3} (\eta_{22} + \eta_{12} \sigma_1) + a_{i4} (\eta_{21} + \eta_{11} \sigma_1 ) &= 0, \\
    - a_{i3} (\eta_{22} \sigma_1 + \eta_{12} \sigma_2) + a_{i4} (1 - \eta_{21} \sigma_1 - \eta_{11} \sigma_2 ) &= \eta_{i2}, \\
    a_{i3} (1 - \eta_{22} \sigma_2 - \eta_{12} \sigma_3) + a_{i4} (\sigma_1 - \eta_{21} \sigma_2 - \eta_{11} \sigma_3 ) &= \eta_{i1},
\end{aligned}
\right.
\end{equation}
for $i=1,2$.

It follows from System (2) that 
\[
A_2 = \begin{bmatrix}
    1 & 0 & \eta_{11}-\eta_{12}\sigma_1 & \eta_{12} \\
    0 & 1 & \eta_{21}-\eta_{22}\sigma_1 & \eta_{22} 
\end{bmatrix}
\]
and 
\[
\left\{
\begin{aligned}
    \eta_{i2}(\eta_{i2} \sigma_1 - 2 \eta_{i1}) &= 0, \; \text{for } i=1,2, \\
    \eta_{11}\eta_{22} + \eta_{12}\eta_{21} &= \eta_{12} \eta_{22} \sigma_1.
\end{aligned}
\right.
\]

Hence, the condition (ii) holds. It completes the proof of Case 3.

\medskip

\noindent \textbf{\textit{Case 4.}} Since $k=\frac{n-1}{2}$, we have 
\begin{align*}
H_1 &=\begin{bmatrix}
    \cdots & \frac{u_j}{v_j} & \cdots \\
    \cdots & \frac{u_j}{v_j} \alpha_j & \cdots \\
    \vdots & \vdots & \vdots \\
    \cdots & \frac{u_j}{v_j} \alpha_j ^{k-2} & \cdots \\
    \cdots & \frac{u_j}{v_j} \alpha_j ^{k-1} (1- \sum\limits_{i=1}^{2} \eta_{i2} \sum\limits_{t=0}^{4-i} \sigma_t \alpha_j^{4-i-t}) & \cdots \\
    \cdots & \frac{u_j}{v_j} \alpha_j^{k-1} (\sum\limits_{i=0}^1 \sigma_i \alpha_j^{1-i} - \sum\limits_{i=1}^{2} \eta_{i1} \sum\limits_{t=0}^{4-i} \sigma_t \alpha_j^{4-i-t}) & \cdots
\end{bmatrix}_{(n-k) \times n} \\
 &= \begin{bmatrix}
    h_0 \\ h_1 \\ \vdots \\ h_{n-k-1}
\end{bmatrix}.
\end{align*}

Similarly to the proof of Case 1, the condition (i) holds. 

Let $G_1 = \begin{pmatrix}
    g_0 , g_1 , \cdots , g_{k-1}
\end{pmatrix}^T$, where $g_i$ is the $(i+1)$-th row of $G_1$ for all $0 \leq i \leq k-1$. And the remaining two row vectors of $G_1$ are linear representations of the remaining three row vectors of $H_1$ if and only if there exists $A_3 = (a_{ij})_{2 \times 3}$ such that
\[ g_{k-3+i} = \sum_{j=1}^{3} a_{ij}h_{n-k-4+j} \]
for $i=1,2$.

This is equivalent to the following system of linear equations: 
\begin{equation}
\left\{
\begin{aligned}
    a_{i2} \eta_{12} + a_{i3} \eta_{11} &= 0 ,  \\
    - a_{i2} (\eta_{22} + \eta_{12} \sigma_1) - a_{i3} (\eta_{21} + \eta_{11} \sigma_1 ) &= \eta_{i2} , \\
    - a_{i2} (\eta_{22} \sigma_1 + \eta_{12} \sigma_2) + a_{i3} (1 - \eta_{21} \sigma_1 - \eta_{11} \sigma_2 ) &= \eta_{i1} , \\
    a_{i2} (1 - \eta_{22} \sigma_2 - \eta_{12} \sigma_3) + a_{i3} (\sigma_1 - \eta_{21} \sigma_2 - \eta_{11} \sigma_3 ) &= i-1 ,
\end{aligned}
\right.
\end{equation}
for $i = 1,2$.

It follows from System (3) that 
\[
A_3 = \begin{bmatrix}
        1 & \eta_{12}(\sigma_1^2 - \sigma_2) - \eta_{11} \sigma_1 & \eta_{11} - \eta_{12} \sigma_1 \\
        0 & 1 + \eta_{22}(\sigma_1^2 - \sigma_2) - \eta_{21} \sigma_1 & \eta_{21} - \eta_{22} \sigma_1
    \end{bmatrix}
\] 
and the following conditions must be satisfied:
\[
\left\{
\begin{aligned}
     (i-1 + \eta_{i2}(\sigma_1^2 - \sigma_2) - \eta_{i1} \sigma_1)\eta_{12} + ( \eta_{i1} - \eta_{i2} \sigma_1 ) \eta_{11} &= 0 , \\
    - (\eta_{i2}(\sigma_1^2 - \sigma_2) - \eta_{i1} \sigma_1) \eta_{22} - ( \eta_{i1} - \eta_{i2} \sigma_1 ) \eta_{21} &= \eta_{i2},
\end{aligned}
\right.
\]
for $i=1,2$.

That is, the condition (ii) holds. It completes the proof of Case 4.

\medskip

\noindent \textbf{\textit{Case 5.}} Since $k=\frac{n}{2}$, we have 
\begin{align*}
H_1 &=\begin{bmatrix}
    \cdots & \frac{u_j}{v_j} & \cdots \\
    \cdots & \frac{u_j}{v_j} \alpha_j & \cdots \\
    \vdots & \vdots & \vdots \\
    \cdots & \frac{u_j}{v_j} \alpha_j ^{k-3} & \cdots \\
    \cdots & \frac{u_j}{v_j} \alpha_j ^{k-2} (1- \sum\limits_{i=1}^{2} \eta_{i2} \sum\limits_{t=0}^{4-i} \sigma_t \alpha_j^{4-i-t}) & \cdots \\
    \cdots & \frac{u_j}{v_j} \alpha_j^{k-2} (\sum\limits_{i=0}^1 \sigma_i \alpha_j^{1-i} - \sum\limits_{i=1}^{2} \eta_{i1} \sum\limits_{t=0}^{4-i} \sigma_t \alpha_j^{4-i-t}) & \cdots
\end{bmatrix}_{(n-k) \times n} \\
 &= \begin{bmatrix}
    h_0 \\ h_1 \\ \vdots \\ h_{n-k-1}
\end{bmatrix}.
\end{align*}

Similarly to the proof of Case 1, the condition (i) holds. 

Let $G_1 = \begin{pmatrix}
    g_0 , g_1 , \cdots , g_{k-1}
\end{pmatrix}^T$, where $g_i$ is the $(i+1)$-th row of $G_1$ for all $0 \leq i \leq k-1$. And the remaining two row vectors of $G_1$ are linear representations of the remaining three row vectors of $H_1$ if and only if there exists $A_4 = (a_{ij})_{2 \times 2}$ such that 
\[ g_{k-3+i} = \sum_{j=1}^{2} a_{ij}h_{n-k-3+j} \]
for $i=1,2$.

This is equivalent to the following system of linear equations: 
\begin{equation}
\left\{
\begin{aligned}
    - a_{i1} \eta_{12} - a_{i2} \eta_{11} &= \eta_{i2} ,  \\
    - a_{i1} (\eta_{22} + \eta_{12} \sigma_1) - a_{i2} (\eta_{21} + \eta_{11} \sigma_1 ) &= \eta_{i1} , \\
    - a_{i1} (\eta_{22} \sigma_1 + \eta_{12} \sigma_2) + a_{i2} (1 - \eta_{21} \sigma_1 - \eta_{11} \sigma_2 ) &= i-1 , \\
    a_{i1} (1 - \eta_{22} \sigma_2 - \eta_{12} \sigma_3) + a_{i2} (\sigma_1 - \eta_{21} \sigma_2 - \eta_{11} \sigma_3 ) &= 2-i ,
\end{aligned}
\right.
\end{equation}
for $i = 1,2$.

It follows from System (4) that 
\[
A_4 = \begin{bmatrix}
        1 + \eta_{11}(\sigma_1^2-\sigma_2) + \eta_{12}(2\sigma_1\sigma_2-\sigma_1^3-\sigma_3) & \eta_{12}(\sigma_1^2-\sigma_2) - \eta_{11}\sigma_1 \\
    -\sigma_1 + \eta_{21}(\sigma_1^2-\sigma_2) + \eta_{22}(2\sigma_1\sigma_2-\sigma_1^3-\sigma_3) & 1 + \eta_{22}(\sigma_1^2-\sigma_2) - \eta_{21}\sigma_1
    \end{bmatrix}
\] 
and the following conditions must be satisfied:
\[
\left\{
\begin{aligned}
    - a_{i1} \eta_{12} - a_{i2} \eta_{11} &= \eta_{i2}, \\
    - a_{i1} \eta_{22} - a_{i2} \eta_{21} &=  \eta_{i1} - \eta_{i2} \sigma_1,
\end{aligned}
\right.
\]
for $i=1,2$.

That is, the condition (ii) holds. It completes the proof of Case 5.
\end{proof}

\begin{remark}
    It should be noted that the result in Case 5 of Theorem~{\ref{block}} is equivalent to the conclusion of Theorem 5.1 in \cite{sui2023new}. Moreover, if $\eta_{11} = \eta_{12} = \eta_{21} = 0$ in Cases 3 and 5 of Theorem~{\ref{block}}, we can obtain Theorem 4.4 (2), 4.3 of \cite{zhang2025almost} as corollaries. If $\eta_{11} = \eta_{12} = \eta_{22} = 0 $ in Cases 4 and 5 of Theorem~{\ref{block}}, we can obtain Theorem 4.1 of \cite{liang2025multi} and Theorem 2.8 of \cite{huang2021mds} as corollaries. If $\eta_{11} = \eta_{12} = 0$ in Cases 2-5 of Theorem~{\ref{block}}, we can obtain Theorem 4.8 (3), 4.5, 4.3, 4.4, 4.8 (1) of \cite{liang2025multi} as corollaries. Hence, we generalize them. More special cases are presented in Table~{\ref{table}}.
\end{remark}

Denote 
\[
\Gamma = \begin{bmatrix}
    \eta_{11} & \eta_{12} \\
    \eta_{21} & \eta_{22}
\end{bmatrix}.
\]

\begin{corollary}\label{block_special}
Assume
    \[
    \Gamma = \begin{bmatrix}
        \eta_1 & 0  \\
        0 & \eta_2 
    \end{bmatrix} \qquad \text{or} \qquad
    \Gamma = \begin{bmatrix}
        0 & \eta_2 \\
        \eta_1 & 0 
    \end{bmatrix},
    \]
    where $\eta_i \in \mathbb{F}_q^{*} $ for $ i = 1,2$. Let $ k = \frac{n-2}{2}$, then $\mathcal{C}_1$ is not self-orthogonal.
\end{corollary}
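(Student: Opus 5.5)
This is a direct application of Case 3 of Theorem~\ref{block}. With $k=\frac{n-2}{2}$, the code $\mathcal{C}_1$ is self-orthogonal if and only if conditions (i) and (ii) of Case 3 hold. It therefore suffices to show that condition (ii) fails for each of the two choices of $\Gamma$. Condition (i) plays no role, so we never need to ask whether some $\lambda$ with $v_i^2=\lambda u_i$ exists. Recall that condition (ii) consists of the two equations $\eta_{i2}(\eta_{i2}\sigma_1-2\eta_{i1})=0$ for $i=1,2$, together with $\eta_{12}\eta_{22}\sigma_1=\eta_{11}\eta_{22}+\eta_{12}\eta_{21}$.

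\emph{First choice.} Take $\Gamma=\begin{bmatrix}\eta_1&0\\0&\eta_2\end{bmatrix}$, so $\eta_{11}=\eta_1$, $\eta_{12}=\eta_{21}=0$ and $\eta_{22}=\eta_2$. The equation for $i=1$ holds trivially because $\eta_{12}=0$. Substituting into the last equation, the left side is $\eta_{12}\eta_{22}\sigma_1=0$ and the right side is $\eta_{11}\eta_{22}+\eta_{12}\eta_{21}=\eta_1\eta_2$. This would require $\eta_1\eta_2=0$, which is impossible since $\eta_1,\eta_2\in\mathbb{F}_q^*$.

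\emph{Second choice.} Take $\Gamma=\begin{bmatrix}0&\eta_2\\\eta_1&0\end{bmatrix}$, so $\eta_{11}=\eta_{22}=0$, $\eta_{12}=\eta_2$ and $\eta_{21}=\eta_1$. The last equation now reads $0=0\cdot 0+\eta_2\eta_1$, which again forces $\eta_1\eta_2=0$, a contradiction. Independently, the $i=1$ equation gives $\eta_2(\eta_2\sigma_1-0)=\eta_2^2\sigma_1=0$, which would force $\sigma_1=0$; but the cross equation alone already suffices.

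In both cases condition (ii) fails, so by Case 3 of Theorem~\ref{block} the code $\mathcal{C}_1$ is not self-orthogonal. There is no real obstacle here. The only point needing care is to substitute the entries of $\Gamma$ into the correct indices $\eta_{ij}$ of the three equations, and to note that the cross equation $\eta_{12}\eta_{22}\sigma_1=\eta_{11}\eta_{22}+\eta_{12}\eta_{21}$ reduces to $0=\eta_1\eta_2$ in both configurations, independently of $\sigma_1$.
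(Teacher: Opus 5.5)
Your substitution is correct and is exactly the paper's argument: in both configurations the cross equation of Case~3(ii) of Theorem~\ref{block} reduces to $\eta_1\eta_2=0$. However, the argument has a genuine gap, which the paper's proof shares, and the corollary is in fact false as stated.

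\textbf{Where the argument breaks.} The gap is the step you dismiss by saying condition (i) ``plays no role''. You need the implication ``$\mathcal{C}_1$ self-orthogonal $\Rightarrow$ (ii)''. Case~3 only derives (ii) after first asserting (i) (``Similarly to the proof of Case 1, the condition (i) holds''). System~(2) is then obtained by comparing polynomial coefficients, which is legitimate only when $v_i^2/u_i$ is constant. That assertion goes back to the claim in Case~1 that the span inclusion forces $\boldsymbol{v}=\lambda\frac{\boldsymbol{u}}{\boldsymbol{v}}$, and it is not valid. Write $v_i^2=u_ih(\alpha_i)$ with $\deg h<n$. Then self-orthogonality says only that $\sum_i u_ih(\alpha_i)f(\alpha_i)g(\alpha_i)=0$ for all $f,g\in\mathcal{S}_1$. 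The untwisted rows force $\deg h$ to be small, but they do not force $h$ to be constant. So neither your proof nor the paper's establishes the corollary.

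\textbf{Counterexample.} Take $q=64$ and let $\boldsymbol{\alpha}$ list the elements of $\mathbb{F}_8\subset\mathbb{F}_{64}$. Then $n=8$, $k=3=\frac{n-2}{2}$, and $u_i=1$. Let $\omega\in\mathbb{F}_4\setminus\mathbb{F}_2$, $\Gamma=\operatorname{diag}(\omega,1)$, and $v_i=\alpha_i+\omega^2\neq 0$, so $v_i^2=\alpha_i^2+\omega$. Then
\[
\mathcal{S}_1=\operatorname{Span}\{1,\ x+\omega x^3,\ x^2+x^4\}.
\]
Recall that $\sum_{a\in\mathbb{F}_8}a^j$ equals $1$ when $j>0$ and $7\mid j$, and equals $0$ otherwise. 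Hence $\sum_{a\in\mathbb{F}_8}(a^2+\omega)f(a)g(a)$ vanishes for every pair of basis polynomials except possibly $f=x+\omega x^3$, $g=x^2+x^4$. For that pair, $fg=x^3+\omega^2x^5+\omega x^7$, and the sum is $\omega^2+\omega\cdot\omega=0$. Therefore $\mathcal{C}_1$ is a self-orthogonal $[8,3]$ code with $\eta_1\eta_2\neq 0$.

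\textbf{What your computation does prove.} It proves the corollary under the additional hypothesis $v_i^2=\lambda u_i$. In that case the inner product of the two twisted rows of $G_1$ is $\lambda$ times the coefficient of $x^{n-1}=x^{2k+1}$ in their product, since that product has degree exactly $n-1$. This coefficient is $\eta_1\eta_2$ in both configurations, so the inner product is $\lambda\eta_1\eta_2\neq 0$.
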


\begin{proof}
    Assume, to the contrary, $\mathcal{C}_1$ is self-orthogonal. By Case 3 of Theorem~{\ref{block}}, we have $\eta_1 \eta_2 = 0$, which is a contradiction to $\eta_i \in \mathbb{F}_q^{*} $ for $ i = 1,2$. Hence, $\mathcal{C}_1$ is not self-orthogonal.
\end{proof}

\begin{corollary}\label{block_single}
    Assume
    \[
    \Gamma = \begin{bmatrix}
        0 & 0  \\
        0 & \eta
    \end{bmatrix}, \]
    where $\eta \in \mathbb{F}_q^{*} $. Let $ k = \frac{n-3}{2}$, then $\mathcal{C}_1$ is not self-orthogonal.
\end{corollary}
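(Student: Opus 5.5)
Following the argument of Corollary~\ref{block_special}, I would argue by contradiction and reduce everything to Case 2 of Theorem~\ref{block}. Suppose $\mathcal{C}_1$ is self-orthogonal with
\[
\Gamma=\begin{bmatrix} 0 & 0\\ 0 & \eta\end{bmatrix}, \qquad \eta\in\mathbb{F}_q^*.
\]
Then $\eta_{11}=\eta_{12}=\eta_{21}=0$ and $\eta_{22}=\eta$.

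The hypothesis $k=\frac{n-3}{2}$ requires $n$ to be odd, since $k$ is an integer. So the parity and size assumptions of Case 2 hold, and Case 2 applies verbatim. By condition (ii) of that case, self-orthogonality forces $\eta_{12}=\eta_{22}=0$. In particular $\eta=\eta_{22}=0$, which contradicts $\eta\in\mathbb{F}_q^*$. Hence $\mathcal{C}_1$ is not self-orthogonal.

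The only step that needs care is to confirm that we are genuinely in Case 2 and not in some degenerate setting. This means checking two things. First, $k=\frac{n-3}{2}$ forces $n$ odd. Second, the standing assumptions under which $G_1$ and $H_1$ were defined (namely $k\ge 2$, so that the two twisted rows exist) are in force. No further computation is needed: the necessity direction of Case 2 directly excludes any nonzero $\eta_{22}$, independently of $\boldsymbol{v}$ and $\lambda$.
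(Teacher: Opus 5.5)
\paragraph{Verdict.} Your reduction is the deduction the paper intends: the corollary is stated without proof, as the Case~2 analogue of Corollary~\ref{block_special}. However, the step you rely on does not hold, and the corollary as stated is false. That step is the claim that the necessity direction of Case~2 of Theorem~\ref{block} rules out $\eta_{22}\neq 0$ ``independently of $\boldsymbol{v}$ and $\lambda$''.

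\paragraph{Counterexample.} Take $q=19$, $n=9$ and $k=3=\frac{n-3}{2}$. Let $\alpha_1,\dots,\alpha_9$ be the nine elements $a\in\mathbb{F}_{19}$ with $a^9=1$, and put $v_i=\alpha_i$. With $\Gamma$ as in the statement, $\mathcal{C}_1$ is spanned by the evaluations of $x$, $x^2$ and $x^3+\eta x^5$ at the $\alpha_i$.
\begin{itemize}
\item The pairwise products of these three polynomials involve only the exponents $2,3,4,5,6,7,8,10$, none of which is divisible by $9$.
\item We have $\sum_{i=1}^9\alpha_i^m=0$ whenever $9\nmid m$.
\end{itemize}
Hence every inner product vanishes, and $\mathcal{C}_1$ is self-orthogonal for every $\eta\in\mathbb{F}_{19}^*$. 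The same construction works for any odd $n\mid q-1$, taking $\boldsymbol{\alpha}$ to be the $n$-th roots of unity and $v_i=\alpha_i$.

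In this example $u_i=(9\alpha_i^8)^{-1}=\alpha_i/9$, so $v_i^2/u_i=9\alpha_i$ is not constant. Thus condition (i) fails as well, not just (ii).

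\paragraph{Where the paper's proof breaks.} The flaw is in the proof of Theorem~\ref{block}.
\begin{itemize}
\item Knowing that $\boldsymbol{v}\boldsymbol{\alpha}^i$ lies in the span of the vectors $\frac{\boldsymbol{u}}{\boldsymbol{v}}\boldsymbol{\alpha}^j$ only gives $v_l^2=u_l\,\phi(\alpha_l)$ for some low-degree polynomial $\phi$, not $v_l^2=\lambda u_l$. Already for untwisted GRS codes, self-orthogonality means $(v_l^2)_l\in GRS_{n-2k+1}(\boldsymbol{\alpha},\boldsymbol{u})$, not $v_l^2=\lambda u_l$.
\item The row-by-row splitting into (a) and (b) is also unjustified.
\end{itemize}
So Case~2 gives only a sufficient condition, not a characterization.

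\paragraph{What can be salvaged.} The statement is true under the extra hypothesis $v_i^2=\lambda u_i$ for all $i$. This is the setting of the result criticised in the Remark that follows the corollary. In that case no appeal to Case~2 is needed. Since $2k+2=n-1$, the last row $g_{k-1}=\boldsymbol{v}(\boldsymbol{\alpha}^{k-1}+\eta\boldsymbol{\alpha}^{k+1})$ satisfies
\[
g_{k-1}\cdot g_{k-1}=\lambda\sum_{i=1}^n u_i\bigl(\alpha_i^{2k-2}+2\eta\alpha_i^{2k}+\eta^2\alpha_i^{2k+2}\bigr)=\lambda\eta^2\neq 0,
\]
using $\sum_i u_i\alpha_i^j=0$ for $0\le j\le n-2$ and $\sum_i u_i\alpha_i^{n-1}=1$. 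Without some such hypothesis on $\boldsymbol{v}$, the claim cannot be proved, as the example above shows.
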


\begin{remark}
    Note that $k < \frac{n}{2} - 1 $ given in Theorem 4.4 (1) of \cite{zhang2025almost} is incorrect. As shown in Corollary~{\ref{block_single}}, $\mathcal{C}_1$ is not self-orthogonal when $k = \frac{n-3}{2}$. Therefore, the correct condition should be $k \leq \frac{n-4}{2}$.
\end{remark}

\begin{theorem}\label{line}
    Let $\prod\limits_{i=1}^n (x - \alpha_i) = \sum\limits_{i=0}^n \sigma_i x^{n-i}$. Then

\medskip

\noindent \textbf{\textit{Case 1.}} \enspace Let $k \leq \frac{n-1}{2} $. Then $\mathcal{C}_2$ is self-orthogonal if and only if the following three conditions are satisfied:

\textup{(i)} There exists an element $\lambda \in \mathbb{F}_q^*$ satisfying $ v_i^2 = \lambda u_i $ for all $1 \leq i \leq n$.

\textup{(ii)} $\eta_{n-k-i} = \eta_{n-k} \sigma_i $ for all $1 \leq i \leq k-1$.

\textup{(iii)} $ \eta_{n-k} \sigma_{n-1} + \sum\limits_{i=1}^{n-2k} (\eta_i - \eta_{n-k} \sigma_{n-k-i}) b_{k+i-1} = 2 $.

\medskip

\noindent \textbf{\textit{Case 2.}} \enspace Let $n = 2k$. Then $\mathcal{C}_2$ is self-dual if and only if the following three conditions are satisfied:

\textup{(i)} There exists an element $\lambda \in \mathbb{F}_q^*$ satisfying $ v_i^2 = \lambda u_i $ for all $1 \leq i \leq n$.

\textup{(ii)} $\eta_{k-i} = \eta_{k} \sigma_i $ for all $1 \leq i \leq k-1$.

\textup{(iii)} $ \eta_{k} \sigma_{n-1} = 2 $.

\end{theorem}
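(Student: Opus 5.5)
We follow the method used for Theorem~\ref{block}: $\mathcal{C}_2$ is self-orthogonal if and only if every row of $G_2$ lies in the row space of the parity check matrix $H_2$ from the preceding theorem. Write $g_0,\dots,g_{k-1}$ for the rows of $G_2$ and $h_0,\dots,h_{n-k-1}$ for the rows of $H_2$.

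\textbf{Step 1 (condition (i)).} Since $k\le \frac{n-1}{2}$, we have $k-1\le n-k-1$. The rows $g_1,\dots,g_{k-1}$ are $\boldsymbol{v}\boldsymbol{\alpha}^i$. The rows $h_j$ for $1\le j\le n-k-1$ are $\frac{\boldsymbol{u}}{\boldsymbol{v}}(b_j+\boldsymbol{\alpha}^j)$. Together with $h_0$, which has degree $n-1$ with leading coefficient $-\eta_{n-k}\neq 0$, they span functions of the form $\frac{\boldsymbol{u}}{\boldsymbol{v}}p(\boldsymbol{\alpha})$ with $p$ ranging over an $(n-k)$-dimensional polynomial space. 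Write $\boldsymbol{v}=\frac{\boldsymbol{u}}{\boldsymbol{v}}\cdot w(\boldsymbol{\alpha})$, where $w$ interpolates $v_i^2/u_i$ and has degree $\le n-1$. Requiring $\boldsymbol{v}\boldsymbol{\alpha}^i$ to lie in this span for $0\le i\le k-1$, and comparing degrees modulo $\prod(x-\alpha_i)$, forces $w$ to be a nonzero constant $\lambda$, as in Case~1 of Theorem~\ref{block}. The main point is that $h_0$ is the only row of high degree, so at most one row can absorb it. Conversely, once $v_i^2=\lambda u_i$ holds, the $\boldsymbol{v}$ factors cancel. Self-orthogonality then reduces to the identities $\sum_j u_j\,p(\alpha_j)q(\alpha_j)=0$ for $p,q\in\mathcal{S}_2$, which we evaluate using $\sum_j u_j\alpha_j^m=0$ for $m\le n-2$ and $\sum_j u_j\alpha_j^{n-1+t}=\Lambda_t$ from Lemma~\ref{key}.

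\textbf{Step 2 (conditions (ii) and (iii)).} Under (i), the inner products $g_i g_{i'}^T$ for $1\le i,i'\le k-1$ involve only $\alpha^{m}$ with $m\le 2k-2\le n-3$, so they vanish automatically. Next compute $g_0g_i^T=\sum_j u_j\alpha_j^i(1+\sum_s\eta_s\alpha_j^{k-1+s})$ for $1\le i\le k-1$. This equals $\sum_{t\ge0}\eta_{n-k-i+t}\Lambda_t$, where $t$ runs over exponents $\ge n-1$. Setting these to zero for $i=1,\dots,k-1$ gives a triangular system. Using $\sum_{t=0}^m\sigma_t\Lambda_{m-t}=0$, we solve it inductively to get $\eta_{n-k-i}=\eta_{n-k}\sigma_i$, which is (ii). Finally, $g_0g_0^T=\sum_j u_j(1+\sum_s\eta_s\alpha_j^{k-1+s})^2$ must vanish. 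Expanding, the cross term gives $2\eta_{n-k}$, and the square term gives $\sum_{s,s'}\eta_s\eta_{s'}\Lambda_{2k-1+s+s'-n}$. We then rewrite this with (ii) and the recursion defining the $b_j$ (equivalently $\sum_{j}\sigma_{i-j}b_j=\sigma_i-\eta_{n-k-i}/\eta_{n-k}$) to reach the form stated in (iii). To get there, we also express the quantity through the membership coefficient of $g_0$ in terms of $h_0$ and the $h_j$, with coefficient $-1/\eta_{n-k}$ on $h_0$ to match the top degree. That forces the residual linear relation $\eta_{n-k}\sigma_{n-1}+\sum_{i=1}^{n-2k}(\eta_i-\eta_{n-k}\sigma_{n-k-i})b_{k+i-1}=2$.

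\textbf{Step 3 (Case 2).} For $n=2k$ the same computation applies with $n-2k=0$, so the sum in (iii) is empty, leaving $\eta_k\sigma_{n-1}=2$. Self-orthogonality at dimension $n/2$ is self-duality.

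\textbf{Main obstacle.} The main obstacle is the bookkeeping in Step~2 that turns $g_0g_0^T=0$ into exactly the expression in (iii). We first plan to write $g_0=-\eta_{n-k}^{-1}h_0+\sum_{j\ge1}c_jh_j$ (after scaling by $\lambda$) and match coefficients of $1,x,\dots,x^{n-1}$. This naturally produces the $b_{k+i-1}$ terms and the constant $2$: one copy of the $1$ comes from $g_0$, and one from $h_0$'s constant term after renormalisation. We will then check this against the $\Lambda$-expansion.
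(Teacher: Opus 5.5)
Your Step 1 is where the argument breaks, and the break cannot be repaired, because condition (i) is \emph{not} necessary.

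\textbf{Why Step 1 fails.} Membership of $g_i=\boldsymbol{v}\boldsymbol{\alpha}^i$ in the row space of $H_2$ is the congruence $w(x)x^i\equiv p_i(x)\pmod{\prod_j(x-\alpha_j)}$, with $p_i$ in the span of the $H_2$-polynomials. Since $\deg(wx^i)$ can exceed $n-1$, there is no degree comparison to make, and the fact that only $h_0$ has degree $n-1$ says nothing about $\deg w$. What membership actually imposes is $\sum_j u_jw(\alpha_j)\alpha_j^if(\alpha_j)=0$ for $f\in\mathcal{S}_2$. Together with the conditions coming from $g_0$, these are only $3k-3$ linear conditions on the $n$ coefficients of $w$. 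So for small $k$ there are non-constant solutions.

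\textbf{Counterexample to Case 1.} Take $q=16$, $n=5$, $k=2$, and let $\boldsymbol{\alpha}$ be the five 5th roots of unity. Then $u_i=\alpha_i$, and $\sum_i\alpha_i^r$ equals $1$ if $5\mid r$ and $0$ otherwise. Take $\eta=(0,0,1)$, so $g_0=(v_i(1+\alpha_i^4))_i$ and $g_1=(v_i\alpha_i)_i$. Take $v_i=1+\alpha_i+\alpha_i^2\neq0$, so $v_i^2=1+\alpha_i^2+\alpha_i^4$. Using $\alpha_i^5=1$ and characteristic $2$,
\begin{gather*}
g_1g_1^T=\sum_i(\alpha_i^2+\alpha_i^4+\alpha_i^6)=0,\qquad g_0g_1^T=\sum_i v_i^2(1+\alpha_i)=\sum_i\sum_{r=0}^{5}\alpha_i^r=0,\\
g_0g_0^T=\sum_i v_i^2(1+\alpha_i^3)=\sum_i(1+\alpha_i^2+\alpha_i^3+\alpha_i^4+\alpha_i^5+\alpha_i^7)=0.
\end{gather*}
So $\mathcal{C}_2$ is self-orthogonal. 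Yet $v_i^2/u_i=\alpha_i+\alpha_i^3+\alpha_i^4$ equals $1$ at $\alpha_i=1$ and $1+\alpha_i^2\neq1$ at the other roots, so (i) fails.

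\textbf{Counterexample to Case 2.} Work over $\mathbb{F}_4=\{0,1,\omega,\omega^2\}$ with $\boldsymbol{\alpha}=(0,1,\omega,\omega^2)$, $k=2$, $\eta=(0,\omega)$ and $\boldsymbol{v}=(1,\omega,\omega,\omega)$. Then $g_0=(1,1,1,1)$ and $g_1=(0,\omega,\omega^2,1)$, which span a self-dual code. However $u_i=1$ for all $i$, so (i) fails, and (iii) would require $\eta_2\sigma_3=\omega$ to equal $2=0$, so (iii) fails too.

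\textbf{What survives.} The paper's own proof has the same hole: it sets $a_{ii}=\boldsymbol{v}^2/\boldsymbol{u}$, treating a vector as a scalar coefficient. So this is a defect of the statement, not only of your write-up. What both arguments do establish is the sufficiency direction, and the refined statement: \emph{assuming (i)}, $\mathcal{C}_2$ is self-orthogonal (self-dual if $n=2k$) if and only if (ii) and (iii) hold.

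Your Step 2 derives (ii) correctly under (i). For (iii), use the route you mention at the end, which is the paper's System (5). Normalise $\lambda=1$ and write
\[
f_0(x)=a_0h_0(x)+\sum_{j\ge1}a_j(b_j+x^j),\qquad h_0(x)=1-\eta_{n-k}\sum_{t=0}^{n-1}\sigma_{n-1-t}x^t.
\]
Then:
\begin{itemize}
\item the coefficients of $x^{n-1},\dots,x^{n-k}$ give $a_0=-1$ and (ii);
\item the coefficients of $x^m$ for $1\le m\le n-k-1$ determine the $a_m$;
\item (ii) forces $b_1=\dots=b_{k-1}=0$;
\item the constant term is then exactly (iii).
\end{itemize}
With this, the separate $\Lambda$-bookkeeping for $g_0g_0^T$ is unnecessary.
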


\begin{proof}
    \textbf{\textit{Case 1.}}\enspace Let $G_2 = \begin{pmatrix}
    g_0 , g_1 , \cdots , g_{k-1}
\end{pmatrix}^T$, where $g_i$ is the $(i+1)$-th row of $G_2$ for all $0 \leq i \leq k-1$. The code $\mathcal{C}_2$ is self-orthogonal if and only if each row of $G_2$ can be expressed as a linear combination of the rows of $H_2$.

Firstly, we compare the last $k-1$ rows of $G_2$ with the rows of $H_2$. It follows that the last $k-1$ rows of $G_2$ can be linearly represented by the rows of $H_2$ if and only if there exist $a_{i0},a_{i1},\cdots,a_{i,n-k-1} \in \mathbb{F}_q $, not all zero, such that 
\[ g_i = (a_{i0},a_{i1},\cdots,a_{i,n-k-1})H_2 \]
for $1 \leq i \leq k-1$. Then we can obtain $a_{ij} = 0$ for $j \neq i, 0 \leq j \leq n-k-1$ and $a_{ii} = \boldsymbol{\frac{v^2}{u}}$ for $1 \leq i \leq k-1$. This establishes the condition (i). We can also obtain $b_i = 0$ for $1 \leq i \leq k-1 $. And since $b_1 = \sigma_1 - \frac{\eta_{n-k-1}}{\eta_{n-k}}$ and $b_i = \sigma_i - \frac{\eta_{n-k-i}}{\eta_{n-k}} - \sum\limits_{j=1}^{i-1} \sigma_{i-j} b_j$ for $ 2 \leq i \leq n-k-1$, we have $\eta_{n-k-i} = \eta_{n-k} \sigma_i $ for all $1 \leq i \leq k-1$. This establishes the condition (ii).

Finally, we prove the condition (iii) under the condition $ v_i^2 = \lambda u_i $ for all $1 \leq i \leq n$ and $\lambda \in \mathbb{F}_q^*$.

The first row of $G_2$ is a linear combination of the rows of $H_2$ if and only if there exist $a_0,a_1,\cdots,a_{n-k-1} \in \mathbb{F}_q $, not all zero, such that 
\[ g_0 = (a_0,a_1,\cdots,a_{n-k-1})H_2. \]
Since $k \leq \frac{n-1}{2}$, we have $k \leq n-k-1$. This leads to the following system of linear equations:
\begin{equation}
\left\{
\begin{aligned}
    - a_0 \eta_{n-k} \sigma_i &= \eta_{n-k-i},  &&\text{for } 0 \leq i \leq k-1, \\
    - a_0 \eta_{n-k} \sigma_i + a_{n-i-1} &= \eta_{n-k-i},  &&\text{for } k \leq i \leq n-k-1, \\
    - a_0 \eta_{n-k} \sigma_i + a_{n-i-1} &= 0 ,  &&\text{for } n-k \leq i \leq n-2, \\
    a_0(1-\eta_{n-k} \sigma_{n-1}) + \sum_{i=1}^{n-k-1} a_i b_i &= 1.
\end{aligned}
\right.
\end{equation}
It follows from System (5) that 
\[
\left\{
\begin{aligned}
    a_0 &= -1, \\
    a_i &= - \eta_{n-k} \sigma_{n-i-1}, &&\text{for } 1 \leq i \leq k-1, \\
    a_i &= \eta_{i-k+1} - \eta_{n-k} \sigma_{n-i-1}, &&\text{for } k \leq i \leq n-k-1,
\end{aligned}
\right.
\]
and
\[ \eta_{n-k} \sigma_{n-1} - 1 + \sum_{i=k}^{n-k-1} (\eta_{i-k+1} - \eta_{n-k} \sigma_{n-i-1}) b_i = 1. \] 
That is, the condition (iii) holds. It completes the proof of Case 1.

\medskip

\noindent \textbf{\textit{Case 2.}}\enspace Similarly to the proof of Case 1, the conditions (i) and (ii) hold. It implies $b_i = 0 $ for $1 \leq i \leq k-1$.  

Let $G_2 = \begin{pmatrix}
    g_0 , g_1 , \cdots , g_{k-1}
\end{pmatrix}^T$, where $g_i$ is the $(i+1)$-th row of $G_2$ for all $0 \leq i \leq k-1$. And the first row of $G_2$ is a linear combination of the rows of $H_2$ if and only if there exist $a_0,a_1,\cdots,a_{n-k-1} \in \mathbb{F}_q $, not all zero, such that
\[ g_0 = (a_0,a_1,\cdots,a_{n-k-1})H_2. \]
Since $n=2k$, we have $k = n-k$. This leads to the following system of linear equations:
\begin{equation}
\left\{
\begin{aligned}
    - a_0 \eta_{k} \sigma_i &= \eta_{k-i}, &&\text{ for } 0 \leq i \leq k-1, \\
    - a_0 \eta_{k} \sigma_i + a_{n-i-1} &= 0 , &&\text{ for } k \leq i \leq n-2, \\
    a_0(1-\eta_{k} \sigma_{n-1}) + \sum_{i=1}^{k-1}a_i b_i &= 1.
\end{aligned}
\right.
\end{equation}
It follows from System (6) that 
\[
\left\{
\begin{aligned}
    a_0 &= -1, \\
    a_i &= - \eta_{k} \sigma_{n-i-1}, \; \text{for } 1 \leq i \leq k-1,
\end{aligned}
\right.
\]
and $\eta_{k} \sigma_{n-1} - 1 = 1 $. That is, the condition (iii) holds. It completes the proof of Case 2.

\end{proof}

\section{Construction of Self-Orthogonal TGRS Codes}\label{sec4}

For the convenience of the following description, the definitions of some notations are first given. Let $ \textup{wt}(\mathcal{C}) = \min \{ \text{wt}(\boldsymbol{c}) \mid \boldsymbol{c} \in \mathcal{C}, \boldsymbol{c} \neq \boldsymbol{0} \}$, where $\text{wt} (\boldsymbol{c})$ is the number of nonzero coordinates in $\boldsymbol{c}$. Let $I$ be any $k$-subset of $\{1,2,\cdots,n\}$. We list the following series of symbols:
\begin{align*}
G(x) &= \prod_{i \in I} (x-\alpha_i) = c_0 x^k + c_1 x^{k-1} + \cdots + c_{k-1} x + c_k, \\
A_I &= \begin{bmatrix}
    0 & 1 \\
    0 & 0 & 1 \\
    0 & 0 & 0 & \ddots  \\
    \vdots & \vdots & \vdots & \ddots & 1 \\
    -c_k & -c_{k-1} & -c_{k-2} & \cdots & -c_1
\end{bmatrix}, \\
d_j &= c_{k-j}, a_{mt}^{l} = \sum_{i+j = l, 1\leq i \leq n-k, 0 \leq j \leq t-1} \eta_{mi} d_j, \\
F_{mt}(x) &= \sum_{l=t}^{n-k+t-1} a_{mt}^{l} x^{l-t}, g_{mt} = - \boldsymbol{\gamma} F_{mt}(A_I) \boldsymbol{\gamma}^{T}, \\
M(n,k,\boldsymbol{\alpha},A(\eta),I) &= \begin{vmatrix}
    1 + g_{11} & g_{12} & \cdots & g_{1k} \\
    g_{21} & 1 + g_{22} & \cdots & g_{2k} \\
    \vdots & \vdots & \ddots & \vdots \\
    g_{k1} & g_{k2} & \cdots & 1 + g_{kk}
\end{vmatrix},
\end{align*}
where $\boldsymbol{\gamma} = (0,\cdots,0,1) $, $j=0,1,\cdots,k$, $m, t=1,2,\cdots,k$, $l = 1,2,\cdots,n-1$.

\begin{lemma}\label{MDS}\cite{zhao2025research}
    Suppose that $3 \leq k < n$. Let $\Omega$ be the set of $A(\eta)$ such that $M(n,k,\boldsymbol{\alpha},A(\eta),I) \neq 0 $ for each $k$-subset I of $\{1,2,\cdots,n\}$. Then the code $\mathcal{C} = \left\{ ev_{\alpha,v}(f(x)) \mid f(x) \in \mathcal{S}\right\} $ is MDS if and only if $A(\eta) \in \Omega$.
\end{lemma}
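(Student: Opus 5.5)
The plan is to use the standard characterization of MDS codes: $\mathcal{C}$ is MDS if and only if every $k$ columns of its generator matrix are linearly independent. Equivalently, for every $k$-subset $I\subseteq\{1,\dots,n\}$, the only $f\in\mathcal{S}$ with $f(\alpha_i)=0$ for all $i\in I$ is $f=0$. The multipliers $v_i\in\mathbb{F}_q^*$ do not affect this, so I drop them at once. Since the $\alpha_i$ are distinct, $f$ vanishes on $\{\alpha_i: i\in I\}$ exactly when $G(x)=\prod_{i\in I}(x-\alpha_i)$ divides $f$. So the whole question reduces to showing that, for a fixed $I$, a certain $k\times k$ linear system in $(f_0,\dots,f_{k-1})$ has only the trivial solution exactly when $M(n,k,\boldsymbol{\alpha},A(\eta),I)\neq 0$.

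\textbf{Setting up the system.} Write $h_m(x)=\sum_{i=1}^{n-k}\eta_{mi}x^{k-1+i}$. Then a general element of $\mathcal{S}$ is $f=\sum_{m=1}^{k}f_{m-1}\bigl(x^{m-1}+h_m(x)\bigr)$. Reducing modulo $G$, the monomials $x^{m-1}$ with $m\le k$ are already reduced, so
\[
f \bmod G=\sum_{t=1}^{k}\Bigl(f_{t-1}+\sum_{m=1}^{k}f_{m-1}\,r_{mt}\Bigr)x^{t-1},
\]
where $r_{mt}$ is the coefficient of $x^{t-1}$ in $h_m \bmod G$. Hence $G\mid f$ if and only if $(f_0,\dots,f_{k-1})(E_k+R)=0$, with $R=(r_{mt})$. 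This system has only the trivial solution if and only if $\det(E_k+R)\neq 0$. It therefore suffices to prove that $r_{mt}=g_{mt}$; the determinant is then literally $M(n,k,\boldsymbol{\alpha},A(\eta),I)$. Taking all $I$ together gives the stated equivalence with $A(\eta)\in\Omega$.

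\textbf{Identifying the remainder coefficients.} I expect the main work to be the identity
\[
r_{mt}=-\boldsymbol{\gamma}F_{mt}(A_I)\boldsymbol{\gamma}^T .
\]
The approach is through the companion matrix. Reduction modulo $G$ is realized by $A_I$: the remainder of $p$ is read off from $p(A_I)$, and $\boldsymbol{\gamma}A_I^{s}\boldsymbol{\gamma}^T$ equals minus the leading-type coefficients of $x^{k-1+s}\bmod G$, which are the complete homogeneous sums attached to $c_1,\dots,c_k$. To extract the coefficient of $x^{t-1}$ rather than the top coefficient, I would use the Horner-type polynomials $\sum_{j=0}^{t-1}d_jx^{t-1-j}$, where $d_j=c_{k-j}$. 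These form the dual basis to $\{x^{t-1}\}$ with respect to the functional ``top coefficient of the remainder''. Multiplying $h_m$ by the $t$-th Horner polynomial produces exactly the convolution coefficients $a^{l}_{mt}=\sum_{i+j=l}\eta_{mi}d_j$. Shifting the exponent down by $t$ gives $F_{mt}$, and applying the functional $p\mapsto-\boldsymbol{\gamma}p(A_I)\boldsymbol{\gamma}^T$ should then return $r_{mt}$.

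\textbf{Main obstacle.} The delicate part is pinning down the index shift and sign conventions, namely why the sum in $F_{mt}$ runs over $l$ from $t$ to $n-k+t-1$ with powers $x^{l-t}$. I would first check them on small cases ($k=3$, $t=1,2,3$, a single nonzero $\eta_{mi}$) by direct polynomial division. After that I would prove the general identity by induction on $t$, using the recursion $x\cdot x^{s}\bmod G=A_I$-shift. Everything else in the argument is formal linear algebra.
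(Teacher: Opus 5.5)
Your reduction is correct: $\mathcal{C}$ is MDS iff $\det(E_k+R)\neq 0$ for every $k$-subset $I$, where $r_{mt}$ is the coefficient of $x^{t-1}$ in $h_m \bmod G$. This is the polynomial counterpart of inverting the Vandermonde block of the chosen columns, which is how the source the paper quotes proceeds; the paper itself gives no proof. The identity $r_{mt}=g_{mt}$ you aim for is also true.

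The gap is that this identity is the whole content of the lemma, and the mechanism you sketch for it does not work. Write $\phi(p)$ for the coefficient of $x^{k-1}$ in $p \bmod G$.
\begin{itemize}
\item The polynomials $P_t(x)=\sum_{j=0}^{t-1}d_jx^{t-1-j}$ are not the dual basis of $\{x^{t-1}\}$ for the pairing $\phi(pq)$. Already $P_1=d_0=c_k$ is a constant, so $\phi(1\cdot P_1)=0\neq 1$.
\item $h_mP_t$ does not produce the $a^l_{mt}$. The exponent of $\eta_{mi}d_j\,x^{k-1+i}x^{t-1-j}$ depends on $i-j$, so you get correlations, not the convolution $\sum_{i+j=l}\eta_{mi}d_j$.
\item $\boldsymbol{\gamma}A_I^s\boldsymbol{\gamma}^T$ is exactly (not minus) the coefficient of $x^{k-1}$ in $x^{k-1+s}\bmod G$. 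So your sketch accounts neither for the sign in $g_{mt}$ nor for why the terms with $l<t$ are dropped. These are precisely the points you defer to small-case checks and an unspecified induction.
\end{itemize}

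The repair is short. Use the low part $D_t(x)=\sum_{j=0}^{t-1}d_jx^j$ of $G$ (not reversed) and the Horner quotient $b_{t-1}(x)=(G(x)-D_t(x))/x^t=\sum_{j=t}^{k}d_jx^{j-t}$.
\begin{itemize}
\item A degree check, using $G\equiv 0$, gives $\phi(x^ib_{t-1})=\delta_{i,t-1}$ for $0\le i\le k-1$. Hence $r_{mt}=\phi(h_mb_{t-1})$.
\item Put $\eta_m(x)=\sum_i\eta_{mi}x^i$, so $h_m=x^{k-1}\eta_m$. Write $\eta_mD_t=x^tF_{mt}+L_{mt}$ with $\deg L_{mt}<t$; this is exactly where $a^l_{mt}$ and the range $l\ge t$ in $F_{mt}$ come from.
\item Then
\[
h_m b_{t-1}=x^{k-1-t}\eta_m G-x^{k-1}F_{mt}-x^{k-1-t}L_{mt}.
\]
All three terms are polynomials because $x\mid\eta_m$ and $x\mid L_{mt}$; this is needed when $t=k$.
\item The first term vanishes modulo $G$, and the last has degree at most $k-2$. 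Therefore
\[
r_{mt}=-\phi(x^{k-1}F_{mt})=-\boldsymbol{\gamma}F_{mt}(A_I)\boldsymbol{\gamma}^T=g_{mt}.
\]
The middle equality holds because the row vector $\boldsymbol{\gamma}p(A_I)$ is the coefficient vector of $x^{k-1}p(x)\bmod G$.
\end{itemize}
The minus sign thus comes from $x^tb_{t-1}\equiv -D_t \pmod G$, not from the companion matrix. With this step supplied, your argument proves the lemma.
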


By Lemma~{\ref{MDS}}, we get the following results.

\begin{corollary}\label{MDS_block1}
    Suppose that $3 \leq k < n$. Let 
    \[
    A_1(\eta) = \begin{bmatrix}
        0 & 0 & 0 &\cdots & 0 \\
        \vdots & \vdots & \vdots &  & \vdots \\
        \eta_{11} & \eta_{12} & 0 & \cdots & 0 \\
        \eta_{21} & \eta_{22} & 0 & \cdots & 0
    \end{bmatrix}.
    \]
    Then $\mathcal{C}_1$ is MDS if and only if 
    \[ M(n,k,\boldsymbol{\alpha},A_1(\eta),I) = \begin{vmatrix}
        1+g_{k-1,k-1} & g_{k-1,k} \\
        g_{k,k-1} & 1+g_{kk}
    \end{vmatrix} \neq 0.\]
\end{corollary}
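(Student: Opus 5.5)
The plan is to specialize Lemma~\ref{MDS} to the matrix $A_1(\eta)$ and show that, for every $k$-subset $I$, the $k\times k$ determinant $M(n,k,\boldsymbol{\alpha},A_1(\eta),I)$ collapses to the displayed $2\times 2$ determinant. Since $3\le k<n$, Lemma~\ref{MDS} applies directly. It then suffices to prove that the matrix $\bigl(\delta_{mt}+g_{mt}\bigr)_{m,t}$ has a convenient block shape.

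The first step is to compute the entries $g_{mt}$ from the definitions. Here $\eta_{mi}$ means the $(m,i)$ entry of $A_1(\eta)$. For $1\le m\le k-2$ the $m$-th row of $A_1(\eta)$ is zero, so $\eta_{mi}=0$ for all $i$. Hence every coefficient $a_{mt}^{l}=\sum_{i+j=l}\eta_{mi}d_j$ vanishes, so $F_{mt}(x)=0$ and $g_{mt}=-\boldsymbol{\gamma}F_{mt}(A_I)\boldsymbol{\gamma}^T=0$ for all $t$. Thus the first $k-2$ rows of the matrix defining $M$ are exactly the first $k-2$ rows of the identity.

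The second step is to expand the determinant along these rows, or equivalently to note that the matrix is block upper triangular:
\[
\begin{bmatrix} I_{k-2} & 0\\ * & B\end{bmatrix},\qquad B=\begin{bmatrix}1+g_{k-1,k-1} & g_{k-1,k}\\ g_{k,k-1} & 1+g_{kk}\end{bmatrix}.
\]
The determinant is therefore $\det B$. The entries $g_{k-1,t}$ and $g_{k,t}$ for $t\le k-2$ land in the lower-left block and do not affect it. Note that in the paper's labelling the last two rows of $A_1(\eta)$ carry entries named $\eta_{11},\eta_{12},\eta_{21},\eta_{22}$. So when computing $g_{k-1,t},g_{kt}$ one must read $\eta_{k-1,i},\eta_{k,i}$ as those values, with $\eta_{k-1,i}=\eta_{k,i}=0$ for $i\ge 3$.

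Finally, combine the two steps. $\mathcal{C}_1$ is MDS if and only if $A_1(\eta)\in\Omega$, that is, if and only if $\det B\neq 0$ for every $k$-subset $I$. This is the stated criterion, with the quantifier over $I$ implicit as in Lemma~\ref{MDS}.

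There is no real obstacle; the only care needed is the index bookkeeping. One must correctly identify which rows of $A_1(\eta)$ are zero under the general indexing $\eta_{mi}$, and check that it is rows rather than columns of $M$ that become trivial. Rows are right, because $g_{mt}$ depends on $m$ only through the row $\eta_{m\cdot}$.
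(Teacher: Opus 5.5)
Your proposal is correct and is exactly the specialization of Lemma~\ref{MDS} that the paper intends; the paper gives no separate proof and simply derives the corollary from that lemma. The zero rows $1,\dots,k-2$ of $A_1(\eta)$ force $g_{mt}=0$ for $m\le k-2$, so the $k\times k$ determinant collapses to the $2\times 2$ block for every $k$-subset $I$. The matrix you display is block \emph{lower} triangular, not upper, but this does not change the determinant computation.
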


\begin{corollary}
    Suppose that $3 \leq k < n$. Let 
    \[
    A_1(\eta) = \begin{bmatrix}
        0 & 0 & 0 &\cdots & 0 \\
        \vdots & \vdots & \vdots &  & \vdots \\
        0 & 0 & 0 & \cdots & 0 \\
        \eta_1 & \eta_2 & 0 & \cdots & 0
    \end{bmatrix}.
    \]
    Then $\mathcal{C}_1$ is MDS if and only if $\eta_1 c_1 + \eta_2 (c_2 - c_1^2) \neq 1 $.
\end{corollary}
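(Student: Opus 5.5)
The plan is to specialize Corollary~\ref{MDS_block1} (equivalently Lemma~\ref{MDS}) to the case where only the last row of $A_1(\eta)$ is nonzero, that is, $\eta_{11}=\eta_{12}=0$, $\eta_{21}=\eta_1$ and $\eta_{22}=\eta_2$. The claimed condition is then to be read, as in Lemma~\ref{MDS}, as required for every $k$-subset $I$ of $\{1,2,\cdots,n\}$, with $c_1,c_2$ the coefficients of $G(x)=\prod_{i\in I}(x-\alpha_i)$. The argument has two steps: collapse the $2\times 2$ determinant to a single entry, then compute that entry from the definitions.

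\emph{First step.} Since $a_{mt}^{l}=\sum \eta_{mi}d_j$ is linear in the $m$-th row of $A_1(\eta)$, the vanishing of row $k-1$ gives $a_{k-1,t}^{l}=0$ for all $l,t$. Hence $F_{k-1,t}=0$ and $g_{k-1,k-1}=g_{k-1,k}=0$. The determinant in Corollary~\ref{MDS_block1} is then lower triangular and equals $1+g_{kk}$. So it suffices to show
\[
g_{kk}=-\bigl(\eta_1c_1+\eta_2(c_2-c_1^2)\bigr).
\]

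\emph{Second step.} Compute the coefficients $a_{kk}^{l}$ for $k\le l\le n-1$. In $a_{kk}^{l}$ the index $i$ runs over $\{1,2\}$ and $0\le j\le k-1$, with $d_j=c_{k-j}$. Thus only $l\le k+1$ can contribute, and:
\begin{itemize}
\item for $l=k$, the contributions are $i=1,j=k-1$ and $i=2,j=k-2$, giving $a_{kk}^{k}=\eta_1c_1+\eta_2c_2$;
\item for $l=k+1$, the only contribution is $i=2,j=k-1$, giving $a_{kk}^{k+1}=\eta_2c_1$.
\end{itemize}
Therefore $F_{kk}(x)=(\eta_1c_1+\eta_2c_2)+\eta_2c_1x$. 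With $\boldsymbol{\gamma}=(0,\cdots,0,1)$, the quantity $\boldsymbol{\gamma}F_{kk}(A_I)\boldsymbol{\gamma}^T$ is the $(k,k)$ entry of $F_{kk}(A_I)$. That entry is $(\eta_1c_1+\eta_2c_2)+\eta_2c_1\,(A_I)_{kk}$, and the companion matrix has $(A_I)_{kk}=-c_1$. This yields $g_{kk}=-(\eta_1c_1+\eta_2c_2-\eta_2c_1^2)$. The determinant is then nonzero exactly when $\eta_1c_1+\eta_2(c_2-c_1^2)\neq 1$, which is the claim.

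The main point needing care is bookkeeping at the boundary, not any conceptual difficulty. First, the term $l=k+1$ lies in the summation range $l\le n-1$ only when $n-k\ge 2$. If $n=k+1$, the column carrying $\eta_2$ does not exist, so effectively $\eta_2=0$ and the formula remains consistent. Second, one must confirm that the index conventions for $d_j$ and the $(k,k)$ entry of the companion matrix $A_I$ match the sign conventions in the definition of $M(n,k,\boldsymbol{\alpha},A(\eta),I)$. I would verify this by checking the small case $k=3$ directly against $c_0=1$.
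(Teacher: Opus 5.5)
Your proposal is correct and takes the same route as the paper, which obtains this corollary by specializing Lemma~\ref{MDS} (through Corollary~\ref{MDS_block1}) to $\eta_{11}=\eta_{12}=0$, $\eta_{21}=\eta_1$, $\eta_{22}=\eta_2$. The paper states the result without further detail, and your computation supplies what it omits: $g_{k-1,k-1}=g_{k-1,k}=0$, $F_{kk}(x)=(\eta_1c_1+\eta_2c_2)+\eta_2c_1x$ and $(A_I)_{kk}=-c_1$, so the determinant equals $1-\eta_1c_1-\eta_2(c_2-c_1^2)$, which must be nonzero for every $k$-subset $I$; this also agrees with the direct check that $x^k$ and $x^{k+1}$ reduce modulo $G(x)$ with $x^{k-1}$-coefficients $-c_1$ and $c_1^2-c_2$.
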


\begin{corollary}
    Suppose that $3 \leq k < n$. Let 
    \[
    A_1(\eta) = \begin{bmatrix}
        0 & 0 & 0 &\cdots & 0 \\
        \vdots & \vdots & \vdots &  & \vdots \\
        0 & \eta_1 & 0 & \cdots & 0 \\
        0 & \eta_2 & 0 & \cdots & 0
    \end{bmatrix}.
    \]
    Then $\mathcal{C}_1$ is MDS if and only if $1 + \eta_1 (c_1 c_2 - c_3) + \eta_2 (c_1^2 - c_2) + \eta_1 \eta_2 c_1 c_2 (c_1^2 + c_2 - \eta_1 c_2 - 1) \neq 0 $.
\end{corollary}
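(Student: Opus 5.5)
The plan is to specialize Corollary~\ref{MDS_block1}, which is itself Lemma~\ref{MDS} applied to $A_1(\eta)$. The present matrix is the case $\eta_{11}=\eta_{21}=0$, $\eta_{12}=\eta_1$, $\eta_{22}=\eta_2$. So it suffices to compute the four entries $g_{k-1,k-1},g_{k-1,k},g_{k,k-1},g_{kk}$ explicitly in terms of the coefficients $c_j$ of $G(x)=\prod_{i\in I}(x-\alpha_i)$, and then expand the $2\times 2$ determinant. The condition ``for every $k$-subset $I$'' is inherited from Lemma~\ref{MDS}, so the $c_j$ depend on $I$ throughout.

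\textbf{Step 1: the polynomials $F_{mt}$.} For $m\in\{k-1,k\}$ the only nonzero twist entry in row $m$ is in column $i=2$. Hence in
\[
a_{mt}^{l}=\sum_{i+j=l,\,1\le i\le n-k,\,0\le j\le t-1}\eta_{mi}d_j
\]
only $i=2$ survives, giving $a_{mt}^l=\eta_{m2}d_{l-2}$ when $0\le l-2\le t-1$ and $0$ otherwise. Within the summation range $t\le l\le n-k+t-1$ this leaves only $l=t$ and $l=t+1$. Therefore
\[
F_{mt}(x)=\eta_{m2}\,(d_{t-2}+d_{t-1}x),
\]
valid for $t\ge 2$, which covers $t=k-1,k$ since $k\ge 3$.

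\textbf{Step 2: evaluating at the companion matrix.} We have $\boldsymbol{\gamma}I\boldsymbol{\gamma}^T=1$, and $\boldsymbol{\gamma}A_I\boldsymbol{\gamma}^T=-c_1$ is the last diagonal entry of $A_I$. Hence
\[
g_{mt}=-\eta_{m2}\,(d_{t-2}-c_1d_{t-1}).
\]
Using $d_j=c_{k-j}$, this gives
\[
g_{mk}=\eta_{m2}(c_1^2-c_2),\qquad g_{m,k-1}=\eta_{m2}(c_1c_2-c_3).
\]

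\textbf{Step 3: the determinant.} Substitute into $\begin{vmatrix}1+g_{k-1,k-1}&g_{k-1,k}\\ g_{k,k-1}&1+g_{kk}\end{vmatrix}$ and expand. The obstacle is only bookkeeping: checking the index shift between rows $m$ of $A(\eta)$ and the coefficients $f_{m-1}$, and tracking signs.

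My computation predicts that the $\eta_1\eta_2$ cross terms cancel:
\[
\eta_1\eta_2(c_1c_2-c_3)(c_1^2-c_2)-\eta_1\eta_2(c_1^2-c_2)(c_1c_2-c_3)=0.
\]
This would leave $1+\eta_1(c_1c_2-c_3)+\eta_2(c_1^2-c_2)$. That matches the linear part of the stated expression, but it does not produce the extra $\eta_1\eta_2$ term in the statement. I expect this discrepancy to be the main point needing care.

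To resolve it, I would recheck the definition of $a_{mt}^l$, in particular whether $j$ ranges up to $t-1$ or up to $t$, and also the indexing of $d_j$. If the computation stands, the MDS criterion reduces to this determinant being nonzero for every $k$-subset $I$, and the final step is to reconcile it with the form printed in the statement.
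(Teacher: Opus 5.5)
Your Steps 1--3 are correct, and the cancellation you predict is real. The step that fails is the one you defer to the end, reconciling your determinant with the printed expression: it cannot be done. Rechecking the range $0\le j\le t-1$ or the indexing $d_j=c_{k-j}$ will not help, because those definitions are right as printed.

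The $2\times2$ block is the rank-one perturbation $I_2+(\eta_1,\eta_2)^T(c_1c_2-c_3,\;c_1^2-c_2)$, so its determinant is exactly $1+\eta_1(c_1c_2-c_3)+\eta_2(c_1^2-c_2)$. More structurally, under any indexing convention row $m$ of $(g_{mt})$ is a multiple of $\eta_{m2}$ alone. Hence the determinant has degree at most one in each of $\eta_1,\eta_2$, whereas the printed expression contains the term $-\eta_1^2\eta_2c_1c_2^2$.

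You can also confirm your formula without Lemma~\ref{MDS}. The code $\mathcal{C}_1$ fails to be MDS iff some nonzero $f\in\mathcal{S}_1$ is divisible by $G(x)$ for some $k$-subset $I$. Put $\mu=\eta_1f_{k-2}+\eta_2f_{k-1}$ and write $x^{k+1}\equiv\sum_{i=0}^{k-1}r_ix^i \pmod{G}$, where $r_{k-1}=c_1^2-c_2$ and $r_{k-2}=c_1c_2-c_3$. Divisibility means $f_i=-\mu r_i$ for all $i$, and substituting into the definition of $\mu$ gives $\mu\bigl(1+\eta_1(c_1c_2-c_3)+\eta_2(c_1^2-c_2)\bigr)=0$. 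Conversely, if the bracket vanishes, $\mu=1$ yields such a nonzero $f$.

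So the corollary as printed is false, not merely hard to reach. Over $\mathbb{F}_{13}$ take $n=5$, $k=3$, $\boldsymbol{\alpha}=(1,2,3,4,5)$, $\boldsymbol{v}=(1,1,1,1,1)$, $\eta_1=3$, $\eta_2=1$, so that $\mathcal{S}_1=\{f_0+f_1x+f_2x^2+(3f_1+f_2)x^4\}$.

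\begin{itemize}
\item On the ten $3$-subsets $I$ in lexicographic order, the printed expression takes the values $2,9,6,4,1,5,11,11,12,9$, all nonzero.
\item Yet $f=x^4+4x^2+12x+9$ (that is, $f_0=9$, $f_1=12$, $f_2=4$, with $3f_1+f_2=40\equiv1$) lies in $\mathcal{S}_1$. It vanishes at $1,2,4$, giving the codeword $(0,0,6,0,1)$ of weight $2<n-k+1$.
\item Accordingly, your expression vanishes at $I=\{1,2,4\}$.
\end{itemize}

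The paper gives no argument beyond invoking Lemma~\ref{MDS}, which is exactly your route. The correct conclusion of that route is: $\mathcal{C}_1$ is MDS if and only if $1+\eta_1(c_1c_2-c_3)+\eta_2(c_1^2-c_2)\neq0$ for every $k$-subset $I$. Your proposal already proves this corrected statement, and you should state it in place of the printed one.
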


\begin{corollary}
    Suppose that $3 \leq k < n$. Let 
    \[
    A_2(\eta) = \begin{bmatrix}
        \eta_1 & \eta_2 &\cdots & \eta_{n-k} \\
        0 & 0 & \cdots & 0 \\
        \vdots & \vdots &  & \vdots \\
        0 & 0 & \cdots & 0
    \end{bmatrix}.
    \]
    Then $\mathcal{C}_2$ is MDS if and only if $ M(n,k,\boldsymbol{\alpha},A_2(\eta),I) \neq 0 $.
\end{corollary}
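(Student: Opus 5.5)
The plan is to specialize Lemma~\ref{MDS} (Zhao et al.) to the twist matrix $A_2(\eta)$ and then simplify, as was done for $A_1(\eta)$ in Corollary~\ref{MDS_block1}. First I would check that $\mathcal{C}_2$ fits that framework. The polynomial space $\mathcal{S}_2$ is exactly $\mathcal{S}$ for the $k\times(n-k)$ matrix $A(\eta)=A_2(\eta)$: its first row is $(\eta_1,\dots,\eta_{n-k})$, attached to the coefficient $f_0$, and all other rows are zero. So $\mathcal{C}_2$ is the code $\mathcal{C}$ of Lemma~\ref{MDS} with $A(\eta)=A_2(\eta)$. The hypothesis $3\le k<n$ is the same as in the lemma, and no condition on $\boldsymbol{v}$ enters, since column scaling by $v_i\neq 0$ does not change MDS-ness.

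Next I would make the quantities in $M(n,k,\boldsymbol{\alpha},A_2(\eta),I)$ explicit. By definition,
\[
a_{mt}^{l}=\sum_{i+j=l,\,1\le i\le n-k,\,0\le j\le t-1}\eta_{mi}d_j .
\]
This vanishes identically for $m\ge 2$, because $\eta_{mi}=0$ in every row except the first. Hence $F_{mt}(x)=0$ and $g_{mt}=0$ for all $m\ge 2$. The determinant then has rows $2,\dots,k$ equal to the corresponding rows of the identity matrix, and expanding it gives $M(n,k,\boldsymbol{\alpha},A_2(\eta),I)=1+g_{11}$. Here
\[
g_{11}=-\boldsymbol{\gamma}F_{11}(A_I)\boldsymbol{\gamma}^T,\qquad F_{11}(x)=\sum_{l=1}^{n-k}\eta_l d_0\,x^{l-1},
\]
with $d_0=c_k$.

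Finally, invoking Lemma~\ref{MDS}, $\mathcal{C}_2$ is MDS if and only if $A_2(\eta)\in\Omega$. That is, $M(n,k,\boldsymbol{\alpha},A_2(\eta),I)\neq 0$ for every $k$-subset $I$, which is the statement; optionally it can be rewritten in the simplified form $1+g_{11}\ne0$.

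There is no real obstacle, since the result is a direct specialization. The only care needed is bookkeeping: confirming that the row ordering of $A_2(\eta)$ (the twist on $f_0$ rather than on $f_{k-1}$) matches the row index $m$ in the definition of $a^{l}_{mt}$, and that "for each $k$-subset $I$" is understood in the corollary's condition.
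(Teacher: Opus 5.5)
Your proposal is correct and takes the same route as the paper, which derives this corollary directly from Lemma~\ref{MDS} with $A(\eta)=A_2(\eta)$ and gives no further argument. Your extra observation is also correct, though the paper does not state it: $g_{mt}=0$ for all $m\ge 2$, so $M(n,k,\boldsymbol{\alpha},A_2(\eta),I)=1+g_{11}$ with $F_{11}(x)=c_k\sum_{l=1}^{n-k}\eta_l x^{l-1}$.
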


\begin{lemma}\cite{ashikhmin2002nonbinary}\label{QECC}
    If there exists a $[n,k,d]_q$ linear code $\mathcal{C}$ with $\mathcal{C} \subseteq \mathcal{C}^{\perp}$, then there exists a $[[n,n-2k, wt(\mathcal{C}^{\perp} \setminus \mathcal{C}) ]]_q$ quantum stabilizer code.
\end{lemma}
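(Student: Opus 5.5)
This is the standard symplectic/CSS-type construction of a stabilizer code from a Euclidean self-orthogonal code, as in \cite{ashikhmin2002nonbinary}. The plan is to pass to the symplectic setting and apply the stabilizer formalism.

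Given the $[n,k,d]_q$ code $\mathcal{C}\subseteq\mathcal{C}^{\perp}$, form the additive (indeed $\mathbb{F}_q$-linear) code
\[
\mathcal{D}=\{(\boldsymbol{a}\,|\,\boldsymbol{b}) : \boldsymbol{a},\boldsymbol{b}\in\mathcal{C}\}=\mathcal{C}\times\mathcal{C}\subseteq\mathbb{F}_q^{2n}.
\]
Equip $\mathbb{F}_q^{2n}$ with the trace-symplectic form $\langle(\boldsymbol{a}|\boldsymbol{b}),(\boldsymbol{a}'|\boldsymbol{b}')\rangle_s=\mathrm{Tr}_{\mathbb{F}_q/\mathbb{F}_p}(\boldsymbol{a}\cdot\boldsymbol{b}'-\boldsymbol{a}'\cdot\boldsymbol{b})$. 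First I verify that $\mathcal{D}$ is symplectically self-orthogonal; this holds because $\boldsymbol{a}\cdot\boldsymbol{b}'=0$ and $\boldsymbol{a}'\cdot\boldsymbol{b}=0$ whenever all four vectors lie in $\mathcal{C}\subseteq\mathcal{C}^\perp$. Then I compute the symplectic dual. Since the form is nondegenerate and splits coordinatewise, $\mathcal{D}^{\perp_s}=\mathcal{C}^{\perp}\times\mathcal{C}^{\perp}$, which has size $q^{2(n-k)}$. A direct check gives one inclusion, and the reverse follows by counting.

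Next I invoke the stabilizer correspondence (Ketkar et al., \cite{ketkar2006nonbinary}). Under the map $(\boldsymbol{a}|\boldsymbol{b})\mapsto X(\boldsymbol{a})Z(\boldsymbol{b})$, a symplectically self-orthogonal $\mathcal{D}$ of size $q^{2k}$ yields an abelian stabilizer group. Its joint $+1$-eigenspace in $(\mathbb{C}^q)^{\otimes n}$ has dimension $q^n/|\mathcal{D}|^{1/2}\cdot$ correction. More precisely, the dimension is $q^{n}/q^{2k}\cdot q^{0}$ up to the standard formula $\dim=q^n/|\mathcal{D}|\cdot$; here $|\mathcal{D}|=q^{2k}$ and the code dimension is $q^{n-2k}$. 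I state this as the known theorem: an $\mathbb{F}_q$-linear symplectic self-orthogonal $\mathcal{D}\le\mathbb{F}_q^{2n}$ of dimension $m$ gives an $[[n,n-m]]_q$ code. With $m=2k$ this gives $n-2k$ logical qudits.

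Finally I compute the minimum distance. The code detects all errors outside $\mathcal{D}^{\perp_s}\setminus\mathcal{D}$, so the distance is the minimum symplectic weight over $\mathcal{D}^{\perp_s}\setminus\mathcal{D}$. Here $\mathrm{swt}(\boldsymbol{a}|\boldsymbol{b})=|\{i:(a_i,b_i)\ne(0,0)\}|$. An element $(\boldsymbol{a}|\boldsymbol{b})$ of $\mathcal{C}^\perp\times\mathcal{C}^\perp$ lies outside $\mathcal{C}\times\mathcal{C}$ exactly when one of $\boldsymbol{a},\boldsymbol{b}$ lies in $\mathcal{C}^\perp\setminus\mathcal{C}$. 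For such an element, $\mathrm{swt}\ge\max(\mathrm{wt}(\boldsymbol{a}),\mathrm{wt}(\boldsymbol{b}))\ge \mathrm{wt}(\mathcal{C}^\perp\setminus\mathcal{C})$. The bound is attained by $(\boldsymbol{a}|\boldsymbol{0})$. Hence the distance is exactly $\mathrm{wt}(\mathcal{C}^\perp\setminus\mathcal{C})$, giving an $[[n,n-2k,\mathrm{wt}(\mathcal{C}^\perp\setminus\mathcal{C})]]_q$ code.

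The main obstacle is not the linear algebra, which is routine. It is justifying the stabilizer-formalism step: the dimension count $q^{n-2k}$ and the error-detection characterization via the symplectic dual. I would cite these directly from the nonbinary stabilizer theory rather than rederive the Pauli group representation theory.
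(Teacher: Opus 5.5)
Your argument is correct. The paper gives no proof of this lemma; it simply quotes it from \cite{ashikhmin2002nonbinary}. What you have written is therefore a reconstruction of the standard derivation rather than a rival to an in-paper argument. You take the stabilizer $\mathcal{D}=\mathcal{C}\times\mathcal{C}$, which is the CSS code with both classical codes equal to $\mathcal{C}^{\perp}$. You identify its symplectic dual as $\mathcal{C}^{\perp}\times\mathcal{C}^{\perp}$, and you read off $K$ and $d$ from the stabilizer correspondence of \cite{ketkar2006nonbinary}.

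Both the bare citation and your route ultimately rest on the external stabilizer theorem. What yours buys is the short linear-algebra reduction that makes the exact distance visible. The lower bound comes from $\mathrm{swt}(\boldsymbol{a}|\boldsymbol{b})\ge\max\{\mathrm{wt}(\boldsymbol{a}),\mathrm{wt}(\boldsymbol{b})\}$, and attainment comes from $(\boldsymbol{a}|\boldsymbol{0})$. This explains why the lemma may state $\mathrm{wt}(\mathcal{C}^{\perp}\setminus\mathcal{C})$ rather than merely a bound of at least $d(\mathcal{C}^{\perp})$.

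There are two small clean-ups.

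First, your dimension paragraph contains a wrong intermediate formula, $q^{n}/|\mathcal{D}|^{1/2}$. Delete it and state directly that the stabilizer theorem gives $K=q^{n}/|\mathcal{D}|=q^{n}/q^{2k}=q^{n-2k}$.

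Second, the statement and your attainment step both tacitly require $\mathcal{C}\subsetneq\mathcal{C}^{\perp}$, i.e.\ $n>2k$. If $\mathcal{C}=\mathcal{C}^{\perp}$, the set $\mathcal{C}^{\perp}\setminus\mathcal{C}$ is empty, and the $K=1$ convention of \cite{ketkar2006nonbinary} applies instead, with distance $\mathrm{swt}(\mathcal{D})$. This is harmless here, because the paper only invokes the lemma when $n>2k$, but you should state the assumption explicitly.
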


\subsection{Construction of Self-Orthogonal TGRS Codes over \texorpdfstring{$\mathbb{F}_q$}{Fq} \texorpdfstring{$(2 \mid q )$}{(2 mid q)}}

\begin{theorem} \label{block1}
Let $s$, $h$ be positive integers and $s \mid h$, $q = 2^h$. Let $\mathbb{F}_{2^s} = \left\{ \alpha_1, \alpha_2, \cdots , \alpha_{2^s} \right\} $ be the subfield of $\mathbb{F}_q $ and $\eta_{11}, \eta_{12}, \eta_{21}, \eta_{22} \in \mathbb{F}_q $ with $\eta_{11} \eta_{22} = \eta_{12} \eta_{21} $. If $\boldsymbol{\alpha} = (\alpha_1, \alpha_2, \cdots, \alpha_{2^s}) $ and $\boldsymbol{v} = (v_1, v_2, \cdots , v_{2^s}) $ with $v_i = \prod\limits_{j=1,j \neq i}^{2^s} (\alpha_i - \alpha_j)^{- 2^{h-1}} $ for $1 \leq i \leq 2^{s}$. Set 
\[
\mathcal{S}_1 = \left\{ \sum_{i=0}^{2^{s-1} - 2} f_i x^i + \sum_{i=1}^2 f_{2^{s-1} - 4 + i} \sum_{j=1}^2 \eta_{ij} x^{2^{s-1} - 2 + j} \mid f_i \in \mathbb{F}_q \text{ for } 0 \leq i \leq 2^{s-1} - 2 \right\}.
\]
Then $\mathcal{C}_1 = ev_{\boldsymbol{\alpha},\boldsymbol{v}}(\mathcal{S}_1)$ is a $[2^{s} , 2^{s-1} - 1 , \geq 2^{s-1}]$ self-orthogonal TGRS code over $\mathbb{F}_q$.
\end{theorem}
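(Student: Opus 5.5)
The plan is to recognize the statement as a specialization of Case 3 of Theorem~\ref{block} and check its two conditions; the minimum distance will come from a direct root count.

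First, the parameters. Put $n=2^s$ and $k=2^{s-1}-1$, so $k=\frac{n-2}{2}$ with $n$ even. Comparing with the definition of $\mathcal{S}_1$, the set $\mathcal{S}_1$ in the statement is exactly the general one for this $k$: the summation index runs up to $k-1=2^{s-1}-2$, the twisted coefficients are $f_{k-3+i}$, and the monomials are $x^{k-1+j}$. Implicitly $s\ge 2$ is needed so that $k\ge 1$; I will assume this.

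Every $f\in\mathcal{S}_1$ has degree at most $k+1=2^{s-1}<n$, so $ev_{\boldsymbol{\alpha},\boldsymbol{v}}$ is injective on $\mathcal{S}_1$ and $\dim\mathcal{C}_1=k$. The same bound gives the distance. A nonzero $f\in\mathcal{S}_1$ has at most $2^{s-1}$ roots among the $\alpha_i$, and all $v_i\neq 0$. Hence every nonzero codeword has weight at least $n-2^{s-1}=2^{s-1}$, so $d\ge 2^{s-1}$.

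Next, condition (i) of Case 3, with $\lambda=1$. The elements $u_i=\prod_{j\neq i}(\alpha_i-\alpha_j)^{-1}$ are nonzero elements of $\mathbb{F}_{2^s}\subseteq\mathbb{F}_q$. Then
\[
v_i^2=u_i^{2\cdot 2^{h-1}}=u_i^{2^h}=u_i,
\]
because $x\mapsto x^{q}$ is the identity on $\mathbb{F}_q$. Each $v_i$ is nonzero, so $\boldsymbol{v}\in(\mathbb{F}_q^*)^n$ as required.

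Then condition (ii). Since $\prod_{a\in\mathbb{F}_{2^s}}(x-a)=x^{2^s}-x$ and $n=2^s\ge 4$, the coefficient of $x^{n-1}$ vanishes, so $\sigma_1=0$. The requirement $\eta_{i2}(\eta_{i2}\sigma_1-2\eta_{i1})=0$ becomes $-2\eta_{i1}\eta_{i2}=0$, which holds in characteristic $2$. The remaining requirement $\eta_{12}\eta_{22}\sigma_1=\eta_{11}\eta_{22}+\eta_{12}\eta_{21}$ reads $0=\eta_{11}\eta_{22}+\eta_{12}\eta_{21}$. Using the hypothesis $\eta_{11}\eta_{22}=\eta_{12}\eta_{21}$, the right-hand side equals $2\eta_{11}\eta_{22}=0$ in characteristic $2$.

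Case 3 of Theorem~\ref{block} then gives that $\mathcal{C}_1$ is self-orthogonal. The argument is mostly bookkeeping. The only points needing care are matching the index conventions of the displayed $\mathcal{S}_1$ with the general definition, and noting that $\sigma_1=0$ requires $s\ge 2$.
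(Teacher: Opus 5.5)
Your proposal is correct and is essentially the paper's proof. Both verify condition (i) of Case 3 of Theorem~\ref{block} with $\lambda=1$ via $v_i^2=u_i^{2^h}=u_i$, and condition (ii) via $\sigma_1=0$ (from $\prod_{a\in\mathbb{F}_{2^s}}(x-a)=x^{2^s}-x$) together with characteristic $2$; your root-count bound $d\ge 2^s-2^{s-1}$ is the same as the paper's remark that $\mathcal{C}_1$ lies in $GRS_{2^{s-1}+1}(\boldsymbol{\alpha},\boldsymbol{v})$, and the only small correction is that the twist involves $f_{k-2}$, so the literal setting needs $k\ge 2$, i.e.\ $s\ge 3$, not just $s\ge 2$.
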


\begin{proof}
    On the one hand, since 
    \[u_i = \prod\limits_{j=1,j \neq i}^{2^s} (\alpha_i - \alpha_j)^{-1} = \prod\limits_{j=1,j \neq i}^{2^s} (\alpha_i - \alpha_j)^{-2^{h}} = v_i^2 \] for $i=1,2,\cdots,2^s$, the condition (i) of Case 3 of Theorem~{\ref{block}} holds. 
    On the other hand, write $\prod\limits_{i=1}^{2^s} (x-\alpha_i) = \sum\limits_{i=0}^{2^s} \sigma_i x^{2^s - i} $. Since $\alpha_1, \alpha_2, \cdots,\alpha_{2^s} \in \mathbb{F}_{2^s} $, we have $\prod\limits_{i=1}^{2^s} (x-\alpha_i) = x^{2^s} - x $. Furthermore, we can obtain $\sigma_1 = 0$. Note that $\eta_{11} \eta_{22} = \eta_{12} \eta_{21}$, it implies that the condition (ii) of Case 3 of Theorem~{\ref{block}} holds. Thus, and by Case 3 of Theorem~{\ref{block}}, $\mathcal{C}_1 = ev_{\boldsymbol{\alpha},\boldsymbol{v}}(\mathcal{S}_1)$ is a $[2^s, 2^{s-1}-1]$ self-orthogonal TGRS code over $\mathbb{F}_{2^h}$. Obviously, $\mathcal{C}_1$ is a subcode of $GRS_{2^{s-1}+1}(\boldsymbol{\alpha},\boldsymbol{v})$. Therefore, $\mathcal{C}_1$ has the minimum distance $d \geq 2^{s-1}$. It completes the proof.
\end{proof}

\begin{proposition}
    Let notations be the same as Theorem~{\upshape\ref{block1}}, 
    \[ M(n,k,\boldsymbol{\alpha},A_1(\eta),I) = \begin{vmatrix}
        1+g_{k-1,k-1} & g_{k-1,k} \\
        g_{k,k-1} & 1+g_{kk}
    \end{vmatrix} \neq 0 .\] Then there exists a $[2^{s} , 2^{s-1} - 1 , 2^{s-1}+2]_q$ self-orthogonal MDS TGRS code.
\end{proposition}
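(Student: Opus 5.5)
The plan is to combine Theorem~\ref{block1} with Corollary~\ref{MDS_block1}. Theorem~\ref{block1} already gives self-orthogonality: with $n=2^s$, $k=2^{s-1}-1=\frac{n-2}{2}$, the choice $v_i^2=u_i$ (so $\lambda=1$), $\sigma_1=0$ and $\eta_{11}\eta_{22}=\eta_{12}\eta_{21}$ verifies Case 3 of Theorem~\ref{block}. Hence $\mathcal{C}_1=ev_{\boldsymbol{\alpha},\boldsymbol{v}}(\mathcal{S}_1)$ is a $[2^s,2^{s-1}-1]$ self-orthogonal code over $\mathbb{F}_q$. Only the minimum distance needs to be improved, from $d\ge 2^{s-1}$ to the Singleton value $n-k+1=2^s-2^{s-1}+1+1=2^{s-1}+2$.

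\textbf{Step 1: the MDS criterion.} Corollary~\ref{MDS_block1} needs $3\le k<n$, so I assume $s\ge 3$. This gives $k=2^{s-1}-1\ge 3$. The case $s\le 2$ is degenerate, with $k\le 1$, and would be excluded or treated separately. In $\mathcal{S}_1$ only the last two coefficient rows of $A_1(\eta)$ are nonzero, and they are supported in the first two columns. This is exactly the shape assumed in Corollary~\ref{MDS_block1}, which follows from Lemma~\ref{MDS}. The first $k-2$ rows give $F_{mt}=0$, so $g_{mt}=0$ for $m\le k-2$. The determinant $M(n,k,\boldsymbol{\alpha},A_1(\eta),I)$ is then block triangular with identity top-left block, and it reduces to the displayed $2\times 2$ determinant.

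\textbf{Step 2: conclude MDS.} The hypothesis states that this $2\times2$ determinant is nonzero, understood, as in Lemma~\ref{MDS}, for every $k$-subset $I$ of $\{1,\dots,2^s\}$. Corollary~\ref{MDS_block1} then says $\mathcal{C}_1$ is MDS, so $d=n-k+1=2^{s-1}+2$. Together with self-orthogonality, $\mathcal{C}_1$ is a $[2^s,2^{s-1}-1,2^{s-1}+2]_q$ self-orthogonal MDS TGRS code.

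\textbf{Main obstacle.} The deduction itself is immediate. The real content is that the stated nonvanishing condition, for all $I$, is compatible with $\eta_{11}\eta_{22}=\eta_{12}\eta_{21}$ and $\sigma_1=0$, since that compatibility is what makes the existence claim nonvacuous. I would check this by taking a rank-one choice such as $\Gamma=\begin{bmatrix}0&0\\ \eta_1&\eta_2\end{bmatrix}$, for which the simpler criterion $\eta_1c_1+\eta_2(c_2-c_1^2)\ne1$ applies. Each $k$-subset $I$ forbids $\eta_2=0$ together with $\eta_1c_1=1$, or forbids one affine line in $(\eta_1,\eta_2)\in\mathbb{F}_q^2$. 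That gives at most $\binom{2^s}{k}$ forbidden lines, so a counting argument yields a valid pair once $q=2^h$ is large compared with $\binom{2^s}{2^{s-1}-1}$. If the result is read strictly as conditional on the hypothesis, this step can be omitted.
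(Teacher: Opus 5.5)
Your proposal is correct and follows the same route as the paper, which gives no separate proof: self-orthogonality comes from Theorem~\ref{block1} (via Case~3 of Theorem~\ref{block}), and the MDS property comes from Corollary~\ref{MDS_block1}, with the determinant condition required for every $k$-subset $I$. Your extra points are sound but not needed for the conditional statement: $s\ge 3$ is required so that $k=2^{s-1}-1\ge 3$, and your counting argument shows the hypothesis can be met, where for each $I$ the forbidden set is simply the line $\eta_1c_1+\eta_2(c_2-c_1^2)=1$ or is empty.
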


By Lemma~{\ref{QECC}} and Theorem~{\ref{block1}}, we get the following proposition.

\begin{proposition}\label{QECC_block1}
    Let notations be the same as Theorem~{\upshape\ref{block1}}. Then there exists a $[[2^{s} , 2 , wt(\mathcal{C}_1^{\perp} \setminus \mathcal{C}_1)]]_q$ quantum stabilizer code.
\end{proposition}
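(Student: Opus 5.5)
The plan is to apply Lemma~\ref{QECC} directly to the code produced by Theorem~\ref{block1}. Theorem~\ref{block1} gives a linear code $\mathcal{C}_1 = ev_{\boldsymbol{\alpha},\boldsymbol{v}}(\mathcal{S}_1)$ over $\mathbb{F}_q$ with length $n = 2^s$ and dimension $k = 2^{s-1}-1$. It also shows, through Case 3 of Theorem~\ref{block}, that $\mathcal{C}_1 \subseteq \mathcal{C}_1^{\perp}$. So the hypotheses of Lemma~\ref{QECC} hold with this $n$ and $k$. The lemma then yields a $[[n, n-2k, wt(\mathcal{C}_1^{\perp}\setminus \mathcal{C}_1)]]_q$ quantum stabilizer code.

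The remaining step is the parameter arithmetic: $n - 2k = 2^s - 2(2^{s-1}-1) = 2$. This gives exactly the claimed $[[2^s, 2, wt(\mathcal{C}_1^{\perp}\setminus\mathcal{C}_1)]]_q$.

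The only point I would double-check is that the dimension really is $2^{s-1}-1$. The polynomials in $\mathcal{S}_1$ have degree at most $2^{s-1}$, which is less than $n = 2^s$. The evaluation points are $2^s$ distinct elements and all $v_i \neq 0$, so $ev_{\boldsymbol{\alpha},\boldsymbol{v}}$ is injective on $\mathcal{S}_1$. Hence $\dim \mathcal{C}_1 = \dim \mathcal{S}_1 = k$, and this is already implicit in Theorem~\ref{block1}. Nothing here is a real obstacle. The proposition is a direct corollary, and the minimum distance is left in the implicit form $wt(\mathcal{C}_1^{\perp}\setminus\mathcal{C}_1)$ rather than computed.

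If one wanted an explicit bound, one could use $\mathcal{C}_1^{\perp}$. When $\mathcal{C}_1$ is MDS (the preceding proposition), its dual is also MDS, so $wt(\mathcal{C}_1^{\perp}\setminus\mathcal{C}_1) \geq d(\mathcal{C}_1^\perp) = k+1 = 2^{s-1}$. This is not needed for the statement as written.
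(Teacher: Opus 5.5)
Your proposal is correct and matches the paper's argument exactly: the paper obtains the proposition by applying Lemma~\ref{QECC} to the self-orthogonal $[2^s, 2^{s-1}-1]$ code $\mathcal{C}_1$ from Theorem~\ref{block1}, and $n-2k = 2^s - 2(2^{s-1}-1) = 2$. Your extra injectivity check confirming $\dim \mathcal{C}_1 = 2^{s-1}-1$ is sound, though the paper takes it for granted.
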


\begin{example}
    Let $s=3$, $h=3$ and $q = 2^3$. Let $\mathbb{F}_{2^3}^{*} = \langle \beta \rangle $, where the minimal polynomial of $\beta$ is $x^3 + x + 1$. Then 
    \[ \boldsymbol{\alpha} = (0,1,\beta,\beta^2,\beta^3,\beta^4,\beta^5,\beta^6) \] and
    \[ \boldsymbol{v} = (1,1,1,1,1,1,1,1). \] 
    Let $\eta_{11} = \eta_{22} = \beta^4, \eta_{12} = 1, \eta_{21} = \beta $, thus we have $ \eta_{11} \eta_{22} = \eta_{12} \eta_{21} $. Then 
    \[
         \mathcal{S}_1 = \left\{ \sum_{i=0}^{2} f_i x^i + \sum_{i=1}^2 f_{i} \sum_{j=1}^2 \eta_{ij} x^{2 + j} \mid f_i \in \mathbb{F}_q \text{ for } 0 \leq i \leq 2 \right\}.
    \]
    According to Theorem~{\ref{block1}}, $\mathcal{C}_1 = ev_{\boldsymbol{\alpha},\boldsymbol{v}}(\mathcal{S}_1)$ is an $[8,3, \geq 4]_{2^3}$ self-orthogonal TGRS code. By Magma, $\mathcal{C}_1 = ev_{\boldsymbol{\alpha},\boldsymbol{v}}(\mathcal{S}_1)$ is an $[8,3, 5]_{2^3}$ self-orthogonal AMDS TGRS code, and the parameters of $\mathcal{C}_1^{\perp}$ are $[8,5,2]_{2^3}$. Applying Proposition~{\ref{QECC_block1}}, we construct a quantum error-correcting code with parameters $[[8,2,2]]_{2^3}$.
\end{example}

\begin{theorem}\label{block2}
    Let $s$, $h$, $t$ be positive integers and $t \mid h$, $s < h$, $q = 2^h $. Assume that $\mathbb{F}_{q}$ is the splitting field of $g(x) = x^{2^s+1} + ax^{2^s} + bx^{2^s-1} +c \in \mathbb{F}_{2^t}[x] $ with $b,c \neq 0$, $\alpha_1, \alpha_2, \cdots, \alpha_{2^s+1}$ are all roots of $g(x)$ and $v_i = (\alpha_i^{2^{s-1}} + b^{2^{h-1}} \alpha_i^{2^{s-1}-1})^{-1} $, $i=1,2,\cdots,2^s+1$. Let $\boldsymbol{\alpha} = (\alpha_1, \alpha_2, \cdots, \alpha_{2^s+1})$, $\boldsymbol{v} = (v_1, v_2, \cdots, v_{2^s+1})$. Let $\eta_{2} \in \mathbb{F}_{q}^{*} $ and $\eta_{1} = \eta_{2}(b^{2^{h-1}}+a) $. Set
    \[
    \mathcal{S}_1 = \left\{ \sum_{i=0}^{2^{s-1}-1} f_i x^i + f_{2^{s-1}-1} \eta_2 \big( (b^{2^{h-1}}+a) x^{2^{s-1}} + x^{2^{s-1}+1} \big) \mid f_i \in \mathbb{F}_q \text{ for } 0 \leq i \leq 2^{s-1}-1 \right\}.
    \]
    Then $\mathcal{C}_1 = ev_{\boldsymbol{\alpha},\boldsymbol{v}}(\mathcal{S}_1)$ is a $[2^s+1, 2^{s-1}, \geq 2^{s-1}]$ self-orthogonal TGRS code over $\mathbb{F}_q$.
\end{theorem}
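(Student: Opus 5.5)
The plan is to apply Case 4 of Theorem~\ref{block}. Here $n=2^s+1$ is odd and $k=2^{s-1}=\frac{n-1}{2}$, so that case is the relevant one. Comparing $\mathcal{S}_1$ with the general form, the twist matrix is
\[
\Gamma=\begin{bmatrix}0&0\\ \eta_1&\eta_2\end{bmatrix},\qquad \eta_{11}=\eta_{12}=0,\ \eta_{21}=\eta_1,\ \eta_{22}=\eta_2 .
\]
We have $\prod_{i}(x-\alpha_i)=g(x)$. Reading off coefficients in characteristic $2$, with $x^{n-1}=x^{2^s}$ and $x^{n-2}=x^{2^s-1}$, gives $\sigma_1=a$ and $\sigma_2=b$. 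I will verify conditions (i) and (ii) of Case 4 and then bound the minimum distance.

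For condition (i), I use $u_i=g'(\alpha_i)^{-1}$. In characteristic $2$,
\[
g'(x)=(2^s+1)x^{2^s}+2^s a x^{2^s-1}+(2^s-1)b\,x^{2^s-2}=x^{2^s}+b\,x^{2^s-2}.
\]
Set $\beta=b^{2^{h-1}}$. Then $\beta^2=b^{2^h}=b$, so
\[
v_i^{-2}=\bigl(\alpha_i^{2^{s-1}}+\beta\alpha_i^{2^{s-1}-1}\bigr)^2=\alpha_i^{2^s}+b\,\alpha_i^{2^s-2}=g'(\alpha_i)=u_i^{-1}.
\]
Hence $v_i^2=u_i$ and (i) holds with $\lambda=1$. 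Along the way I must check that the $\alpha_i$ are distinct (implicit in the setup) and that $v_i$ is well defined, i.e. $g'(\alpha_i)\neq 0$. Since $c\neq0$ we have $\alpha_i\neq0$, so it remains to rule out $\alpha_i^2=b$, i.e. $\alpha_i=\beta$. I expect this to follow from separability of $g$ over its splitting field, which is needed for the $2^s+1$ distinct evaluation points anyway. This is the one point where some care, or an appeal to the hypotheses, is needed.

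For condition (ii), I substitute $\eta_{11}=\eta_{12}=0$ into $G_1'$, $A'$ and $H_1'$. The first rows of $G_1'$ and of $A'H_1'$ both vanish, as does the second column of $A'H_1'$. The only nontrivial equation is
\[
-\bigl(1+\eta_2(\sigma_1^2-\sigma_2)-\eta_1\sigma_1\bigr)\eta_2-(\eta_1-\eta_2\sigma_1)\eta_1=\eta_2 .
\]
Using characteristic $2$, $\sigma_1=a$, $\sigma_2=b=\beta^2$ and $\eta_1=\eta_2(\beta+a)$, we get $\eta_1-\eta_2\sigma_1=\eta_2\beta$ and $1+\eta_2(a^2+\beta^2)+\eta_1a=1+\eta_2(\beta^2+\beta a)$. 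The left side then becomes
\[
\eta_2+\eta_2^2(\beta^2+\beta a)+\eta_2^2\beta(\beta+a)=\eta_2,
\]
so (ii) holds. Case 4 of Theorem~\ref{block} now gives that $\mathcal{C}_1$ is self-orthogonal.

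Finally, every polynomial in $\mathcal{S}_1$ has degree at most $k+1=2^{s-1}+1$. Hence $\mathcal{C}_1$ is a subcode of the GRS code of dimension $k+2$, and $d\ge n-(k+2)+1=2^{s-1}$. The dimension is $2^{s-1}$ because the map $f\mapsto ev_{\boldsymbol{\alpha},\boldsymbol{v}}(f)$ is injective on polynomials of degree below $n$. I expect the algebraic checks to be routine. The only genuine subtlety is the well-definedness and distinctness issue in condition (i).
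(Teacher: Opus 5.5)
Your proposal is correct and follows the paper's proof essentially step for step. Like the paper, you apply Case 4 of Theorem~\ref{block} with $\eta_{11}=\eta_{12}=0$, obtain $v_i^2=u_i$ from $g'(x)=x^{2^s}+bx^{2^s-2}$ and the square root $b^{2^{h-1}}$ of $b$, check $G_1'=A'H_1'$ using $\sigma_1=a$ and $\sigma_2=b$, and bound $d$ through the GRS code of dimension $k+2$. Your caution about separability is justified: since $g'(x)=x^{2^s-2}\bigl(x+b^{2^{h-1}}\bigr)^2$ and $g\bigl(b^{2^{h-1}}\bigr)=ab^{2^{s-1}}+c$, the paper's unproved claim $\gcd(g,g')=1$ in fact needs $c\neq ab^{2^{s-1}}$, so taking distinctness of the $\alpha_i$ as part of the hypotheses, as you do, is the right way to handle that point.
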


\begin{proof}
    Since $g'(x)=x^{2^{s}} + bx^{2^s-2} $, we have $\gcd(g(x),g'(x)) = 1 $, then $g(x)$ has $2^s+1$ distinct roots in $\mathbb{F}_q$, namely, $\alpha_i \neq \alpha_j$ $(i \neq j)$. On the one hand, 
    \[u_i = (g'(\alpha_i))^{-1} = ( \alpha_i^{2^s} + b \alpha_i^{2^s-2} )^{-1} = \big( ( \alpha_i^{2^{s-1}} + b^{2^{h-1}} \alpha_i^{2^{s-1}-1} )^{-1} \big)^2 = v_i^2 \] for $1 \leq i \leq 2^s+1$, so the condition (i) of Case 4 of Theorem~{\ref{block}} holds.

    On the other hand, write $g(x) = \prod\limits_{i=1}^{2^s+1} (x-\alpha_i) = \sum\limits_{i=0}^{2^s+1} \sigma_i x^{2^s+1-i} $, we have $\sigma_1 = a$, $\sigma_2 = b$. Hence, 
    \[G_1^{\prime} = \begin{bmatrix}
        0 & 0 \\
        \eta_2 & 0
    \end{bmatrix},
    A^{\prime} = \begin{bmatrix}
        0 & 0 \\
        1 + \eta_2 (a^2+b) + \eta_1 a & \eta_1 + \eta_2 a
    \end{bmatrix}, 
    H_1^{\prime} = \begin{bmatrix}
        \eta_2 & 0 \\
        \eta_1 & 0
    \end{bmatrix}. \]
    Given that $\eta_{1} = \eta_{2}(b^{2^{h-1}}+a) $, it follows that $G_1^{\prime} = A^{\prime} H_1^{\prime} $. That is, the condition (ii) of Case 4 of Theorem~{\ref{block}} holds. Thus, and by Case 4 of Theorem~{\ref{block}}, $\mathcal{C}_1 = ev_{\boldsymbol{\alpha},\boldsymbol{v}}(\mathcal{S}_1)$ is a $[2^s+1, 2^{s-1}]_q$ self-orthogonal TGRS code. Obviously, $\mathcal{C}_1$ is a subcode of $GRS_{2^{s-1}+2}(\boldsymbol{\alpha},\boldsymbol{v})$. Therefore, $\mathcal{C}_1$ has the minimum distance $d \geq 2^{s-1}$. It completes the proof.
\end{proof}

\begin{proposition}\label{MDS_block2}
    Let notations be the same as Theorem~{\upshape\ref{block2}}, $ \eta_1 c_1 + \eta_2 (c_2 - c_1^2) \neq 1 $. Then there exists a $[2^{s}+1 , 2^{s-1} , 2^{s-1}+2]_q$ self-orthogonal MDS TGRS code.
\end{proposition}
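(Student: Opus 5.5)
The argument has two parts: self-orthogonality comes from Theorem~\ref{block2}, and the MDS property will come from the criterion for a single twisted row. Theorem~\ref{block2} already shows that $\mathcal{C}_1=ev_{\boldsymbol{\alpha},\boldsymbol{v}}(\mathcal{S}_1)$ is a $[2^s+1,2^{s-1}]_q$ self-orthogonal TGRS code. So it only remains to show that, under the extra hypothesis $\eta_1 c_1+\eta_2(c_2-c_1^2)\neq 1$ for every $k$-subset $I$, the code is MDS. The minimum distance will then be $n-k+1=(2^s+1)-2^{s-1}+1=2^{s-1}+2$, as claimed.

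First I will identify the twist matrix. In Theorem~\ref{block2}, only the last message coefficient $f_{2^{s-1}-1}$ (that is, $f_{k-1}$ with $k=2^{s-1}$) is twisted. It receives $\eta_1 x^{k}+\eta_2 x^{k+1}$ with $\eta_1=\eta_2(b^{2^{h-1}}+a)$. Hence $A_1(\eta)$ has every row zero except the last, which equals $(\eta_1,\eta_2,0,\dots,0)$. In the notation of Theorem~\ref{block}, this is $\eta_{11}=\eta_{12}=0$, $\eta_{21}=\eta_1$, $\eta_{22}=\eta_2$. This is exactly the shape in the second corollary following Lemma~\ref{MDS}. That corollary states that $\mathcal{C}_1$ is MDS if and only if $\eta_1c_1+\eta_2(c_2-c_1^2)\neq 1$, where $c_1,c_2$ are the coefficients of $G(x)=\prod_{i\in I}(x-\alpha_i)$, and the condition is required for every $k$-subset $I\subseteq\{1,\dots,2^s+1\}$.

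To apply that corollary I need $3\le k<n$, that is, $2^{s-1}\ge 3$, which means $s\ge 3$. For small $s$ I would treat the case directly, or note that the proposition is intended for $s\ge 3$. The case $s=2$ gives $k=2$, and there the $2\times 2$ determinant from Lemma~\ref{MDS} is still the one entry $1+g_{kk}$, so the same expression should appear. If I want full rigor, I would re-derive that $1\times1$ determinant quickly by computing $g_{kk}=-\boldsymbol{\gamma}F_{kk}(A_I)\boldsymbol{\gamma}^T$ with the companion matrix $A_I$. The needed fact is that the bottom-right entries of $A_I$ and $A_I^2$ are $-c_1$ and $c_1^2-c_2$.

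Combining the two parts gives the result. The code is self-orthogonal by Theorem~\ref{block2}, it is MDS by the corollary under the stated hypothesis, and its parameters are $[2^s+1,2^{s-1},2^{s-1}+2]_q$. The main obstacle is making sure the hypothesis "$\eta_1c_1+\eta_2(c_2-c_1^2)\neq1$" is read as holding for all $k$-subsets $I$, and checking the range $k\ge 3$ of Lemma~\ref{MDS}. Existence, if it is meant non-vacuously, would be backed by an explicit example, computed for instance in Magma, in which some $\eta_2$ avoids the finitely many forbidden values $\eta_2\big((b^{2^{h-1}}+a)c_1+c_2-c_1^2\big)=1$. Since $q$ is large relative to the number $\binom{2^s+1}{2^{s-1}}$ of subsets, a counting argument over $\eta_2\in\mathbb{F}_q^*$ would suffice.
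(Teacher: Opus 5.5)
Your proposal is correct and is the argument the paper intends. Theorem~\ref{block2} gives self-orthogonality, and the single-twist corollary of Lemma~\ref{MDS}, with last row $(\eta_1,\eta_2,0,\dots,0)$ and the condition imposed for every $k$-subset $I$, gives the MDS property, so $d=n-k+1=2^{s-1}+2$. Your side remarks are not needed for this conditional statement, though the counting claim is unsupported, since nothing forces $q-1>\binom{2^s+1}{2^{s-1}}$. The $k=2$ case is best handled directly rather than through Lemma~\ref{MDS}, which is stated only for $k\ge 3$: a nonzero $f\in\mathcal{S}_1$ vanishing on $\{\alpha_i\}_{i\in I}$ must equal $G(x)(\lambda x+\mu)$, and matching the coefficients of $x^{k+1},x^k,x^{k-1}$ yields precisely $\eta_1c_1+\eta_2(c_2-c_1^2)=1$.
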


By Lemma~{\ref{QECC}} and Theorem~{\ref{block2}}, we get the following proposition.

\begin{proposition}\label{QECC_block2}
    Let notations be the same as Theorem~{\upshape\ref{block2}}. Then there exists a $[[2^{s}+1 , 1 , wt(\mathcal{C}_1^{\perp} \setminus \mathcal{C}_1)]]_q$ quantum stabilizer code.
\end{proposition}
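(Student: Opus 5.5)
This follows by combining Theorem~\ref{block2} with Lemma~\ref{QECC}, as for Proposition~\ref{QECC_block1}.

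\textbf{Self-orthogonality and parameters.} By Theorem~\ref{block2}, $\mathcal{C}_1=ev_{\boldsymbol{\alpha},\boldsymbol{v}}(\mathcal{S}_1)$ is a self-orthogonal code over $\mathbb{F}_q$. Its length is $n=2^s+1$. Its dimension is $k=2^{s-1}$, since $\mathcal{S}_1$ is spanned by $2^{s-1}$ polynomials of distinct leading behaviour. Evaluation at $2^s+1$ distinct points is injective on polynomials of degree at most $2^{s-1}+1<2^s+1$, so no dimension is lost. This uses $s\ge 1$, so that $2^{s-1}+1\le 2^s$.

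\textbf{Applying Lemma~\ref{QECC}.} Lemma~\ref{QECC} applied to the $[2^s+1,2^{s-1}]_q$ code $\mathcal{C}_1\subseteq\mathcal{C}_1^{\perp}$ gives a quantum stabilizer code with parameters
\[
[[\,n,\; n-2k,\; wt(\mathcal{C}_1^{\perp}\setminus\mathcal{C}_1)\,]]_q .
\]
Here
\[
n-2k=2^s+1-2\cdot 2^{s-1}=1 ,
\]
which is exactly the claimed $[[2^s+1,1,wt(\mathcal{C}_1^{\perp}\setminus\mathcal{C}_1)]]_q$.

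\textbf{Potential obstacle.} The only point needing care is that $\mathcal{C}_1^{\perp}\setminus\mathcal{C}_1$ is nonempty, so the distance is well defined. This holds because $\dim\mathcal{C}_1^{\perp}=2^{s-1}+1>\dim\mathcal{C}_1$. With that, the argument is a direct substitution and needs nothing beyond the two cited results.
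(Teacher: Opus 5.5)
Your proposal is correct and matches the paper's argument: the paper also obtains this proposition by applying Lemma~\ref{QECC} to the $[2^s+1,2^{s-1}]_q$ self-orthogonal code of Theorem~\ref{block2}, which gives $n-2k=1$. Your extra checks of the dimension and of the nonemptiness of $\mathcal{C}_1^{\perp}\setminus\mathcal{C}_1$ are sound and spell out what the paper leaves implicit.
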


\begin{example}
    Let $s=3$, $t=2$ and $\mathbb{F}_{2^2} = \mathbb{F}_2(\omega)$ with $\omega^2 + \omega + 1 = 0$. Let $g(x) = x^9 + \omega x^7 + \omega^2 \in \mathbb{F}_{2^2}[x] $, then by computer, the splitting field of $g(x)$ over $\mathbb{F}_{2^2}$ is $\mathbb{F}_{2^{16}} $ and $\mathbb{F}_{2^{16}}^{*} = \langle \beta \rangle $, where the minimal polynomial of $\beta$ is $ x^{16} + x^5 + x^3 + x^2 + 1 $. Then the roots of $g(x)$ are $\left\{1,\beta^{18719},\beta^{62609},\beta^{18386},\beta^{53831},\beta^{32036},\beta^{37364},\beta^{9341},\beta^{8009} \right\} $.
    Set 
    \[ \boldsymbol{\alpha} = (1,\beta^{18719},\beta^{62609},\beta^{18386},\beta^{53831},\beta^{32036},\beta^{37364},\beta^{9341},\beta^{8009})   \]  
    and
    \[ \boldsymbol{v} =(\beta^{43690},\beta^{20282},\beta^{42227},\beta^{52883},\beta^{37838},\beta^{59708},\beta^{62372},\beta^{15593},\beta^{14927}) . \]
    Let $\eta_2 = \beta^2 $, $\eta_1 = \beta^{43692} $, then \[
    \mathcal{S}_1 = \left\{ \sum_{i=0}^{3} f_i x^i + \beta^2 f_3(\omega^2 x^4 + x^5) \mid f_i \in \mathbb{F}_{2^{16}} \text{ for } 0 \leq i \leq 3 \right\}.
    \]
    According to Theorem~{\ref{block2}} and Proposition~{\ref{MDS_block2}}, $\mathcal{C}_1 = ev_{\boldsymbol{\alpha},\boldsymbol{v}}(\mathcal{S}_1)$ is a $[9,4,6]_{2^{16}}$ self-orthogonal MDS TGRS code, and the parameters of $\mathcal{C}_1^{\perp}$ are $[9,5,5]_{2^{16}}$. Applying Proposition~{\ref{QECC_block2}}, we construct a quantum error-correcting code with parameters $[[9,1,5]]_{2^5}$, which are quantum MDS codes achieving the quantum Singleton bound.
\end{example}

\begin{theorem}\label{block3}
    Let $s$, $h$, $t$ be positive integers and $t \mid h$, $s \leq h$, $q = 2^h $. Assume that $\mathbb{F}_{q}$ is the splitting field of $g(x) = x^{2^s} + ax^{2^s-1} + bx^{2^s-2} + a^3 x^{2^s-3} + c \in \mathbb{F}_{2^t}[x] $ with $a,b,c \neq 0$, $\alpha_1, \alpha_2, \cdots, \alpha_{2^s}$ are all roots of $g(x)$ and $v_i = (a^{2^{h-1}} \alpha_i^{2^{s-1}-1} + a^{3 \cdot 2^{h-1}} \alpha_i^{2^{s-1}-2})^{-1} $, $i=1,2,\cdots,2^s$. Let $\boldsymbol{\alpha} = (\alpha_1, \alpha_2, \cdots, \alpha_{2^s})$, $\boldsymbol{v} = (v_1, v_2, \cdots, v_{2^s})$. Let $\eta_{1} \in \mathbb{F}_{q}^{*} $ and $\eta_{2} = a \eta_{1} $. Set
    \[
    \mathcal{S}_1 = \left\{ \sum_{i=0}^{2^{s-1}-1} f_i x^i + (f_{2^{s-1}-2} + a f_{2^{s-1}-1}) \eta_1 x^{2^{s-1}+1} \mid f_i \in \mathbb{F}_q \text{ for } 0 \leq i \leq 2^{s-1}-1 \right\}.
    \]
    Then $\mathcal{C}_1 = ev_{\boldsymbol{\alpha},\boldsymbol{v}}(\mathcal{S}_1)$ is a $[2^s, 2^{s-1}, \geq 2^{s-1}-1]$ self-dual TGRS code over $\mathbb{F}_q$.
\end{theorem}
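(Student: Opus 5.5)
The plan is to apply Case 5 of Theorem~\ref{block} with $n=2^s$ and $k=2^{s-1}$, in the same way Theorems~\ref{block1} and \ref{block2} applied Cases 3 and 4. First I identify the twist matrix. In $\mathcal{S}_1$ the coefficient $f_{k-2}$ carries $\eta_1x^{k+1}$ and $f_{k-1}$ carries $a\eta_1x^{k+1}$. So
\[
\Gamma=\begin{bmatrix}0&\eta_1\\ 0&a\eta_1\end{bmatrix},
\]
that is, $\eta_{11}=\eta_{21}=0$, $\eta_{12}=\eta_1$ and $\eta_{22}=a\eta_1$. Writing $g(x)=\sum_i\sigma_ix^{2^s-i}$, I read off $\sigma_1=a$, $\sigma_2=b$ and $\sigma_3=a^3$.

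Next I check distinctness of the roots and condition (i). In characteristic $2$ only the odd-exponent terms survive differentiation, so
\[
g'(x)=ax^{2^s-2}+a^3x^{2^s-4}=ax^{2^s-4}(x+a)^2 .
\]
Since $g(0)=c\neq 0$, the root $0$ of $g'$ is not a root of $g$. The other possible common root is $a$, and $g(a)=a^{2^s}+ba^{2^s-2}+c$. This is the one point I expect to need care: the statement implicitly assumes $g$ has $2^s$ distinct roots (they are used as evaluation points), and I would either note that this forces $g(a)\neq0$ or state it as the tacit hypothesis. Granting distinctness, $u_i=g'(\alpha_i)^{-1}$. Using $(a^{2^{h-1}})^2=a^{2^h}=a$, we get
\[
v_i^2=\bigl(a\alpha_i^{2^s-2}+a^3\alpha_i^{2^s-4}\bigr)^{-1}=u_i ,
\]
so condition (i) holds with $\lambda=1$.

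For condition (ii) I substitute into the matrices of Case 5, working modulo $2$ so that signs disappear. Then $G_1''$ has rows $(a\eta_1,\eta_1)$ and $(a^2\eta_1,a\eta_1)$. The matrix $H_1'$ has rows $(a\eta_1,\eta_1)$ and $(0,0)$. In $A''$, the key simplification is $2\sigma_1\sigma_2-\sigma_1^3-\sigma_3=a^3+a^3=0$. This gives a first column $(1,a)^T$, with second-column entries $\eta_1(a^2+b)$ and $1+a\eta_1(a^2+b)$. Because the second row of $H_1'$ is zero, $A''H_1'$ has rows $1\cdot(a\eta_1,\eta_1)$ and $a\cdot(a\eta_1,\eta_1)$, which equals $G_1''$. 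Hence Case 5 of Theorem~\ref{block} gives that $\mathcal{C}_1$ is a self-dual $[2^s,2^{s-1}]_q$ code.

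Finally, for the distance bound I note that every $f\in\mathcal{S}_1$ has degree at most $k+1$. So $\mathcal{C}_1$ is a subcode of $GRS_{k+2}(\boldsymbol{\alpha},\boldsymbol{v})$, which is MDS with minimum distance $n-k-1=2^{s-1}-1$. Therefore $d(\mathcal{C}_1)\ge 2^{s-1}-1$. The only genuinely delicate step is the distinctness and separability point above; everything else is direct substitution.
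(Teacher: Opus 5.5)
Your proposal is correct and follows the paper's proof essentially step for step. It uses the same identification $\eta_{11}=\eta_{21}=0$, $\eta_{12}=\eta_1$, $\eta_{22}=a\eta_1$ with $\sigma_1=a,\sigma_2=b,\sigma_3=a^3$, the same check of condition (i) via $u_i=g'(\alpha_i)^{-1}=v_i^2$, the same verification $G_1''=A''H_1'$ in Case 5 of Theorem~\ref{block}, and the same subcode-of-$GRS_{k+2}$ distance bound. Your separability caveat is well founded: the paper simply asserts $\gcd(g(x),g'(x))=1$, whereas your factorization $g'(x)=ax^{2^s-4}(x+a)^2$ shows this holds exactly when $g(a)=a^{2^s}+ba^{2^s-2}+c\neq 0$, which therefore has to be taken as an implicit hypothesis.
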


\begin{proof}
    Since $g'(x) = ax^{2^{s}-2} + a^3 x^{2^s-4} $, we have $\gcd(g(x),g'(x)) = 1 $, then $g(x)$ has $2^s$ distinct roots in $\mathbb{F}_q$, namely, $\alpha_i \neq \alpha_j$ $(i \neq j)$. On the one hand,
    \[u_i = (g'(\alpha_i))^{-1} = ( a \alpha_i^{2^s-2} + a^{3} \alpha_i^{2^s-4} )^{-1} = \big( (a^{2^{h-1}} \alpha_i^{2^{s-1}-1} + a^{3 \cdot 2^{h-1}} \alpha_i^{2^{s-1}-2})^{-1} \big)^{2} = v_i^2 \] for $1 \leq i \leq 2^s$, so the condition (i) of Case 5 of Theorem~{\ref{block}} holds.

    On the other hand, write $g(x) = \prod\limits_{i=1}^{2^s} (x-\alpha_i) = \sum\limits_{i=0}^{2^s} \sigma_i x^{2^s-i} $, we have $\sigma_1 = a$, $\sigma_2 = b$ and $\sigma_3 = a^3$. Hence, 
    \[G_1^{\prime\prime} = \begin{bmatrix}
        \eta_1 a & \eta_1 \\
        \eta_2 a & \eta_2
    \end{bmatrix},
    A^{\prime\prime} = \begin{bmatrix}
        1 & \eta_1(a^2+b) \\
        a & 1 + \eta_2(a^2+b)
    \end{bmatrix}, 
    H_1^{\prime} = \begin{bmatrix}
        \eta_2 & \eta_1 \\
        0 & 0
    \end{bmatrix}. \]
    Given that $\eta_{2} = a \eta_{1} $, it follows that $G_1^{\prime\prime} = A^{\prime\prime} H_1^{\prime} $. That is, the condition (ii) of Case 5 of Theorem~{\ref{block}} holds. Thus, and by Case 5 of Theorem~{\ref{block}}, $\mathcal{C}_1 = ev_{\boldsymbol{\alpha},\boldsymbol{v}}(\mathcal{S}_1)$ is a $[2^s, 2^{s-1}]_q$ self-orthogonal TGRS code. Obviously, $\mathcal{C}_1$ is a subcode of $GRS_{2^{s-1}+2}(\boldsymbol{\alpha},\boldsymbol{v})$. Therefore, $\mathcal{C}_1$ has the minimum distance $d \geq 2^{s-1}-1$. It completes the proof.
\end{proof}

\begin{proposition}\label{MDS_block3}
    Let notations be the same as Theorem~{\upshape\ref{block3}}, $ 1 + \eta_1 (c_1 c_2 - c_3) + \eta_2 (c_1^2 - c_2) + \eta_1 \eta_2 c_1 c_2 (c_1^2 + c_2 - \eta_1 c_2 - 1) \neq 0 $. Then there exists a $[2^{s} , 2^{s-1} , 2^{s-1}+1]_q$ self-dual MDS TGRS code.
\end{proposition}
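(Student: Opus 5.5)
The statement combines Theorem~\ref{block3} with the MDS criterion. The plan is to feed the parameters of Theorem~\ref{block3} into the corollary to Lemma~\ref{MDS} for the twist matrix with only the $(k-1,2)$ and $(k,2)$ entries nonzero.

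First I would rewrite $\mathcal{S}_1$ of Theorem~\ref{block3} in the form of that corollary. Put $k=2^{s-1}$. The twisted part is $(f_{k-2}+af_{k-1})\eta_1x^{k+1} = f_{k-2}\eta_1x^{k+1}+f_{k-1}\eta_2x^{k+1}$, using $\eta_2=a\eta_1$. The exponent $k+1=k-1+j$ with $j=2$, so the twist matrix $A_1(\eta)$ has $\eta_1$ in position $(k-1,2)$, $\eta_2$ in position $(k,2)$, and zeros elsewhere. This is exactly the shape in the third corollary after Lemma~\ref{MDS}.

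The corollary requires $3\le k<n$, i.e. $s\ge 3$. For each $k$-subset $I$ of $\{1,\dots,n\}$, let $c_1,c_2,c_3$ be the coefficients of $\prod_{i\in I}(x-\alpha_i)$. The corollary then says $\mathcal{C}_1$ is MDS if and only if $1+\eta_1(c_1c_2-c_3)+\eta_2(c_1^2-c_2)+\eta_1\eta_2c_1c_2(c_1^2+c_2-\eta_1c_2-1)\neq0$ for every such $I$. The hypothesis of the proposition is precisely this inequality, read as holding for all $I$. Hence $\mathcal{C}_1$ is MDS.

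Theorem~\ref{block3} already gives that $\mathcal{C}_1$ is self-dual of length $2^s$ and dimension $2^{s-1}$. MDS then forces $d=n-k+1=2^{s-1}+1$, which gives the claimed $[2^s,2^{s-1},2^{s-1}+1]_q$ self-dual MDS TGRS code.

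The main obstacle is not a calculation but checking that the hypotheses line up. Two points need care. The displayed condition in the proposition must be read as quantified over all $k$-subsets $I$, since the $c_j$ depend on $I$. The nonzero $\eta$'s must also sit in the second column with $\eta_1$ in the upper row, matching the corollary. For existence, one also needs at least one admissible choice of $\eta_1\in\mathbb{F}_q^*$. Each subset $I$ excludes only the roots of a nonzero polynomial in $\eta_1$ of degree at most $3$ (its constant term is $1$). There are $\binom{2^s}{2^{s-1}}$ subsets, so when $q$ is large enough relative to this number a suitable $\eta_1$ exists. I would state this counting remark briefly, or simply take the condition as a hypothesis, as the proposition does.
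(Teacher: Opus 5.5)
Your plan is the derivation the paper intends; the paper gives no separate proof of this proposition. Self-duality comes from Theorem~\ref{block3}, and MDS comes from the third corollary of Lemma~\ref{MDS} with $\eta_{11}=\eta_{21}=0$, $\eta_{12}=\eta_1$, $\eta_{22}=\eta_2=a\eta_1$. That identification is correct, as are the restriction $k\ge 3$ and reading the hypothesis as quantified over all $k$-subsets $I$. The gap is that you use the corollary as a black box. Its displayed criterion is not what Lemma~\ref{MDS} gives, so the printed hypothesis does not yield MDS.

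In Lemma~\ref{MDS} each $g_{mt}$ is linear in row $m$ of $A_1(\eta)$. Here rows $k-1$ and $k$ are $\eta_1$ and $\eta_2$ times the same vector. Explicitly, $F_{m,k-1}(x)=\eta_{m2}(c_3+c_2x)$, $F_{m,k}(x)=\eta_{m2}(c_2+c_1x)$, and $\boldsymbol{\gamma}A_I\boldsymbol{\gamma}^T=-c_1$. Hence $g_{k-1,t}=\eta_1w_t$ and $g_{k,t}=\eta_2w_t$, with $w_{k-1}=c_1c_2-c_3$ and $w_k=c_1^2-c_2$. The $2\times 2$ matrix is therefore the identity plus a rank-one matrix, so
\[
M(n,k,\boldsymbol{\alpha},A_1(\eta),I)=1+\eta_1(c_1c_2-c_3)+\eta_2(c_1^2-c_2).
\]
This is affine in $(\eta_1,\eta_2)$, so a term such as $\eta_1\eta_2c_1c_2(c_1^2+c_2-\eta_1c_2-1)$ cannot appear.

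You can confirm this without the lemma:
\begin{itemize}
\item A nonzero $f\in\mathcal{S}_1$ vanishing on $\{\alpha_i : i\in I\}$ has degree at most $k+1$, so $f=(\beta_0+\beta_1x)G(x)$.
\item The vanishing $x^k$-coefficient forces $\beta_0=-c_1\beta_1$.
\item Hence $f_{k-1}=\beta_1(c_2-c_1^2)$ and $f_{k-2}=\beta_1(c_3-c_1c_2)$.
\item The $x^{k+1}$-coefficient condition $\beta_1=\eta_1f_{k-2}+\eta_2f_{k-1}$ has a solution $\beta_1\neq 0$ exactly when $1+\eta_1(c_1c_2-c_3)+\eta_2(c_1^2-c_2)=0$.
\end{itemize}

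The printed expression differs from the true determinant by a term that is generically nonzero. So its nonvanishing for every $I$ does not give nonvanishing of $M$, and the step ``hence $\mathcal{C}_1$ is MDS'' is unjustified. This applies to your argument and to the paper's intended one.

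What your argument does prove, once the correct criterion is substituted, is the following. The code of Theorem~\ref{block3} is a $[2^s,2^{s-1},2^{s-1}+1]_q$ self-dual MDS code if and only if $1+\eta_1(c_1c_2+c_3+ac_1^2+ac_2)\neq 0$ for every $2^{s-1}$-subset $I$. This uses characteristic $2$ and $\eta_2=a\eta_1$. It also sharpens your existence remark: each $I$ excludes at most one value of $\eta_1$, not three.
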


\begin{example}
    Let $s=3$, $t=2$ and $\mathbb{F}_{2^2} = \mathbb{F}_2(\omega)$ with $\omega^2 + \omega + 1 = 0 $. Let $g(x) = x^8 + x^7 + x^6 + x^5 + \omega \in \mathbb{F}_{2^2}[x] $, then by computer, the splitting field of $g(x)$ over $\mathbb{F}_{2^2}$ is $\mathbb{F}_{2^{14}} $ and $\mathbb{F}_{2^{14}}^{*} = \langle \beta \rangle $, where the minimal polynomial of $\beta$ is $x^{14} + x^7 + x^5 + x^3 + 1 $. Then the roots of $g(x)$ are $\left\{ \beta^{5192}, \beta^{1298}, \beta^{1157}, \beta^{8516}, \beta^{2129}, \beta^{4628}, \beta^{4385}, \beta^{10922} \right\} $. 
    Set 
    \[ \boldsymbol{\alpha} = (\beta^{5192}, \beta^{1298}, \beta^{1157}, \beta^{8516}, \beta^{2129}, \beta^{4628}, \beta^{4385}, \beta^{10922}) \]
    and 
    \[\boldsymbol{v}=(\beta^{1910}, \beta^{8669}, \beta^{14177}, \beta^{6263}, \beta^{13853}, \beta^{7559}, \beta^{7640}, \beta^{5461}). \]
    Let $\eta_1 = \eta_2 = \beta^{1504} $, then \[
    \mathcal{S}_1 = \left\{ \sum_{i=0}^{3} f_i x^i + \beta^{1504} ( f_2 + f_3) x^5 \mid f_i \in \mathbb{F}_{2^{14}} \text{ for } 0 \leq i \leq 3 \right\}.
    \]
    According to Theorem~{\ref{block3}} and Proposition~{\ref{MDS_block3}}, $\mathcal{C}_1 = ev_{\boldsymbol{\alpha},\boldsymbol{v}}(\mathcal{S}_1)$ is a $[8,4,5]_{2^{14}} $ self-dual MDS TGRS code.
\end{example}

\begin{theorem}\label{line1}
    Assume that $\mathbb{F}_{q}$ is the splitting field of $g(x) = x^{2^s+1} + \sum\limits_{i=1}^{2^{s-1}}a_i x^{2^s+1-i} + c \in \mathbb{F}_{2^t}[x] $ with $a_i,c \neq 0 $ for $1 \leq i \leq 2^{s-1}$, $\alpha_1, \alpha_2, \cdots, \alpha_{2^s+1}$ are all roots of $g(x)$ and $v_i = (\alpha_i^{2^{s-1}} + \sum\limits_{j=1}^{2^{s-2}} a_{2j}^{2^{h-1}} \alpha_i^{2^{s-1}-j})^{-1} $, $i=1,2,\cdots,2^s+1$. Let $\boldsymbol{\alpha} = (\alpha_1, \alpha_2, \cdots, \alpha_{2^s+1})$, $\boldsymbol{v} = (v_1, v_2, \cdots, v_{2^s+1})$. Let $\eta_{2^{s-1}+1} \in \mathbb{F}_{q}^* $ and $\eta_{i} = \eta_{2^{s-1}+1} a_{2^{s-1}-i+1} $ for $1 \leq i \leq 2^{s-1} $. Set
    \[
    \mathcal{S}_2 = \left\{ \sum_{i=0}^{2^{s-1}-1} f_i x^i + f_0 \eta_{2^{s-1}+1} ( \sum_{j=1}^{2^{s-1}} a_j x^{2^s-j} + x^{2^s})  \mid f_i \in \mathbb{F}_q \text{ for } 0 \leq i \leq 2^{s-1}-1 \right\}.
    \]
    Then $\mathcal{C}_2 = ev_{\boldsymbol{\alpha},\boldsymbol{v}}(\mathcal{S}_2)$ is a $[2^s+1, 2^{s-1}]$ self-orthogonal TGRS code over $\mathbb{F}_q$.
\end{theorem}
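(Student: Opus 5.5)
The plan is to apply Case 1 of Theorem~\ref{line} with $n=2^s+1$ and $k=2^{s-1}$. Then $n-k=2^{s-1}+1$ and $n-2k=1$, and the hypothesis $k\le \frac{n-1}{2}$ holds with equality. So it suffices to check conditions (i)--(iii) of that case, working in characteristic $2$ throughout. As in the proofs of Theorems~\ref{block2} and \ref{block3}, I will use $u_i=(g'(\alpha_i))^{-1}$. This identity holds because $g(x)=\prod_{i=1}^{2^s+1}(x-\alpha_i)$.

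\textbf{Step 1: matching $\mathcal{S}_2$ and the coefficients $\sigma_i$.} First I check that the displayed $\mathcal{S}_2$ is the general $\mathcal{S}_2$ with the prescribed $\eta$'s. The twist is $f_0\sum_{j=1}^{2^{s-1}+1}\eta_j x^{2^{s-1}-1+j}$. Substituting $\eta_j=\eta_{2^{s-1}+1}a_{2^{s-1}-j+1}$ and setting $m=2^{s-1}-j+1$ turns the monomial $x^{2^{s-1}-1+j}$ into $x^{2^s-m}$. This gives exactly $f_0\eta_{2^{s-1}+1}\bigl(\sum_{m=1}^{2^{s-1}}a_m x^{2^s-m}+x^{2^s}\bigr)$, and $\eta_{n-k}=\eta_{2^{s-1}+1}\neq0$ as Theorem~\ref{line} requires. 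Comparing $g(x)=\sum_i\sigma_i x^{n-i}$ with its definition gives
\[
\sigma_i=a_i \ (1\le i\le 2^{s-1}),\qquad \sigma_i=0 \ (2^{s-1}<i\le 2^s),\qquad \sigma_n=c .
\]

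\textbf{Step 2: condition (i), where I expect the main obstacle.} In characteristic $2$ the derivative keeps only the terms of $g$ of odd exponent. The exponent $2^s+1-i$ is odd exactly when $i$ is even, so
\[
g'(x)=x^{2^s}+\sum_{j=1}^{2^{s-2}}a_{2j}x^{2^s-2j}=\Bigl(x^{2^{s-1}}+\sum_{j=1}^{2^{s-2}}a_{2j}^{2^{h-1}}x^{2^{s-1}-j}\Bigr)^2 ,
\]
using $(a^{2^{h-1}})^2=a$ in $\mathbb{F}_{2^h}$. Hence $u_i=v_i^2$ for every $i$, and condition (i) holds with $\lambda=1$.

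The delicate point is that the $\alpha_i$ must be distinct, that is, $\gcd(g,g')=1$, so that $v_i$ is defined and nonzero. I would argue this as in Theorems~\ref{block2} and \ref{block3}, where the statement implicitly assumes $g$ has $2^s+1$ distinct roots. If a genuine argument is needed, I would try eliminating a common root of $g$ and the square root of $g'$ using $c\neq0$. I expect this separability bookkeeping to be the main obstacle.

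\textbf{Step 3: conditions (ii) and (iii).} Condition (ii) asks that $\eta_{2^{s-1}+1-i}=\eta_{2^{s-1}+1}\sigma_i$ for $1\le i\le 2^{s-1}-1$. Setting $j=2^{s-1}+1-i$ in the definition of $\eta_j$ gives $\eta_{2^{s-1}+1-i}=\eta_{2^{s-1}+1}a_i$. Since $\sigma_i=a_i$ in this range, (ii) holds.

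Because $n-2k=1$, condition (iii) has a single summand and reads
\[
\eta_{n-k}\sigma_{n-1}+(\eta_1-\eta_{n-k}\sigma_{n-k-1})b_k=2=0 .
\]
Here $b_k$ is defined since $k\le n-k-1=2^{s-1}$. The first term vanishes because $\sigma_{n-1}=\sigma_{2^s}=0$. The bracket also vanishes: $\eta_1=\eta_{2^{s-1}+1}a_{2^{s-1}}$ and $\sigma_{n-k-1}=\sigma_{2^{s-1}}=a_{2^{s-1}}$. So (iii) holds without computing $b_k$.

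Case 1 of Theorem~\ref{line} then gives that $\mathcal{C}_2$ is a $[2^s+1,2^{s-1}]$ self-orthogonal TGRS code over $\mathbb{F}_q$.
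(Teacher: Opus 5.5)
Your proposal is correct and is essentially the paper's own proof. You apply Case~1 of Theorem~\ref{line} with $n=2^s+1$, $k=2^{s-1}$, and obtain (i) from the characteristic-$2$ identity $g'(x)=\bigl(x^{2^{s-1}}+\sum_{j=1}^{2^{s-2}}a_{2j}^{2^{h-1}}x^{2^{s-1}-j}\bigr)^2$ giving $u_i=v_i^2$, (ii) from $\sigma_i=a_i$, and (iii) from $\sigma_{2^s}=0$ together with $\eta_1=\eta_{2^{s-1}+1}\sigma_{2^{s-1}}$.

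On separability, the paper simply asserts $\gcd(g,g')=1$, but this does not follow from $a_i,c\neq 0$: for $s=2$ over $\mathbb{F}_2$, $x^5+x^4+x^3+1=(x+1)^2(x^3+x^2+1)$. So your main choice, treating distinct roots as an implicit hypothesis (needed anyway for the $v_i$ to be defined), is the right one, and your fallback of deriving it from $c\neq0$ would not work.
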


\begin{proof}
    Since $g'(x) = x^{2^s} + \sum\limits_{i=1}^{2^{s-2}} a_{2i} x^{2^s-2i} $, we have $\gcd(g(x),g'(x)) = 1$, then $g(x)$ has $2^s+1$ distinct roots in $\mathbb{F}_q$, namely, $\alpha_i \neq \alpha_j$ $(i \neq j)$. On the one hand, 
    \[u_i = (g'(\alpha_i))^{-1} = ( \alpha_i^{2^s} + \sum\limits_{j=1}^{2^{s-2}} a_{2j} \alpha_i^{2^s-2j} )^{-1} = \big( (\alpha_i^{2^{s-1}} + \sum\limits_{j=1}^{2^{s-2}} a_{2j}^{2^{h-1}} \alpha_i^{2^{s-1}-j})^{-1} \big)^2 = v_i^2 \] for $1 \leq i \leq 2^s+1 $, so the condition (i) of Case 1 of Theorem~{\ref{line}} holds.

    On the other hand, write $g(x) = \prod\limits_{i=1}^{2^s+1} (x-\alpha_i) = \sum\limits_{i=0}^{2^s+1} \sigma_i x^{2^s+1-i} $, we have $\sigma_i = a_i$ for $1 \leq i \leq 2^{s-1}$ and $\sigma_{2^s} = 0 $. It is straightforward to verify from the construction of $\eta_{i} $ for $1 \leq i \leq 2^{s-1}+1 $ that $\eta_{2^{s-1}+1-i} = \eta_{2^{s-1}+1} \sigma_i$ for $1 \leq i \leq 2^{s-1}-1 $ and $\eta_{2^{s-1}+1} \sigma_{2^s} + (\eta_1 - \eta_{2^{s-1}+1} \sigma_{2^{s-1}}) b_{2^{s-1}} = 0 $. That is, condition (ii) and (iii) of Case 1 of Theorem~{\ref{line}} hold. Thus, and by Case 1 of Theorem~{\ref{line}}, $\mathcal{C}_2 = ev_{\boldsymbol{\alpha},\boldsymbol{v}}(\mathcal{S}_2)$ is a $[2^s+1, 2^{s-1}]_q$ self-orthogonal TGRS code.
\end{proof}

\begin{proposition}\label{MDS_line1}
    Let notations be the same as Theorem~{\upshape\ref{line1}}, $ M(n,k,\boldsymbol{\alpha},A_2(\eta),I) \neq 0 $. Then there exists a $[2^{s}+1 , 2^{s-1} , 2^{s-1}+2]_q$ self-orthogonal MDS TGRS code.
\end{proposition}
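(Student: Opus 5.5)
This follows directly from Theorem~\ref{line1} combined with the MDS criterion for $\mathcal{C}_2$. Set $n=2^s+1$ and $k=2^{s-1}$. Theorem~\ref{line1} already says that $\mathcal{C}_2=ev_{\boldsymbol{\alpha},\boldsymbol{v}}(\mathcal{S}_2)$ is a self-orthogonal $[n,k]_q$ code. So what remains is to show that the extra hypothesis $M(n,k,\boldsymbol{\alpha},A_2(\eta),I)\neq 0$ forces $\mathcal{C}_2$ to be MDS, i.e. $d=n-k+1=2^{s-1}+2$.

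\textbf{Step 1: identify the twist matrix.} Write $\mathcal{S}_2$ in the normal form $\sum_{i=0}^{k-1}f_ix^i+f_0\sum_{j=1}^{n-k}\eta_jx^{k-1+j}$. We have $n-k=2^{s-1}+1$ and $k-1+j$ ranges over $2^{s-1},\dots,2^s$. The polynomial $\eta_{2^{s-1}+1}\big(\sum_{j=1}^{2^{s-1}}a_jx^{2^s-j}+x^{2^s}\big)$ has coefficient of $x^{2^s-j}$ equal to $\eta_{2^{s-1}+1}a_j$. Putting $2^s-j=k-1+i$ gives $i=2^{s-1}+1-j$, so this coefficient is $\eta_i=\eta_{2^{s-1}+1}a_{2^{s-1}-i+1}$, exactly as in Theorem~\ref{line1}. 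The top coefficient, of $x^{2^s}$, is $\eta_{n-k}=\eta_{2^{s-1}+1}\neq 0$. Hence the twist matrix is precisely $A_2(\eta)$ with this vector $\eta$.

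\textbf{Step 2: apply the MDS criterion.} This needs $3\le k<n$, i.e. $s\ge 3$; I would state that implicitly assumed restriction. Lemma~\ref{MDS}, in its specialization to $A_2(\eta)$ (the corollary following Lemma~\ref{MDS}), shows that the hypothesis $M(n,k,\boldsymbol{\alpha},A_2(\eta),I)\neq0$ for every $k$-subset $I$ is equivalent to $\mathcal{C}_2$ being MDS. The multiplier $\boldsymbol{v}$ has all entries nonzero, and scaling coordinates by nonzero scalars preserves the weight distribution. So the criterion applies regardless of $\boldsymbol{v}$, and $d=n-k+1=2^{s-1}+2$.

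\textbf{Step 3: combine.} Self-orthogonality from Theorem~\ref{line1} together with Step 2 gives a $[2^s+1,2^{s-1},2^{s-1}+2]_q$ self-orthogonal MDS TGRS code.

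The main obstacle is Step 1: the indices in Theorem~\ref{line1} must be matched with those in the general definition of $\mathcal{S}$, so that Lemma~\ref{MDS} is applied to the correct $A_2(\eta)$. The rest is direct citation. One should also read the hypothesis $M\neq 0$ as holding for all $k$-subsets $I$, as Lemma~\ref{MDS} requires.
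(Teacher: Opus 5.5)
Your proposal is correct and matches the paper's argument. The paper gives no separate proof: the proposition follows directly from Theorem~\ref{line1} (self-orthogonality) and the $A_2(\eta)$ specialization of Lemma~\ref{MDS} (MDS, so $d=n-k+1=2^{s-1}+2$). Your index matching of the twist vector, your remark that Lemma~\ref{MDS} needs $k=2^{s-1}\ge 3$ (so $s\ge 3$), and your reading of $M\neq 0$ as holding for every $k$-subset $I$ are sound clarifications that the paper leaves implicit.
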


By Lemma~{\ref{QECC}} and Theorem~{\ref{line1}}, we get the following proposition.

\begin{proposition}\label{QECC_line1}
    Let notations be the same as Theorem~{\upshape\ref{line1}}. Then there exists a $[[2^{s}+1 , 1 , wt(\mathcal{C}_2^{\perp} \setminus \mathcal{C}_2)]]_q$ quantum stabilizer code.
\end{proposition}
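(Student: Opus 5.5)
The plan is to invoke Lemma~\ref{QECC} directly, using Theorem~\ref{line1} as the input. Theorem~\ref{line1} gives that $\mathcal{C}_2 = ev_{\boldsymbol{\alpha},\boldsymbol{v}}(\mathcal{S}_2)$ is a self-orthogonal code, $\mathcal{C}_2 \subseteq \mathcal{C}_2^{\perp}$, with length $n = 2^s+1$ over $\mathbb{F}_q$. The only parameter left to settle is the dimension $k$.

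For the dimension, I would show that $\dim \mathcal{C}_2 = 2^{s-1}$. The space $\mathcal{S}_2$ consists of polynomials with free coefficients $f_0,\dots,f_{2^{s-1}-1}$. The $f_0$-term contributes $\eta_{2^{s-1}+1}x^{2^s}$ plus lower-degree monomials, while the other terms are $x^i$ with $1\le i\le 2^{s-1}-1$. So the $2^{s-1}$ basis polynomials are linearly independent, and every polynomial in $\mathcal{S}_2$ has degree at most $2^s < n$.

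A nonzero polynomial of degree less than $n$ cannot vanish at all $n$ distinct points $\alpha_i$. Distinctness of the $\alpha_i$ was established in the proof of Theorem~\ref{line1}, via $\gcd(g,g')=1$. Since also $v_i\neq 0$, the map $ev_{\boldsymbol{\alpha},\boldsymbol{v}}$ is injective on $\mathcal{S}_2$, and hence $k = 2^{s-1}$. This agrees with the generator matrix $G_2$ having full rank, which was used in computing the parity check matrix $H_2$.

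Lemma~\ref{QECC} then yields a $[[n, n-2k, wt(\mathcal{C}_2^{\perp}\setminus\mathcal{C}_2)]]_q$ code. Substituting gives $n-2k = 2^s+1-2\cdot 2^{s-1} = 1$, which is the stated $[[2^s+1,1,wt(\mathcal{C}_2^{\perp}\setminus\mathcal{C}_2)]]_q$ quantum stabilizer code.

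The main point to check, though it is mild, is the injectivity/dimension claim. In particular one must confirm that $n-2k=1>0$, so that $\mathcal{C}_2^{\perp}\setminus\mathcal{C}_2$ is nonempty and the distance is well defined. The rest is bookkeeping.
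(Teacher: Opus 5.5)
Your proposal is correct and matches the paper's argument, which simply applies Lemma~\ref{QECC} to the self-orthogonal $[2^s+1,2^{s-1}]_q$ code $\mathcal{C}_2$ from Theorem~\ref{line1}, giving $n-2k=1$. Your separate injectivity check of $ev_{\boldsymbol{\alpha},\boldsymbol{v}}$ on $\mathcal{S}_2$ is harmless but redundant, since Theorem~\ref{line1} already states that the dimension is $2^{s-1}$.
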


\begin{example}
    Let $s=3$, $t=1$. Let $g(x) = \sum\limits_{i=0}^{4} x^{9-i} + 1 \in \mathbb{F}_{2}[x] $, then by computer, the splitting field of $g(x)$ over $\mathbb{F}_{2}$ is $\mathbb{F}_{2^{8}} $ and $\mathbb{F}_{2^{8}}^{*} = \langle \beta \rangle $, where the minimal polynomial of $\beta$ is $x^8 + x^4 + x^3 + x^2 + 1 $. Then the roots of $g(x)$ are $\left\{ 1, \beta^{37}, \beta^{148}, \beta^{74}, \beta^{146}, \beta^{164}, \beta^{73}, \beta^{82}, \beta^{41} \right\} $. 
    Set 
    \[ \boldsymbol{\alpha} = (1, \beta^{37}, \beta^{148}, \beta^{74}, \beta^{146}, \beta^{164}, \beta^{73}, \beta^{82}, \beta^{41}) \] 
    and 
    \[ \boldsymbol{v}=(1,\beta^{114},\beta^{201},\beta^{228},\beta^{57},\beta^{78},\beta^{156},\beta^{39},\beta^{147}). \]
    Set $\eta_5 = \beta^{3} $ and $\eta_i = \eta_5 $ for $1 \leq i \leq 4 $, then \[
    \mathcal{S}_2 = \left\{ \sum_{i=0}^{3} f_i x^i + \beta^{3} f_0 ( \sum_{j=1}^4 x^{8-j} + x^8 ) \mid f_i \in \mathbb{F}_{2^{8}} \text{ for } 0 \leq i \leq 3 \right\}.
    \]
    According to the Theorem~{\ref{line1}}, $\mathcal{C}_2 = ev_{\boldsymbol{\alpha},\boldsymbol{v}}(\mathcal{S}_2)$ is a $[9,4]_{2^{8}} $ self-orthogonal TGRS code. By Magma, $\mathcal{C}_2 = ev_{\boldsymbol{\alpha},\boldsymbol{v}}(\mathcal{S}_2)$ is a $[9,4,5]_{2^{8}} $ self-orthogonal NMDS TGRS code, and the parameters of linear code $\mathcal{C}_2^{\perp}$ are $[9,5,4]_{2^8}$.  Applying the Proposition~{\ref{QECC_line1}}, we construct a quantum error-correcting code with parameters $[[9,1,4]]_{2^8}$.
    
    If set $\eta_5 = \beta^7 $ and $\eta_i = \eta_5 $ for $1 \leq i \leq 4 $, then \[
    \mathcal{S}_2 = \left\{ \sum_{i=0}^{3} f_i x^i + \beta^7 f_0 ( \sum_{j=1}^4 x^{8-j} + x^8 ) \mid f_i \in \mathbb{F}_{2^{8}} \text{ for } 0 \leq i \leq 3 \right\}.
    \]
    According to the Theorem~{\ref{line1}} and Proposition~{\ref{MDS_line1}}, $\mathcal{C}_2 = ev_{\boldsymbol{\alpha},\boldsymbol{v}}(\mathcal{S}_2)$ is a $[9,4,6]_{2^{8}} $ self-orthogonal MDS TGRS code and $\mathcal{C}_2^{\perp}$ is a $[9,5,5]$ MDS code. Applying the Proposition~{\ref{QECC_line1}}, we construct a quantum error-correcting code with parameters $[[9,1,5]]_{2^8}$, which are quantum MDS codes achieving the quantum Singleton bound.
    
\end{example}

\begin{theorem}\label{line2}
    Assume that $\mathbb{F}_{q}$ is the splitting field of $g(x) = x^{2^s} + \sum\limits_{i=1}^{2^{s-1}-1}a_i x^{2^s-i} + c \in \mathbb{F}_{2^t}[x] $ with $a_i,c \neq 0 $ for $1 \leq i \leq 2^{s-1}-1$, $\alpha_1, \alpha_2, \cdots, \alpha_{2^s}$ are all roots of $g(x)$ and $v_i = (\sum\limits_{j=1}^{2^{s-2}} a_{2j-1}^{2^{h-1}} \alpha_i^{2^{s-1}-j})^{-1} $, $i=1,2,\cdots,2^s+1$. Let $\boldsymbol{\alpha} = (\alpha_1, \alpha_2, \cdots, \alpha_{2^s})$, $\boldsymbol{v} = (v_1, v_2, \cdots, v_{2^s})$. Let $\eta_{2^{s-1}} \in \mathbb{F}_{q}^* $ and $\eta_{i} = \eta_{2^{s-1}} a_{2^{s-1}-i} $ for $1 \leq i \leq 2^{s-1}-1 $. Set
    \[
    \mathcal{S}_2 = \left\{ \sum_{i=0}^{2^{s-1}-1} f_i x^i + f_0 \eta_{2^{s-1}} ( \sum_{j=1}^{2^{s-1}-1} a_j x^{2^s-1-j} + x^{2^s-1})  \mid f_i \in \mathbb{F}_q \text{ for } 0 \leq i \leq 2^{s-1}-1 \right\}.
    \]
    Then $\mathcal{C}_2 = ev_{\boldsymbol{\alpha},\boldsymbol{v}}(\mathcal{S}_2)$ is a $[2^s, 2^{s-1}]$ self-dual TGRS code over $\mathbb{F}_q$.
\end{theorem}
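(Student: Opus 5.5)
The plan is to verify the three conditions of Case 2 of Theorem~\ref{line}, with $n=2^s$ and $k=2^{s-1}$, mirroring the proof of Theorem~\ref{line1}.

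\textbf{Distinct roots.} First, in characteristic $2$ the derivative is
\[
g'(x)=\sum_{i\ \text{odd}} a_i x^{2^s-1-i},
\]
because the terms $x^{2^s}$, $a_ix^{2^s-i}$ with $i$ even, and $c$ have zero derivative. Every root $\alpha$ of $g$ is nonzero since $c\neq 0$. From $g(x)$ and $g'(x)$ one can then argue that $\gcd(g,g')=1$, in the same way the paper does in Theorems~\ref{block3} and \ref{line1}. Hence the $\alpha_i$ are $2^s$ distinct elements of $\mathbb{F}_q$.

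\textbf{Condition (i).} Here $u_i=g'(\alpha_i)^{-1}$. Write the odd indices as $i=2j-1$ with $1\le j\le 2^{s-2}$, so the exponent is $2^s-2j=2(2^{s-1}-j)$. Thus
\[
g'(x)=\sum_{j=1}^{2^{s-2}} a_{2j-1}x^{2(2^{s-1}-j)}.
\]
Since squaring is additive in characteristic $2$ and $a^{2^{h-1}\cdot 2}=a^{2^h}=a$ in $\mathbb{F}_q$, this equals
\[
\Big(\sum_{j=1}^{2^{s-2}} a_{2j-1}^{2^{h-1}}x^{2^{s-1}-j}\Big)^2 .
\]
Therefore $u_i=v_i^2$, which gives (i) with $\lambda=1$. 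This is why $v_i$ is defined as it is.

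\textbf{Conditions (ii) and (iii).} Write $g(x)=\sum\sigma_i x^{2^s-i}$. Then $\sigma_i=a_i$ for $1\le i\le 2^{s-1}-1$, and $\sigma_i=0$ for $2^{s-1}\le i\le 2^s-1$; note the coefficient of $x$ vanishes, so $\sigma_{n-1}=0$. Condition (ii) requires $\eta_{k-i}=\eta_k\sigma_i$ for $1\le i\le k-1$, which holds by the definition $\eta_{k-i}=\eta_k a_i$. For the twist polynomial, $f_0\sum_{j=1}^{k}\eta_j x^{k-1+j}$ with $\eta_j=\eta_k a_{k-j}$ reindexes to $\eta_k\big(x^{2^s-1}+\sum_j a_j x^{2^s-1-j}\big)$, matching $\mathcal{S}_2$.

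\textbf{Main obstacle.} Condition (iii) reads $\eta_k\sigma_{n-1}=2$. In characteristic $2$ the right side is $0$, and $\sigma_{n-1}=0$, so (iii) reduces to $0=0$ and holds automatically.

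The real difficulty is that the displayed $2$ comes from the equation $a_0(1-\eta_k\sigma_{n-1})=1$ with $a_0=-1$. I would re-derive this step directly in characteristic $2$ to confirm it. There $a_0=1$, and the last equation becomes $1-\eta_k\sigma_{n-1}=1$, i.e.\ $\eta_k\sigma_{n-1}=0$. This is satisfied, so (iii) holds. Once (i)--(iii) are verified, Case 2 of Theorem~\ref{line} gives that $\mathcal{C}_2$ is a $[2^s,2^{s-1}]$ self-dual code.
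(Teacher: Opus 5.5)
Your proof follows the paper's argument. Both verify (i)--(iii) of Case~2 of Theorem~\ref{line} with $n=2^s$, $k=2^{s-1}$:
\begin{itemize}
\item (i) comes from writing $g'$ as a Frobenius square, so $u_i=v_i^2$ with $\lambda=1$.
\item (ii) is the definition $\eta_{k-i}=\eta_k a_i=\eta_k\sigma_i$.
\item (iii) holds because $\sigma_{n-1}=0$ and $2=0$ in characteristic $2$.
\end{itemize}
Your re-derivation of (iii) from System~(6) is consistent with this but not needed.

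The one step that fails as you phrase it is the claim that $\gcd(g,g')=1$ ``can be argued'' from the shapes of $g$ and $g'$. The paper's own proof asserts the same thing. For $s=2$ the claim is true: $g'=a_1x^2$ and $g(0)=c\neq 0$. For $s\ge 3$ it can fail. Take $s=3$ over $\mathbb{F}_4=\mathbb{F}_2(\omega)$, with $a_1=a_3=1$, $a_2=\omega$, $c=\omega^2$. Then $g(1)=1+1+\omega+1+\omega^2=0$, and $g'(x)=x^6+x^4$ gives $g'(1)=0$, so $1$ is a repeated root. Distinctness of the roots therefore cannot be derived from the stated hypotheses. It must be taken as part of the hypothesis, as the paper does explicitly in Theorem~\ref{line4}; this is equivalent to $g'(\alpha_i)\neq 0$, which is also needed for $v_i$ to be defined. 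With that assumption made explicit, the rest of your proof is complete.
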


\begin{proof}
    Since $g'(x) = \sum\limits_{i=1}^{2^{s-2}} a_{2i-1} x^{2^s-2i} $, we have $\gcd(g(x),g'(x)) = 1$, then $g(x)$ has $2^s$ distinct roots in $\mathbb{F}_q$, namely, $\alpha_i \neq \alpha_j$ $(i \neq j)$. On the one hand,
    \[
    u_i = (g'(\alpha_i))^{-1} = (\sum\limits_{j=1}^{2^{s-2}} a_{2j-1} \alpha_{i}^{2^s-2j})^{-1} = \big( (\sum\limits_{j=1}^{2^{s-2}} a_{2j-1}^{2^{h-1}} \alpha_i^{2^{s-1}-j})^{-1} \big)^2 = v_i^2 \]
    for $1 \leq i \leq 2^s $, so the condition (i) of Case 2 of Theorem~{\ref{line}} holds.

    On the other hand, write $g(x) = \prod\limits_{i=1}^{2^s} (x-\alpha_i) = \sum\limits_{i=0}^{2^s} \sigma_i x^{2^s-i} $, we have $\sigma_i = a_i$ for $1 \leq i \leq 2^{s-1}-1$ and $\sigma_{2^s-1} = 0 $. It is straightforward to verify from the construction of $\eta_{i} $ for $1 \leq i \leq 2^{s-1} $ that $\eta_{2^{s-1}-i} = \eta_{2^{s-1}} \sigma_i$ for $1 \leq i \leq 2^{s-1}-1 $ and $\eta_{2^{s-1}} \sigma_{2^s-1} = 0 $. That is, condition (ii) and (iii) of Case 2 of Theorem~{\ref{line}} hold. Thus, and by Case 2 of Theorem~{\ref{line}}, $\mathcal{C}_2 = ev_{\boldsymbol{\alpha},\boldsymbol{v}}(\mathcal{S}_2)$ is a $[2^s, 2^{s-1}]_q$ self-dual TGRS code.
\end{proof}

\begin{proposition}\label{MDS_line2}
    Let notations be the same as Theorem~{\upshape\ref{line2}}, $ M(n,k,\boldsymbol{\alpha},A_2(\eta),I) \neq 0 $. Then there exists a $[2^{s} , 2^{s-1} , 2^{s-1}+1]_q$ self-dual MDS TGRS code.
\end{proposition}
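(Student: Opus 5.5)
The plan is to combine Theorem~\ref{line2} with the MDS criterion for $\mathcal{C}_2$. Under the notation of Theorem~\ref{line2} we have $n=2^s$ and $k=2^{s-1}$, so $n=2k$. Theorem~\ref{line2} already shows that $\mathcal{C}_2=ev_{\boldsymbol{\alpha},\boldsymbol{v}}(\mathcal{S}_2)$ is a $[2^s,2^{s-1}]_q$ self-dual TGRS code. This uses three facts established there:
\begin{itemize}
\item $v_i^2=u_i$;
\item $\eta_{k-i}=\eta_k\sigma_i$ for $1\le i\le k-1$;
\item $\eta_k\sigma_{n-1}=0$, which equals $2$ in characteristic two.
\end{itemize}
Together these give conditions (i)--(iii) of Case~2 of Theorem~\ref{line}. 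So self-duality is inherited directly, and only the minimum distance remains to be proved.

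For the distance, $\mathcal{C}_2$ is generated by $A_2(\eta)$ with first row $(\eta_1,\dots,\eta_{n-k})$ and all other rows zero. The hypothesis is $M(n,k,\boldsymbol{\alpha},A_2(\eta),I)\neq 0$, understood for every $k$-subset $I$ as in Lemma~\ref{MDS}. The Corollary specializing Lemma~\ref{MDS} to $A_2(\eta)$ then shows that $\mathcal{C}_2$ is MDS, provided $3\le k<n$. Hence $d=n-k+1=2^{s-1}+1$, and $\mathcal{C}_2$ is a $[2^s,2^{s-1},2^{s-1}+1]_q$ self-dual MDS TGRS code.

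Two small points need checking.
\begin{itemize}
\item The range $3\le k$ requires $s\ge 3$, which is implicit since the construction uses $a_{2j-1}$ for $1\le j\le 2^{s-2}$ and needs nonzero coefficients. For small $s$, a direct check or the GRS-subcode argument would be needed, or the case is simply excluded.
\item The nondegeneracy $\eta_{n-k}=\eta_{2^{s-1}}\neq 0$, needed for the parity check matrix $H_2$, is part of the hypotheses.
\end{itemize}

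The main obstacle is only one of interpretation: making sure the hypothesis ``$M\neq0$'' is read as holding for all $k$-subsets $I$, so that Lemma~\ref{MDS} applies verbatim. No further computation is needed.
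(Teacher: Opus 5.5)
Your proposal is correct and is the same argument the paper relies on; the paper states the proposition without a written proof. Self-duality comes directly from Theorem~\ref{line2}, which checks Case~2 of Theorem~\ref{line} using $\eta_k\sigma_{n-1}=0=2$ in characteristic two. The MDS property comes from the corollary of Lemma~\ref{MDS} for $A_2(\eta)$, with the hypothesis $M\neq 0$ read for every $k$-subset $I$, and gives $d=n-k+1=2^{s-1}+1$.

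One correction to your side remark: $s=2$ is not excluded by the construction, since $g=x^4+a_1x^3+c$ is a valid instance. The GRS-subcode bound cannot rescue that case either, because the twist has degree $n-1$. So $k=2$ simply lies outside the stated range $3\le k$ of Lemma~\ref{MDS}, a restriction the paper also leaves unstated.
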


\begin{example}
    Let $s=3$, $t=1$. Let $g(x) = \sum_{i=0}^3 x^{8-i} + 1 \in \mathbb{F}_{2}[x] $, then by computer, the splitting field of $g(x)$ over $\mathbb{F}_{2}$ is $\mathbb{F}_{2^{15}} $ and $\mathbb{F}_{2^{15}}^{*} = \langle \beta \rangle $, where the minimal polynomial of $\beta$ is $x^{15} + x^5 + x^4 + x^2 + 1$. Then the roots of $g(x)$ are $\left\{ \beta^{5285},\beta^{18724},\beta^{9513},\beta^{10570},\beta^{19026},\beta^{9362},\beta^{21140},\beta^{4681}  \right\} $. 
    Set 
    \[ \boldsymbol{\alpha} = (\beta^{5285},\beta^{18724},\beta^{9513},\beta^{10570},\beta^{19026},\beta^{9362},\beta^{21140},\beta^{4681}) \] 
    and 
    \[ \boldsymbol{v} = (
    \beta^{20083},\beta^{4681},\beta^{29596},\beta^{7399},\beta^{26425},\beta^{18724},\beta^{14798},\beta^{9362}). \] 
    Let $\eta_i = \beta^3 $, then 
    \[
    \mathcal{S}_2 = \left\{ \sum_{i=0}^{3} f_i x^i + \beta^3 f_0 \sum_{j=0}^3 x^{7-j} \mid f_i \in \mathbb{F}_{2^{15}} \text{ for } 0 \leq i \leq 3 \right\}.
    \]
    According to the Theorem~{\ref{line2}} and Proposition~{\ref{MDS_line2}}, $\mathcal{C}_2 = ev_{\boldsymbol{\alpha},\boldsymbol{v}}(\mathcal{S}_2)$ is a $[8,4,5]_{2^{15}}$ self-dual MDS TGRS code.
\end{example}

\subsection{Construction of Self-Orthogonal TGRS Codes over \texorpdfstring{$\mathbb{F}_{q}$}{Fq} or \texorpdfstring{$\mathbb{F}_{q^2}$}{Fq2}\texorpdfstring{$(2 \nmid q )$}{(2 not mid q)}}

\begin{theorem}\label{block4}
    Let $s$, $h$, $t$, $r$ be positive integers and $t \mid h$, $s \leq h$, $\gcd(r,p) = 1$. Let $p$ is an odd prime and $q=p^h$. Assume that $\mathbb{F}_q$ is the splitting field of $g(x) = x^{2 r p^s} + bx^{2 r p^s-1} + c $ $(b,c \neq 0)$ over $\mathbb{F}_{p^t}$, $\alpha_1, \alpha_2, \cdots, \alpha_{2 r p^s}$ are all roots of $g(x)$ and $v_i = \alpha_i^{1-rp^s}$, $i=1,2,\cdots,2 r p^s$. Let $\boldsymbol{\alpha} = (\alpha_1, \alpha_2, \cdots, \alpha_{2 r p^s})$, $\boldsymbol{v} = (v_1, v_2, \cdots, v_{2 r p^s})$. Let $\eta_{12}, \eta_{22} \in \mathbb{F}_{q}^{*} $ and $\eta_{i1} = \frac{b}{2} \eta_{i2}$ for $i=1,2$. Set
    \[
    \mathcal{S}_1 = \left\{ \sum_{i=0}^{r p^s-2} f_i x^i + \sum_{i=1}^2 f_{rp^s-4 + i} \eta_{i2} (\frac{b}{2} x^{rp^s-1} + x^{rp^s}) \mid f_i \in \mathbb{F}_q \text{ for } 0 \leq i \leq rp^s-2 \right\}.
    \]
    Then $\mathcal{C}_1 = ev_{\boldsymbol{\alpha},\boldsymbol{v}}(\mathcal{S}_1)$ is a $[2rp^s, rp^s-1, \geq rp^s]$ self-orthogonal TGRS code over $\mathbb{F}_q$.
\end{theorem}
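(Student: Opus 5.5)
The plan is to deduce the statement from Case 3 of Theorem~\ref{block}. Here $n = 2rp^s$ is even and $k = rp^s - 1 = \frac{n-2}{2}$, so we are exactly in that case. The set $\mathcal{S}_1$ in the statement is the general $\mathcal{S}_1$ of Section~\ref{sec2} with $\eta_{i1} = \frac{b}{2}\eta_{i2}$, where $f_{k-3+i} = f_{rp^s-4+i}$. Note that $\frac{b}{2}$ makes sense because $p$ is odd. It therefore suffices to check three things: the $\alpha_i$ are distinct, condition (i) holds, and condition (ii) holds. The distance bound then comes separately.

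\textbf{Distinct roots and condition (i).} In characteristic $p$ we have $2rp^s \equiv 0$ and $2rp^s - 1 \equiv -1$. Hence
\[ g'(x) = -b\,x^{2rp^s-2}. \]
Since $c \neq 0$, the element $0$ is not a root of $g$. So $\gcd(g,g') = 1$, and the $n$ roots $\alpha_i$ are distinct nonzero elements of the splitting field $\mathbb{F}_q$; in particular $v_i = \alpha_i^{1-rp^s}$ is defined and nonzero. Because $\prod_{i=1}^n(x-\alpha_i) = g(x)$, we get
\[ u_i = g'(\alpha_i)^{-1} = -b^{-1}\alpha_i^{2-2rp^s}. \]
Comparing with $v_i^2 = \alpha_i^{2-2rp^s}$ gives $v_i^2 = \lambda u_i$ with $\lambda = -b \in \mathbb{F}_q^*$. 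This is condition (i).

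\textbf{Condition (ii).} Reading off the coefficients of $g(x) = \sum_{i=0}^{n}\sigma_i x^{n-i}$ gives $\sigma_1 = b$. Substituting $\eta_{i1} = \frac{b}{2}\eta_{i2}$ into the first relation gives
\[ \eta_{i2}(\eta_{i2} b - b\eta_{i2}) = 0. \]
For the second relation, $\eta_{11}\eta_{22} + \eta_{12}\eta_{21} = \frac{b}{2}\eta_{12}\eta_{22} + \frac{b}{2}\eta_{12}\eta_{22} = b\,\eta_{12}\eta_{22} = \sigma_1\eta_{12}\eta_{22}$. So condition (ii) of Case 3 holds, and Case 3 of Theorem~\ref{block} shows that $\mathcal{C}_1$ is self-orthogonal of dimension $rp^s - 1$.

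\textbf{Distance bound.} Every polynomial in $\mathcal{S}_1$ has degree at most $rp^s$. Hence $\mathcal{C}_1$ is a subcode of $GRS_{rp^s+1}(\boldsymbol{\alpha},\boldsymbol{v})$, which is MDS with minimum distance $2rp^s - (rp^s+1) + 1 = rp^s$. It follows that $d(\mathcal{C}_1) \geq rp^s$.

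The only step needing care is the derivative computation: one must reduce the coefficients $2rp^s$ and $2rp^s-1$ modulo $p$ to get $g'(x) = -b\,x^{2rp^s-2}$. Once that is in hand, everything else is direct substitution.
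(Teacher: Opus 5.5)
Your proposal is correct and follows the paper's proof essentially step for step. Like the paper, you compute $g'(x)=-b\,x^{2rp^s-2}$ to get distinct roots and $u_i=-b^{-1}v_i^2$, which is condition (i) of Case 3 of Theorem~\ref{block}; you read off $\sigma_1=b$ so that $\eta_{i1}=\frac{b}{2}\eta_{i2}$ gives condition (ii); and you bound the distance by viewing $\mathcal{C}_1$ as a subcode of $GRS_{rp^s+1}(\boldsymbol{\alpha},\boldsymbol{v})$. You also state explicitly two small points the paper leaves implicit: that $g(0)=c\neq 0$, so $\gcd(g,g')=1$ and each $v_i$ is well defined, and that $\frac{b}{2}$ makes sense because $p$ is odd.
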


\begin{proof}
    Since $g'(x) = -bx^{2rp^s-2}$, we have $\gcd (g(x),g'(x))=1$, then $g(x)$ has $2rp^s$ distinct roots in $\mathbb{F}_q$, namely, $\alpha_i \neq \alpha_j$ $(i \neq j)$. On the one hand, $u_i = (g'(\alpha_i))^{-1} = -b^{-1} \alpha_i^{2-2rp^s} = -b^{-1} v_i^2 $ for $1 \leq i \leq 2rp^s$, the condition (i) of Case 3 of Theorem~{\ref{block}} holds.

    On the other hand, write $g(x) = \prod\limits_{i=1}^{2rp^s} (x-\alpha_i) = \sum\limits_{i=0}^{2rp^s} \sigma_i x^{2rp^s-i} $, we have $\sigma_1 = b = 2\eta_{11}\eta_{12}^{-1} = 2\eta_{21}\eta_{22}^{-1} $, and hence $2\eta_{i1} = \eta_{i2} \sigma_1 $ for $i=1,2 $ and $\eta_{11} \eta_{22} + \eta_{12} \eta_{21} = \eta_{12} \eta_{22} \sigma_1 $. Therefore, the condition (ii) of Case 3 of Theorem~{\ref{block}} holds. Thus, and by Case 3 of Theorem~{\ref{block}}, $\mathcal{C}_1 = ev_{\boldsymbol{\alpha},\boldsymbol{v}}(\mathcal{S}_1)$ is a $[2rp^s, rp^s-1]_q$ self-orthogonal TGRS code. Obviously, $\mathcal{C}_1$ is a subcode of $GRS_{rp^{s}+1}(\boldsymbol{\alpha},\boldsymbol{v})$. Therefore, $\mathcal{C}_1$ has the minimum distance $d \geq rp^s$.
\end{proof}

\begin{proposition}\label{MDS_block4}
    Let notations be the same as Theorem~{\upshape\ref{block4}}, 
    \[ M(n,k,\boldsymbol{\alpha},A_1(\eta),I) = \begin{vmatrix}
        1+g_{k-1,k-1} & g_{k-1,k} \\
        g_{k,k-1} & 1+g_{kk}
    \end{vmatrix} \neq 0 .\]
    Then there exists a $[2rp^s, rp^s-1, rp^s+2]_q$ self-orthogonal MDS TGRS code.
\end{proposition}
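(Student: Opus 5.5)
The plan is to combine Theorem~\ref{block4} with the MDS criterion of Corollary~\ref{MDS_block1}, which rests on Lemma~\ref{MDS}. Throughout, the parameters are as in Theorem~\ref{block4}: $n = 2rp^s$, $k = rp^s - 1$, $\alpha_i$ the roots of $g(x)$, $v_i = \alpha_i^{1-rp^s}$, and $\eta_{i1} = \frac{b}{2}\eta_{i2}$ for $i = 1,2$. By Theorem~\ref{block4}, $\mathcal{C}_1 = ev_{\boldsymbol{\alpha},\boldsymbol{v}}(\mathcal{S}_1)$ is a $[2rp^s, rp^s-1]_q$ self-orthogonal TGRS code. Its twist matrix is exactly of the shape $A_1(\eta)$ in Corollary~\ref{MDS_block1}: the only nonzero entries are $\eta_{11},\eta_{12}$ in row $k-1$ and $\eta_{21},\eta_{22}$ in row $k$, placed in the first two columns. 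So it remains only to upgrade the bound $d \ge rp^s$ to $d = rp^s + 2$.

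First I check the hypothesis $3 \le k < n$ of Corollary~\ref{MDS_block1}. We have $k = rp^s - 1 < 2rp^s = n$ trivially. Also $k \ge 3$ as soon as $rp^s \ge 4$, which always holds since $p \ge 3$ and $s \ge 1$ give $rp^s \ge 3$, and the only exception would be $r = 1$, $p^s = 3$. I would either note that this small case must be excluded or handled separately, or observe that the hypothesis in Lemma~\ref{MDS} is really only needed for the form of the determinant. This is the one point I would flag as a possible gap. For $k = 2$ the matrix still has the same $2\times 2$ form, so I would argue that the determinant criterion of Lemma~\ref{MDS} applies verbatim.

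Next, under the assumption of the proposition that the $2\times 2$ determinant $M(n,k,\boldsymbol{\alpha},A_1(\eta),I)$ is nonzero for every $k$-subset $I$, Corollary~\ref{MDS_block1} yields that $\mathcal{C}_1$ is MDS. Hence
\[
d = n - k + 1 = 2rp^s - (rp^s - 1) + 1 = rp^s + 2 .
\]
Combined with self-orthogonality from Theorem~\ref{block4}, this gives a $[2rp^s, rp^s-1, rp^s+2]_q$ self-orthogonal MDS TGRS code.

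The main obstacle is the interpretation rather than the calculation. The statement is existential, so implicitly one needs parameters $b, c, \eta_{12}, \eta_{22}$ for which the determinant condition actually holds for all $I$. The proof as planned is conditional on that hypothesis. If an existence argument were required, I would try counting: for fixed $I$, the determinant is a nonzero polynomial in $\eta_{12}, \eta_{22}$ of low degree, since it equals $1$ at $\eta = 0$. So each $I$ excludes few choices, and for $q$ large compared with $\binom{n}{k}$ a suitable pair $(\eta_{12},\eta_{22}) \in (\mathbb{F}_q^*)^2$ survives.
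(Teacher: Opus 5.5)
Your proposal is correct and follows the argument the paper intends; the paper gives no separate proof. Self-orthogonality comes from Theorem~\ref{block4}, the twist matrix has exactly the shape required in Corollary~\ref{MDS_block1}, and the determinant hypothesis for every $k$-subset $I$ then gives MDS, so $d=n-k+1=rp^s+2$. Your flag about $rp^s=3$ (where $k=2<3$) is a boundary case the paper silently ignores, and it is harmless: the MDS criterion rests on the factorization $\det G_I=\det\bigl(I_k+A(\eta)W_I\bigr)\det V_k(\alpha_I)\prod_{i\in I}v_i$, with $W_I$ recording $x^k,\dots,x^{n-1}$ reduced modulo $\prod_{i\in I}(x-\alpha_i)$, and this needs no assumption $k\ge 3$.
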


By Lemma~{\ref{QECC}} and Theorem~{\ref{block4}}, we get the following proposition.

\begin{proposition}\label{QECC_block4}
    Let notations be the same as Theorem~{\upshape\ref{block4}}. Then there exists a $[[2rp^{s} , 2, wt(\mathcal{C}_1^{\perp} \setminus \mathcal{C}_1)]]_q$ quantum stabilizer code.
\end{proposition}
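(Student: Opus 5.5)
This follows directly from Lemma~\ref{QECC} applied to the code $\mathcal{C}_1$ of Theorem~\ref{block4}. First, by Theorem~\ref{block4}, $\mathcal{C}_1 = ev_{\boldsymbol{\alpha},\boldsymbol{v}}(\mathcal{S}_1)$ is a self-orthogonal linear code over $\mathbb{F}_q$ of length $n = 2rp^s$. Since $\mathcal{C}_1 \subseteq \mathcal{C}_1^{\perp}$, the hypothesis of Lemma~\ref{QECC} is met.

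Next we pin down the dimension. $\mathcal{S}_1$ is parametrized by the $k = rp^s - 1$ free coefficients $f_0,\dots,f_{rp^s-2}$, so $\mathcal{S}_1$ has dimension $rp^s-1$. The evaluation map is injective on $\mathcal{S}_1$. Indeed, every polynomial in $\mathcal{S}_1$ has degree at most $rp^s < n$, and all $v_i \neq 0$ (each $\alpha_i \neq 0$ because $c \neq 0$). Hence a polynomial that vanishes at all $n$ distinct points $\alpha_i$ must be zero. It follows that $\dim \mathcal{C}_1 = rp^s-1$, as already recorded in Theorem~\ref{block4}.

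Lemma~\ref{QECC} then yields a quantum stabilizer code with parameters
\[
[[\,n,\; n-2k,\; wt(\mathcal{C}_1^{\perp}\setminus\mathcal{C}_1)\,]]_q
= [[\,2rp^s,\; 2rp^s - 2(rp^s-1),\; wt(\mathcal{C}_1^{\perp}\setminus\mathcal{C}_1)\,]]_q
= [[\,2rp^s,\; 2,\; wt(\mathcal{C}_1^{\perp}\setminus\mathcal{C}_1)\,]]_q ,
\]
which is the claimed code.

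Nothing here is really an obstacle. The only point worth stating explicitly is that $k = rp^s-1$ is the exact dimension of $\mathcal{C}_1$ and not merely an upper bound, which is what the injectivity argument above secures.
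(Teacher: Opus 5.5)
Your proposal is correct and matches the paper's argument: the paper simply applies Lemma~\ref{QECC} to the self-orthogonal $[2rp^s, rp^s-1]_q$ code $\mathcal{C}_1$ of Theorem~\ref{block4}, which gives $n-2k=2$. Your extra injectivity check confirms the exact dimension, which the paper instead takes directly from Theorem~\ref{block4}; it uses that every polynomial in $\mathcal{S}_1$ has degree at most $rp^s<2rp^s$ and that its low-order coefficients are exactly the $f_i$.
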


\begin{example}
    Let $s=1$, $t=1$ and $r=2$. Let $g(x) = x^{12} + 2x^{11} + 2 \in \mathbb{F}_{3}[x] $, then by computer, the splitting field of $g(x)$ over $\mathbb{F}_{3}$ is $\mathbb{F}_{3^8} $ and $\mathbb{F}_{3^8}^{*} = \langle \beta \rangle $, where the minimal polynomial of $\beta$ is $x^8 + 2x^5 + x^4 + 2x^2 + 2x + 2 $. Then the roots of $g(x)$ are $\left\{\beta^{1713},\beta^{5139},\beta^{571},\beta^{3444},\beta^{2377},\beta^{1148},\beta^{2979},\beta^{2297},\beta^{331},\beta^{4756},\beta^{993},\beta^{3772}
     \right\} $. 
    Set 
    \[ \boldsymbol{\alpha} = (\beta^{1713},\beta^{5139},\beta^{571},\beta^{3444},\beta^{2377},\beta^{1148},\beta^{2979},\beta^{2297},\beta^{331},\beta^{4756},\beta^{993},\beta^{3772}) \] 
    and 
    \[ \boldsymbol{v} =(\beta^{4555},\beta^{545},\beta^{3705},\beta^{2460},\beta^{1235},\beta^{820},\beta^{4785},\beta^{1635},\beta^{4905},\beta^{2460},\beta^{1595},\beta^{820}) . \]
    Let $\eta_{11} = \eta_{12} = \beta^{2232}, \eta_{21} = \eta_{22} = \beta^{2304} $, then \[
    \mathcal{S}_1 = \left\{ \sum_{i=0}^{4} f_i x^i + \sum_{i=1}^2 f_{i+2} \eta_{i2} ( x^5 +x^6 ) \mid f_i \in \mathbb{F}_{3^{8}} \text{ for } 0 \leq i \leq 4 \right\}.
    \]
    According to the Theorem~{\ref{block4}} and Proposition~{\ref{MDS_block4}}, $\mathcal{C}_1 = ev_{\boldsymbol{\alpha},\boldsymbol{v}}(\mathcal{S}_1)$ is a $[12,5,8]_{3^{8}} $ self-orthogonal MDS TGRS code and $\mathcal{C}_1^{\perp}$ is a $[12,7,6]_{3^6}$ MDS code. Applying the Proposition~{\ref{QECC_block4}}, we construct a quantum error-correcting code with parameters $[[6,2,3]]_{3^6} $, which are quantum MDS codes achieving the quantum Singleton bound.
\end{example}

\begin{theorem}\label{block5}
    Let $s$, $h$, $t$ be positive integers, $r$ is odd and $t \mid h$, $s \leq h$, $\gcd(r,p) = 1$. Let $p$ is an odd prime and $q=p^h$. Assume that $\mathbb{F}_q$ is the splitting field of $g(x) = x^{rp^s} + a x^{rp^s-1} + b x^{rp^s-2} + c \in \mathbb{F}_{p^t}[x] $ with $a,b,c \neq 0 $, $\alpha_1, \alpha_2, \cdots, \alpha_{rp^s}$ are all roots of $g(x)$. There exist $v_i \in \mathbb{F}_{q^2}$ such that $v_i^2 = (g'(\alpha_i))^{-1} $ for $1 \leq i \leq {rp^s}$. Let $\boldsymbol{\alpha} = (\alpha_1, \alpha_2, \cdots, \alpha_{rp^s})$, $\boldsymbol{v} = (v_1, v_2, \cdots, v_{rp^s})$. Let $\eta_{2} = 2b^{-1} $ and $\eta_{1} = 2ab^{-1} $. Set
    \[
    \mathcal{S}_1 = \left\{ \sum_{i=0}^{\frac{rp^s-1}{2}-1} f_i x^i + \frac{2}{b} f_{\frac{rp^s-1}{2}-1} \big(a x^{\frac{rp^s-1}{2}} + b x^{\frac{rp^s-1}{2}+1} \big) \mid f_i \in \mathbb{F}_{q^2} \text{ for } 0 \leq i \leq {\frac{rp^s-1}{2}-1} \right\}.
    \]
    Then $\mathcal{C}_1 = ev_{\boldsymbol{\alpha},\boldsymbol{v}}(\mathcal{S}_1)$ is a $[rp^s, \frac{rp^s-1}{2}, \geq \frac{rp^s-1}{2}]$ self-orthogonal TGRS code over $\mathbb{F}_{q^2}$.
\end{theorem}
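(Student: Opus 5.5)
The plan is to apply Case 4 of Theorem~\ref{block} over the field $\mathbb{F}_{q^2}$. Here $n=rp^s$ is odd, because $p$ and $r$ are odd, and $k=\frac{rp^s-1}{2}=\frac{n-1}{2}$. In the notation of $\mathcal{S}_1$ only the last coefficient $f_{k-1}$ carries a twist. So the twist matrix is
\[
\Gamma=\begin{bmatrix} 0 & 0\\ \eta_1 & \eta_2\end{bmatrix},
\]
that is, $\eta_{11}=\eta_{12}=0$, $\eta_{21}=\eta_1=2ab^{-1}$ and $\eta_{22}=\eta_2=2b^{-1}$. Theorem~\ref{block} and the parity-check matrix $H_1$ are derived over an arbitrary finite field, so they apply verbatim to $\mathbb{F}_{q^2}$ with $\boldsymbol{\alpha}\in\mathbb{F}_q^n\subseteq\mathbb{F}_{q^2}^n$.

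\emph{Step 1: distinct roots and the value of $u_i$.} Since $p\mid rp^s$, we get
\[
g'(x)=-a x^{n-2}-2b x^{n-3}=-x^{n-3}(ax+2b).
\]
Because $c\neq 0$, $0$ is not a root of $g$. The remaining possible common root of $g$ and $g'$ is $-2b/a$, and I would check that $g(-2b/a)\neq 0$. This is implicit in the hypothesis that $\alpha_1,\dots,\alpha_{rp^s}$ are the $rp^s$ roots, i.e.\ distinct; if needed I would state it as a side condition. Then $u_i=\prod_{j\ne i}(\alpha_i-\alpha_j)^{-1}=g'(\alpha_i)^{-1}\in\mathbb{F}_q^*$. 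Every element of $\mathbb{F}_q$ is a square in $\mathbb{F}_{q^2}$, so $v_i\in\mathbb{F}_{q^2}^*$ with $v_i^2=u_i$ exists. This gives condition (i) of Case 4 with $\lambda=1$.

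\emph{Step 2: condition (ii).} From $g$ we read off $\sigma_1=a$ and $\sigma_2=b$. Substituting into Case 4 gives
\[
G_1'=\begin{bmatrix}0&0\\ \eta_2&0\end{bmatrix},\qquad
A'=\begin{bmatrix}0&0\\ 1+\eta_2(a^2-b)-\eta_1 a & \eta_1-\eta_2 a\end{bmatrix},\qquad
H_1'=\begin{bmatrix}-\eta_2&0\\ -\eta_1&0\end{bmatrix}.
\]
The first row of $A'H_1'$ is zero, and the second column is zero. The only nontrivial entry is
\[
-\bigl(1+\eta_2(a^2-b)-\eta_1 a\bigr)\eta_2-(\eta_1-\eta_2 a)\eta_1 .
\]
With $\eta_1=a\eta_2$, the term $\eta_1-\eta_2a$ vanishes and $1+\eta_2a^2-\eta_2b-\eta_2a^2=1-\eta_2 b=1-2=-1$. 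Hence this entry equals $\eta_2$, so $G_1'=A'H_1'$. Case 4 of Theorem~\ref{block} then shows that $\mathcal{C}_1$ is self-orthogonal of dimension $k$.

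\emph{Step 3: distance.} Every polynomial in $\mathcal{S}_1$ has degree at most $k+1$. So $\mathcal{C}_1$ is a subcode of $GRS_{k+2}(\boldsymbol{\alpha},\boldsymbol{v})$, whose minimum distance is $n-k-1=\frac{rp^s-1}{2}$. This gives $d\ge \frac{rp^s-1}{2}$.

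I expect the main obstacle to be the distinctness in Step 1, namely excluding the common root $-2b/a$ of $g$ and $g'$. Everything else is a direct substitution into Case 4 of Theorem~\ref{block}.
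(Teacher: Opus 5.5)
Your proposal is correct and is essentially the paper's own proof. Both apply Case 4 of Theorem~\ref{block} over $\mathbb{F}_{q^2}$ with $\sigma_1=a$, $\sigma_2=b$, get condition (i) from $u_i=g'(\alpha_i)^{-1}\in\mathbb{F}_q$ being a square in $\mathbb{F}_{q^2}$, verify $G_1'=A'H_1'$, and bound $d$ via $\mathcal{C}_1\subseteq GRS_{k+2}(\boldsymbol{\alpha},\boldsymbol{v})$. Your entry $1+\eta_2(a^2-b)-\eta_1a$ is the correct one (the paper's $+\eta_1a$ is a sign slip), and you are more careful than the paper, which simply asserts $\gcd(g,g')=1$ without excluding the possible common root $-2b/a$.

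Like the paper, you take $\eta_{22}=\eta_2=2b^{-1}$, and this is exactly what makes the check work ($1-\eta_2b=-1$). Note, however, that the displayed $\mathcal{S}_1$ literally puts the coefficient $\frac{2}{b}\cdot b=2$ on $x^{k+1}$. For the statement to match your argument, the twist should read $\frac{2}{b}f_{k-1}\bigl(ax^{k}+x^{k+1}\bigr)$.
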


\begin{proof}
    Since $g'(x) = -ax^{rp^s-2} - 2bx^{rp^s-3} $, we have $\gcd(g(x),g'(x)) = 1 $, then $g(x)$ has $rp^s$ distinct roots in $\mathbb{F}_q$, namely, $\alpha_i \neq \alpha_j$ $(i \neq j)$. On the one hand, $u_i = (g'(\alpha_i))^{-1} \in \mathbb{F}_q$ is a square elements of $\mathbb{F}_{q^2}$, then there exist $v_i \in \mathbb{F}_{q^2}$ such that $u_i = v_i^2$ for $1 \leq i \leq rp^s$, the condition (i) of Case 4 of Theorem~{\ref{block}} holds.

    On the other hand, write $g(x) = \prod\limits_{i=1}^{rp^s} (x-\alpha_i) = \sum\limits_{i=0}^{rp^s} \sigma_i x^{rp^s-i} $, we have $\sigma_1 = a$, $\sigma_2 = b$. Hence, 
    \[G_1^{\prime} = \begin{bmatrix}
        0 & 0 \\
        \eta_2 & 0
    \end{bmatrix},
    A^{\prime} = \begin{bmatrix}
        0 & 0 \\
        1 + \eta_2 (a^2-b) + \eta_1 a & \eta_1 - \eta_2 a
    \end{bmatrix}, 
    H_1^{\prime} = \begin{bmatrix}
        -\eta_2 & 0 \\
        -\eta_1 & 0
    \end{bmatrix}.\] 
    Given that $\eta_{2} = 2b^{-1} $ and $\eta_{1} = 2ab^{-1} $, it follows that $G_1^{\prime} = A^{\prime} H_1^{\prime} $. That is, the condition (ii) of Case 4 of Theorem~{\ref{block}} holds. Thus, and by Case 4 of Theorem~{\ref{block}}, $\mathcal{C}_1 = ev_{\boldsymbol{\alpha},\boldsymbol{v}}(\mathcal{S}_1)$ is a $[rp^s, \frac{rp^s-1}{2}]_{q^2}$ self-orthogonal TGRS code. Obviously, $\mathcal{C}_1$ is a subcode of $GRS_{\frac{rp^s+3}{2}}(\boldsymbol{\alpha},\boldsymbol{v})$. Therefore, $\mathcal{C}_1$ has the minimum distance $d \geq \frac{rp^s-1}{2}$. It completes the proof.
\end{proof}

\begin{proposition}\label{MDS_block5}
    Let notations be the same as Theorem~{\upshape\ref{block5}}, $ \eta_1 c_1 + \eta_2 (c_2 - c_1^2) \neq 1 $. Then there exists a $[rp^s, \frac{rp^s-1}{2}, \frac{rp^s+3}{2}]_{q^2}$ self-orthogonal MDS TGRS code.
\end{proposition}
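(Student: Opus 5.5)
The plan is to reduce to Theorem~\ref{block5}, which already shows that $\mathcal{C}_1=ev_{\boldsymbol{\alpha},\boldsymbol{v}}(\mathcal{S}_1)$ is a self-orthogonal $[rp^s,\frac{rp^s-1}{2}]$ TGRS code over $\mathbb{F}_{q^2}$. It then remains to show that, under the extra hypothesis $\eta_1c_1+\eta_2(c_2-c_1^2)\neq 1$ for every $k$-subset $I$ (with $k=\frac{rp^s-1}{2}$ and the $c_j$ the coefficients of $G(x)=\prod_{i\in I}(x-\alpha_i)$), the code is MDS. Its minimum distance would then be $n-k+1=rp^s-\frac{rp^s-1}{2}+1=\frac{rp^s+3}{2}$, as claimed.

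First I would identify the twist matrix of the code in Theorem~\ref{block5}. Writing $k=\frac{rp^s-1}{2}$, the defining polynomials are $\sum_{i=0}^{k-1}f_ix^i+f_{k-1}(\eta_1x^{k}+\eta_2x^{k+1})$, with $\eta_1=2ab^{-1}$ and $\eta_2=2b^{-1}$. Comparing with the general form $\mathcal{S}_1$ (rows $k-1,k$ of $A_1(\eta)$), only the last row of $A_1(\eta)$ is nonzero, and it equals $(\eta_1,\eta_2,0,\dots,0)$; that is, $\eta_{11}=\eta_{12}=0$, $\eta_{21}=\eta_1$, $\eta_{22}=\eta_2$. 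This is exactly the setting of the corollary (derived from Lemma~\ref{MDS}) stating that $\mathcal{C}_1$ is MDS if and only if $\eta_1c_1+\eta_2(c_2-c_1^2)\neq 1$. That corollary requires $3\le k<n$. So I would note that $rp^s\ge 7$ is needed for $k\ge 3$, and treat the tiny cases, if any are admissible, separately or exclude them.

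Second, I would apply that corollary over the field $\mathbb{F}_{q^2}$, since Lemma~\ref{MDS} holds over any finite field and the $\alpha_i,\eta_i$ lie in $\mathbb{F}_q\subseteq\mathbb{F}_{q^2}$. The hypothesis, interpreted as holding for every $k$-subset $I$, gives MDS. Combining this with self-orthogonality from Theorem~\ref{block5} yields the stated $[rp^s,\frac{rp^s-1}{2},\frac{rp^s+3}{2}]_{q^2}$ self-orthogonal MDS TGRS code.

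The main obstacle is making sure the indexing matches: the MDS corollary's single-twist row sits in row $k$ with hooks at $x^{k}$ and $x^{k+1}$, and $\eta_1,\eta_2$ must be matched correctly. Existence is a separate issue. The proposition asserts that such a code exists, so I would also need some admissible $g$ (some $a,b,c$) for which the condition holds for all $I$. I would argue this by exhibiting an example (e.g.\ via computer search, as the paper does in its examples) rather than by a general argument, reading the statement as conditional on the displayed inequality.
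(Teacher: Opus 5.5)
Your proposal is correct and is essentially the paper's own (implicit) argument: self-orthogonality comes from Theorem~\ref{block5}, and the MDS property comes from the single-row corollary of Lemma~\ref{MDS} with $\eta_{11}=\eta_{12}=0$ and $(\eta_{21},\eta_{22})=(\eta_1,\eta_2)$, the inequality being imposed for every $k$-subset $I$, so that $d=n-k+1=\frac{rp^s+3}{2}$. Your caveats are apt: the identification $\eta_{22}=2b^{-1}$ is the intended one (it matches the proof of Theorem~\ref{block5}, although its displayed $\mathcal{S}_1$ literally puts coefficient $2$ on $x^{k+1}$), and the restriction $k\ge 3$ can be dropped, because writing a vanishing $f$ with $f_{k-1}=1$ as $G(x)\bigl(\eta_2x+\eta_1-\eta_2c_1\bigr)$ and comparing the $x^{k-1}$ coefficients gives the same criterion for every $k\ge 2$.
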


By Lemma~{\ref{QECC}} and Theorem~{\ref{block5}}, we get the following proposition.

\begin{proposition}\label{QECC_block5}
    Let notations be the same as Theorem~{\upshape\ref{block5}}. Then there exists a $[[rp^{s} , 1 , wt(\mathcal{C}_1^{\perp} \setminus \mathcal{C}_1)]]_{q^2}$ quantum stabilizer code.
\end{proposition}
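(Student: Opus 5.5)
The plan is to apply Lemma~\ref{QECC} to the code $\mathcal{C}_1$ of Theorem~\ref{block5}. By that theorem, $\mathcal{C}_1=ev_{\boldsymbol{\alpha},\boldsymbol{v}}(\mathcal{S}_1)$ is a linear code over $\mathbb{F}_{q^2}$ with length $n=rp^s$ and dimension $k=\frac{rp^s-1}{2}$. It also satisfies $\mathcal{C}_1\subseteq\mathcal{C}_1^{\perp}$ with respect to the Euclidean inner product over $\mathbb{F}_{q^2}$. So the hypothesis of Lemma~\ref{QECC} holds with ground field $\mathbb{F}_{q^2}$.

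Before invoking the lemma, I will confirm two facts.

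\textbf{Dimension.} The dimension of $\mathcal{C}_1$ is exactly $\frac{rp^s-1}{2}$. The polynomials in $\mathcal{S}_1$ have degree at most $\frac{rp^s-1}{2}+1<rp^s$, and $\mathcal{S}_1$ is $k$-dimensional, since each $f$ is determined by its coefficients $f_0,\dots,f_{k-1}$ of $1,x,\dots,x^{k-1}$. Because the $\alpha_i$ are distinct and all $v_i\neq 0$, the evaluation map is injective on polynomials of degree $<n$. Hence $\dim\mathcal{C}_1=k$.

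\textbf{Self-orthogonality over the correct field.} Theorem~\ref{block} was derived over an arbitrary finite field, and its two conditions are purely algebraic. For condition (i), we have $v_i^2=u_i$ with $\lambda=1$. For condition (ii), the identity $G_1'=A'H_1'$ depends only on $\sigma_1=a$, $\sigma_2=b$ and the chosen $\eta_1,\eta_2$. Both conditions therefore remain valid when the code is viewed over $\mathbb{F}_{q^2}$. The proof of Theorem~\ref{block5} already checked this, so I will simply cite it.

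With these in hand, Lemma~\ref{QECC} yields a quantum stabilizer code with parameters
\[
[[\,n,\;n-2k,\;wt(\mathcal{C}_1^{\perp}\setminus\mathcal{C}_1)\,]]_{q^2}.
\]
The remaining step is the arithmetic
\[
n-2k=rp^s-2\cdot\tfrac{rp^s-1}{2}=1,
\]
which uses that $rp^s$ is odd: $r$ is odd and $p$ is an odd prime. This gives exactly the parameters claimed in Proposition~\ref{QECC_block5}.

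I expect no real obstacle. The only point needing care is the field: the code lives over $\mathbb{F}_{q^2}$ rather than $\mathbb{F}_q$, because the $v_i$ are only guaranteed to lie in $\mathbb{F}_{q^2}$. The quantum code is accordingly $q^2$-ary, and Lemma~\ref{QECC} must be applied with $q$ replaced by $q^2$, which is legitimate since the lemma holds for any finite field.
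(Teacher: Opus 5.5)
Your proposal is correct and follows the paper's argument. The paper obtains this proposition by applying Lemma~\ref{QECC} over $\mathbb{F}_{q^2}$ to the self-orthogonal $[rp^s,\frac{rp^s-1}{2}]$ code $\mathcal{C}_1$ of Theorem~\ref{block5}, which gives $n-2k=1$; your checks of the dimension and of the ground field only make explicit what the paper leaves implicit.
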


\begin{example}
    Let $s=1$, $t=1$ and $r=1$. Let $g(x) = x^{5} + x^4 + x^3 + 1 \in \mathbb{F}_{5}[x] $, then by computer, the splitting field of $g(x)$ over $\mathbb{F}_{5}$ is $\mathbb{F}_{5^4} $ and $\mathbb{F}_{5^4}^{*} = \langle \beta \rangle $, where the minimal polynomial of $\beta$ is $x^4 + 4x^2 + 4x + 2 $. Let $\mathbb{F}_{5^8}^{*} = \langle \gamma \rangle $, where the minimal polynomial of $\gamma$ is $x^8 + x^4 + 3x^2 + 4x + 2 $. Then the roots of $g(x)$ are $\left\{ 4, \beta^{512}, \beta^{64}, \beta^{352}, \beta^{320} \right\} $. Set 
    \[ \boldsymbol{\alpha} = (4, \beta^{512}, \beta^{64}, \beta^{352}, \beta^{320}) \] 
    and 
    \[ \boldsymbol{v} =(\gamma^{97656},\gamma^{89205},\gamma^{250713},\gamma^{213153},\gamma^{81693} ) .\]
    Let $\eta_{1} = \eta_{2} = 2 $, then \[
    \mathcal{S}_1 = \left\{ \sum_{i=0}^{1} f_i x^i + 2 f_1 ( x^2 + x^3 ) \mid f_i \in \mathbb{F}_{5^8} \text{ for } 0 \leq i \leq 1 \right\}.
    \]
    According to the Theorem~{\ref{block5}}, $\mathcal{C}_1 = ev_{\boldsymbol{\alpha},\boldsymbol{v}}(\mathcal{S}_1)$ is a $[5,2,\geq 2]_{5^{8}} $ self-orthogonal TGRS code. By Magma, $\mathcal{C}_1 = ev_{\boldsymbol{\alpha},\boldsymbol{v}}(\mathcal{S}_1)$ is a $[5,2,4]_{5^{8}} $ self-orthogonal MDS TGRS code and $\mathcal{C}_1^{\perp} $ is a $[5,3,3]_{5^8}$ MDS code. Applying the Proposition~{\ref{QECC_block5}}, we construct a quantum error-correcting code with parameters $[[5,1,3]]_{5^8} $, which are quantum MDS codes achieving the quantum Singleton bound.
\end{example}

\begin{theorem}\label{block6}
    Let $s$, $h$, $t$, $r$ be positive integers and $t \mid h$, $s \leq h$.  Let $p$ is an odd prime, $q=p^h$. Assume that $\mathbb{F}_q$ is the splitting field of $g(x) = x^{2rp^s} + a x^{2rp^s-1} + b x^{2rp^s-2} + c \in \mathbb{F}_{p^t}[x] $ with $a,b,c \neq 0$ and $b \neq \frac{1}{2}a^2$, $\alpha_1, \alpha_2, \cdots, \alpha_{2rp^s}$ are all roots of $g(x)$. There exist $v_i \in \mathbb{F}_{q^2}$ such that $v_i^2 = (g'(\alpha_i))^{-1} $ for $1 \leq i \leq {2rp^s}$. Let $\boldsymbol{\alpha} = (\alpha_1, \alpha_2, \cdots, \alpha_{2rp^s})$, $\boldsymbol{v} = (v_1, v_2, \cdots, v_{2rp^s})$. Let $\eta_1 = 2 a^{-1} (a^2 - 2b)^{-1} $ and $\eta_2 = - a \eta_1 $. Set
    \[
    \mathcal{S}_1 = \left\{ \sum_{i=0}^{rp^s-1} f_i x^i + 2 (a^2 - 2b)^{-1} (a^{-1} f_{rp^s-2} - f_{rp^s-1}) x^{rp^s+1} \mid f_i \in \mathbb{F}_{q^2} \text{ for } 0 \leq i \leq rp^s-1 \right\}.
    \]
    Then $\mathcal{C}_1 = ev_{\boldsymbol{\alpha},\boldsymbol{v}}(\mathcal{S}_1)$ is a $[2rp^s, rp^s, \geq rp^s-1]$ self-dual TGRS code over $\mathbb{F}_{q^2}$.
\end{theorem}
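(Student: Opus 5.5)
The plan is to mirror the proofs of Theorems~\ref{block3} and~\ref{block5}. We identify $\mathcal{C}_1$ as the code of Case 5 of Theorem~\ref{block}, with $n=2rp^s$ and $k=rp^s=\frac{n}{2}$, and then verify conditions (i) and (ii) there.

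\emph{Twist matrix.} Comparing $\mathcal{S}_1$ with the general form, the twist $x^{k-1+j}$ occurs only for $j=2$. The coefficient attached to $f_{k-2}$ is $2a^{-1}(a^2-2b)^{-1}=\eta_1$, and the coefficient attached to $f_{k-1}$ is $-2(a^2-2b)^{-1}=-a\eta_1=\eta_2$. Hence
\[
\Gamma=\begin{bmatrix} 0 & \eta_1\\ 0 & \eta_2\end{bmatrix},
\]
that is, $\eta_{11}=\eta_{21}=0$, $\eta_{12}=\eta_1$ and $\eta_{22}=\eta_2$. These are well defined because $a\neq 0$ and $b\neq \frac12 a^2$.

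\emph{Condition (i).} Since $p\mid n$, we have $g'(x)=-ax^{n-2}-2bx^{n-3}=-x^{n-3}(ax+2b)$. The only possible common roots of $g$ and $g'$ are $0$ and $-2b/a$. Here $g(0)=c\neq0$, and the point $-2b/a$ must be excluded (implicitly required, as in the paper's other constructions) so that $\gcd(g,g')=1$. Then the $\alpha_i$ are distinct and $u_i=(g'(\alpha_i))^{-1}\in\mathbb{F}_q^*$. Every element of $\mathbb{F}_q$ is a square in $\mathbb{F}_{q^2}$, so the stated $v_i\in\mathbb{F}_{q^2}$ exist with $v_i^2=u_i$, and condition (i) holds with $\lambda=1$.

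\emph{Condition (ii).} Since $n\ge 6$, the coefficients of $g$ give $\sigma_1=a$, $\sigma_2=b$ and $\sigma_3=0$. Substituting into Case 5 yields
\[
G_1''=\begin{bmatrix}-a\eta_1 & \eta_1\\ -a\eta_2 & \eta_2\end{bmatrix},\qquad
H_1'=\begin{bmatrix} a\eta_1 & -\eta_1\\ 0&0\end{bmatrix}.
\]
Because the second row of $H_1'$ vanishes, only the first column of $A''$ matters. Its entries are
\[
1+\eta_1(2ab-a^3)=1-a\eta_1(a^2-2b)=1-2=-1,
\]
\[
-a+\eta_2(2ab-a^3)=-a+a^2\eta_1(a^2-2b)=-a+2a=a .
\]
Therefore
\[
A''H_1'=\begin{bmatrix}-a\eta_1&\eta_1\\ a^2\eta_1&-a\eta_1\end{bmatrix}.
\]
Using $\eta_2=-a\eta_1$, this equals $G_1''$. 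So condition (ii) holds, and Case 5 of Theorem~\ref{block} shows that $\mathcal{C}_1$ is self-dual of length $2rp^s$ and dimension $rp^s$.

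\emph{Minimum distance.} Every polynomial in $\mathcal{S}_1$ has degree at most $k+1$, so $\mathcal{C}_1$ is a subcode of $GRS_{k+2}(\boldsymbol{\alpha},\boldsymbol{v})$. This gives $d\ge n-k-1=rp^s-1$.

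The main obstacle is the bookkeeping in condition (ii): correctly reading off $\eta_{ij}$ and the signs, and checking that the factor $a^2-2b$ cancels to give exactly $-1$ and $a$ in the first column of $A''$. The distinct-roots claim also needs $g(-2b/a)\neq0$, which I would state as an implicit requirement, as the earlier constructions in the paper do.
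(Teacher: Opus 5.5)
Your proposal is correct and follows the paper's proof essentially step for step. You take $\eta_{11}=\eta_{21}=0$, $\eta_{12}=\eta_1$, $\eta_{22}=\eta_2$; obtain condition (i) of Case~5 from $u_i=(g'(\alpha_i))^{-1}\in\mathbb{F}_q$ being a square in $\mathbb{F}_{q^2}$; check $G_1''=A''H_1'$ with $\sigma_1=a$, $\sigma_2=b$; and get $d\geq rp^s-1$ from $\mathcal{C}_1\subseteq GRS_{k+2}(\boldsymbol{\alpha},\boldsymbol{v})$. Your version is slightly more careful than the paper's in two places: you state $\sigma_3=0$ explicitly, and you note that distinct roots also need $g(-2b/a)\neq 0$, which the paper's claim $\gcd(g,g')=1$ takes for granted.
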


\begin{proof}
    Since $g'(x) = -ax^{2rp^s-2} - 2b x^{2rp^s-3} $, we have $\gcd(g(x),g'(x)) = 1 $, then $g(x)$ has $2rp^s$ distinct roots in $\mathbb{F}_q$, namely, $\alpha_i \neq \alpha_j$ $(i \neq j)$. On the one hand, $u_i = (g'(\alpha_i))^{-1} \in \mathbb{F}_q$ is a square elements of $\mathbb{F}_{q^2}$, then there exist $v_i \in \mathbb{F}_{q^2}$ such that $u_i = v_i^2$, so the condition (i) of Case 5 of Theorem~{\ref{block}} holds.

    On the other hand, write $g(x) = \prod\limits_{i=1}^{2rp^s} (x-\alpha_i) = \sum\limits_{i=0}^{2rp^s} \sigma_i x^{2rp^s-i} $, we have $\sigma_1 = a$, $\sigma_2 = b$. Hence, 
    \[G_1^{\prime\prime} = \begin{bmatrix}
        -\eta_1 a & \eta_1 \\
        -\eta_2 a & \eta_2
    \end{bmatrix},
    A^{\prime\prime} = \begin{bmatrix}
        1 + \eta_1 (2ab - a^3) & \eta_1(a^2-b) \\
        -a + \eta_2(2ab - a^3) & 1 + \eta_2(a^2-b)
    \end{bmatrix},
    H_1^{\prime} = \begin{bmatrix}
        -\eta_2 & -\eta_1 \\
        0 & 0
    \end{bmatrix}.\] 
    Given that $\eta_1 = 2 a^{-1} (a^2 - 2b)^{-1} $ and $\eta_2 = - a \eta_1 $, it follows that $G_1^{\prime\prime} = A^{\prime\prime} H_1^{\prime} $. That is, the condition (ii) of Case 5 of Theorem~{\ref{block}} holds. Thus, and by Case 5 of Theorem~{\ref{block}}, $\mathcal{C}_1 = ev_{\boldsymbol{\alpha},\boldsymbol{v}}(\mathcal{S}_1)$ is a $[2rp^s, rp^s]_{q^2}$ self-dual TGRS code. Obviously, $\mathcal{C}_1$ is a subcode of $GRS_{rp^s+2}(\boldsymbol{\alpha},\boldsymbol{v})$. Therefore, $\mathcal{C}_1$ has the minimum distance $d \geq rp^s-1$. It completes the proof.
\end{proof}

\begin{proposition}\label{MDS_block6}
    Let notations be the same as Theorem~{\upshape\ref{block6}}, $ 1 + \eta_1 (c_1 c_2 - c_3) + \eta_2 (c_1^2 - c_2) + \eta_1 \eta_2 c_1 c_2 (c_1^2 + c_2 - \eta_1 c_2 - 1) \neq 0 $. Then there exists a $[2rp^s, rp^s, rp^s+1]_{q^2}$ self-dual MDS TGRS code.
\end{proposition}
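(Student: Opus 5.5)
The plan is to combine Theorem~\ref{block6}, which gives self-duality, with the MDS criterion for the column-type twist (the third corollary after Lemma~\ref{MDS}). We then read off the minimum distance from the Singleton bound.

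\emph{Step 1: identify the twist matrix.} Here $n=2rp^s$ and $k=rp^s$, so $x^{rp^s+1}=x^{k-1+2}$. The polynomial space $\mathcal{S}_1$ of Theorem~\ref{block6} is therefore the space $\mathcal{S}_1$ of Section~\ref{sec2} with
\[
\Gamma=\begin{bmatrix} 0 & \eta_1\\ 0 & \eta_2\end{bmatrix},\qquad \eta_1=2a^{-1}(a^2-2b)^{-1},\quad \eta_2=-a\eta_1=-2(a^2-2b)^{-1}.
\]
Indeed, the coefficient of $f_{k-2}$ in front of $x^{k+1}$ is $2(a^2-2b)^{-1}a^{-1}=\eta_1$, and that of $f_{k-1}$ is $-2(a^2-2b)^{-1}=\eta_2$. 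Thus $A_1(\eta)$ is exactly the matrix of the third corollary following Lemma~\ref{MDS}, whose only nonzero entries are $\eta_{12}=\eta_1$ and $\eta_{22}=\eta_2$.

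\emph{Step 2: self-duality.} By Theorem~\ref{block6}, $\mathcal{C}_1=ev_{\boldsymbol{\alpha},\boldsymbol{v}}(\mathcal{S}_1)$ is a $[2rp^s,rp^s]$ self-dual TGRS code over $\mathbb{F}_{q^2}$. This uses $u_i=v_i^2$ together with condition (ii) of Case 5 of Theorem~\ref{block}.

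\emph{Step 3: the MDS property.} We check that the hypotheses of the corollary hold: $k=rp^s\ge 3$ because $p$ is an odd prime, and $k<n=2k$. Lemma~\ref{MDS} is a statement about evaluation codes over an arbitrary finite field, so it applies over $\mathbb{F}_{q^2}$. By the corollary, $\mathcal{C}_1$ is MDS if and only if
\[
1+\eta_1(c_1c_2-c_3)+\eta_2(c_1^2-c_2)+\eta_1\eta_2c_1c_2(c_1^2+c_2-\eta_1c_2-1)\neq 0,
\]
where $c_j$ are the coefficients of $G(x)=\prod_{i\in I}(x-\alpha_i)$. This must hold for every $k$-subset $I\subseteq\{1,\dots,n\}$. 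The hypothesis of the proposition is read in this sense, quantified over all $I$.

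\emph{Step 4: conclusion and main obstacle.} Under that hypothesis, $d=n-k+1=rp^s+1$, so $\mathcal{C}_1$ is a $[2rp^s,rp^s,rp^s+1]_{q^2}$ self-dual MDS TGRS code. The only real obstacle is Step 1: matching the explicit twist in $\mathcal{S}_1$ with the correct entries $\eta_{12},\eta_{22}$ of $A_1(\eta)$, so that the right corollary and the right determinant condition are invoked. The rest is direct citation.
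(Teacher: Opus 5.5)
Your route is the one the paper implicitly uses: Theorem~\ref{block6} for self-duality, the third corollary after Lemma~\ref{MDS} for the MDS property, then Singleton. Steps 1, 2 and 4 are fine. In particular, $\eta_{11}=\eta_{21}=0$, $\eta_{12}=\eta_1$, $\eta_{22}=\eta_2$ is the correct identification, and $k=rp^s\ge 3$ holds.

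\textbf{The gap is in Step 3.} The ``if and only if'' you import from that corollary is false as displayed, so the proposition's hypothesis does not yield the MDS property.

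\emph{Why the displayed formula cannot be $M$.} For this $A_1(\eta)$, the first row of the $2\times 2$ determinant $M(n,k,\boldsymbol{\alpha},A_1(\eta),I)$ depends affinely on $\eta_1$ only, and the second row on $\eta_2$ only. Hence $M$ has degree at most one in $\eta_1$. The displayed expression, however, contains the monomial $-\eta_1^2\eta_2c_1c_2^2$ (from the factor $-\eta_1c_2$ in the last bracket), and no other term cancels it.

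\emph{The correct criterion.} It follows directly. A nonzero $f\in\mathcal{S}_1$ vanishing on $\{\alpha_i : i\in I\}$ has degree at most $k+1$, so $f=G(x)(ax+b)$.
\begin{itemize}
\item The $x^k$-coefficient of $f$ is $0$, which forces $b=-ac_1$; hence $a\neq 0$.
\item Reading off lower coefficients gives $f_{k-1}=a(c_2-c_1^2)$ and $f_{k-2}=a(c_3-c_1c_2)$.
\item The $x^{k+1}$-coefficient must satisfy $a=\eta_1f_{k-2}+\eta_2f_{k-1}$, i.e.
\[
a\bigl(1+\eta_1(c_1c_2-c_3)+\eta_2(c_1^2-c_2)\bigr)=0 .
\]
\end{itemize}
Conversely, $G(x)(x-c_1)\in\mathcal{S}_1$ whenever this bracket vanishes. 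So $\mathcal{C}_1$ is MDS if and only if $1+\eta_1(c_1c_2-c_3)+\eta_2(c_1^2-c_2)\neq 0$ for every $k$-subset $I$. The paper's own definitions give the same result: $g_{m,k-1}=-\eta_{m2}(c_3-c_1c_2)$ and $g_{m,k}=-\eta_{m2}(c_2-c_1^2)$. This block has rank one, so $M=1+g_{k-1,k-1}+g_{kk}$, with no $\eta_1\eta_2$ term at all.

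\emph{Consequence.} Suppose that for some $I$ the correct expression vanishes while $\eta_1\eta_2c_1c_2(c_1^2+c_2-\eta_1c_2-1)\neq 0$. Then the stated hypothesis holds at $I$, yet $ev_{\boldsymbol{\alpha},\boldsymbol{v}}\bigl(G(x)(x-c_1)\bigr)$ is a nonzero codeword with at least $k$ zeros. Hence $d\le rp^s$ and the code is not MDS. The paper's implicit proof has the same defect, since it rests on the same corollary.

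\textbf{How to repair it.} Replace the citation in Step 3 by the direct computation above. Replace the hypothesis by $1+\eta_1(c_1c_2-c_3)+\eta_2(c_1^2-c_2)\neq 0$ for all $k$-subsets $I$. With that correction, Steps 1, 2 and 4 go through unchanged.
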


\begin{example}
    Let $s=1$, $t=1$ and $r=1$. Let $g(x) = x^{6} + x^{5} + x^{4} + 2 \in \mathbb{F}_{3}[x] $, then by computer, the splitting field of $g(x)$ over $\mathbb{F}_{3}$ is $\mathbb{F}_{3^6} $ and $\mathbb{F}_{3^6}^{*} = \langle \beta \rangle $, where the minimal polynomial of $\beta$ is $x^6 + 2x^4 + x^2 + 2x + 2$. Let $\mathbb{F}_{3^{12}}^{*} = \langle \gamma \rangle $, where the minimal polynomial of $\gamma$ is $x^{12} + x^6 + x^5 + x^4 + x^2 + 2 $. Then the roots of $g(x)$ are $\left\{ 2, \beta^{700}, \beta^{637}, \beta^{476}, \beta^{644}, \beta^{455} \right\} $. Set 
    \[ \boldsymbol{\alpha} = (2, \beta^{700}, \beta^{637}, \beta^{476}, \beta^{644}, \beta^{455}) \] 
    and
    \[    
    \boldsymbol{v} =(1,\gamma^{275940},\gamma^{166075},\gamma^{357700},\gamma^{30660},\gamma^{232505}).
    \]
    Let $\eta_{1} = 1 $ and $\eta_{2} = 2 $, then \[
    \mathcal{S}_1 = \left\{ \sum_{i=0}^{2} f_i x^i + (f_{1} + 2f_2) x^4  \mid f_i \in \mathbb{F}_{3^{12}} \text{ for } 0 \leq i \leq 2 \right\}.
    \]
    According to the Theorem~{\ref{block6}} and Proposition~{\ref{MDS_block6}}, $\mathcal{C}_1 = ev_{\boldsymbol{\alpha},\boldsymbol{v}}(\mathcal{S}_1)$ is a $[6,3,4]_{3^{12}} $ self-dual MDS TGRS code.
\end{example}

\begin{theorem}\label{line3}
    Let $s$, $h$, $t$, $r$ be positive integers, $r$ is odd and $t \mid h$, $s < h$. Let $p$ is an odd prime, $q=p^h$ and such that $q \equiv 1 \pmod 8 $ or $q \equiv 3 \pmod 8 $. Assume that $\mathbb{F}_q$ is the splitting field of $g(x) = x^{rp^s} + \sum\limits_{i=1}^{\frac{rp^s-1}{2}}a_i x^{rp^s-i} + c \in \mathbb{F}_{p^t}[x] $ with $a_i,c \neq 0 $ for $1 \leq i \leq \frac{rp^s-1}{2}$, $\alpha_1, \alpha_2, \cdots, \alpha_{rp^s}$ are all roots of $g(x)$. There exist $v_i \in \mathbb{F}_{q^2}$ such that $v_i^2 = (g'(\alpha_i))^{-1} $ for $1 \leq i \leq {rp^s}$. Let $\boldsymbol{\alpha} = (\alpha_1, \alpha_2, \cdots, \alpha_{rp^s})$, $\boldsymbol{v} = (v_1, v_2, \cdots, v_{rp^s})$. 
    Let $ \mathcal{D} = \left\{ a \in \mathbb{F}_{q}^{*} \mid a^{\frac{q-1}{2}} = 1 \right\} $. Let $ \eta_{\frac{rp^s+1}{2}} \in \mathcal{D} $, $\eta_1 = \eta_{\frac{rp^s+1}{2}} a_{\frac{rp^s-1}{2}} + (-2\eta_{\frac{rp^s+1}{2}})^{\frac{1}{2}} $ and $\eta_i = \eta_{\frac{rp^s+1}{2}} a_{\frac{rp^s+1}{2}-i} $ for $2 \leq i \leq \frac{rp^s-1}{2} $. Set 
    \[
    \begin{aligned}
    \mathcal{S}_2 = \Bigl\{ & \sum_{i=0}^{\frac{rp^s-3}{2}} f_i x^i + f_0 (-2\eta_{\frac{rp^s+1}{2}})^{\frac{1}{2}} x^{\frac{rp^s-1}{2}} + f_0 \eta_{\frac{rp^s+1}{2}} ( \sum_{j=1}^{\frac{rp^s-1}{2}} a_j x^{rp^s-1-j} + x^{rp^s-1}) \\
    &\quad \Bigm| f_i \in \mathbb{F}_{q^2} \text{ for } 0 \leq i \leq \frac{rp^s-3}{2} \Bigr\}.
    \end{aligned}
    \]
    Then $\mathcal{C}_2 = ev_{\boldsymbol{\alpha},\boldsymbol{v}}(\mathcal{S}_2)$ is a $[rp^s, \frac{rp^s-1}{2}]$ self-orthogonal TGRS code over $\mathbb{F}_{q^2}$.
\end{theorem}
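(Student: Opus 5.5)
The plan is to apply Case 1 of Theorem~\ref{line} with $n=rp^s$ and $k=\frac{rp^s-1}{2}$, so that $k=\frac{n-1}{2}$ and $n-k=\frac{rp^s+1}{2}$. Write $m=\frac{rp^s+1}{2}$, so $\eta_{n-k}=\eta_m$. The structure follows the proofs of Theorems~\ref{line1} and~\ref{block5}.

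\textbf{Separability and condition (i).} Since $p\mid rp^s$, we have
\[
g'(x)=\sum_{i=1}^{(rp^s-1)/2}(-i)a_i x^{rp^s-i-1}.
\]
I will argue that $\gcd(g,g')=1$, so $g$ has $n$ distinct roots. This is implicitly assumed, as in Theorems~\ref{block5} and~\ref{block6}. Then $u_i=(g'(\alpha_i))^{-1}\in\mathbb{F}_q^*$. Every element of $\mathbb{F}_q$ is a square in $\mathbb{F}_{q^2}$, so $v_i$ with $v_i^2=u_i$ exists. Hence condition (i) holds with $\lambda=1$.

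\textbf{Coefficients and condition (ii).} Write $g(x)=\sum\sigma_ix^{n-i}$. Then:
\begin{itemize}
\item $\sigma_i=a_i$ for $1\le i\le \frac{rp^s-1}{2}=k$;
\item $\sigma_i=0$ for $k<i<n$;
\item $\sigma_n=c$.
\end{itemize}
Condition (ii) requires $\eta_{m-i}=\eta_m\sigma_i=\eta_m a_i$ for $1\le i\le k-1$, which means $\eta_j=\eta_m a_{m-j}$ for $2\le j\le m-1$. This is exactly the defining rule for $\eta_2,\dots,\eta_{k}$. Note that $\eta_1$ is deliberately excluded: it corresponds to $i=k$, which is not constrained by (ii). The polynomial in $\mathcal{S}_2$ also matches: the twist coefficient of $x^{k-1+j}$ is $\eta_j$, and $\eta_1=\eta_m a_k+(-2\eta_m)^{1/2}$ accounts for the extra term $(-2\eta_m)^{1/2}x^{k}$.

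\textbf{Condition (iii), the main step.} Here $n-2k=1$, so (iii) reads
\[
\eta_m\sigma_{n-1}+(\eta_1-\eta_m\sigma_{k})\,b_{k}=2 .
\]
Since $\sigma_{n-1}=0$ and $\eta_1-\eta_m\sigma_k=(-2\eta_m)^{1/2}$, the remaining task is to compute $b_k$. From (ii), $b_1=\dots=b_{k-1}=0$. The recursion for $b$ then gives
\[
b_k=\sigma_k-\frac{\eta_{m-k}}{\eta_m}=a_k-\frac{\eta_1}{\eta_m}=-\frac{(-2\eta_m)^{1/2}}{\eta_m}.
\]
Therefore the left side of (iii) equals $-\frac{(-2\eta_m)}{\eta_m}=2$, as required.

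\textbf{Expected obstacle.} The delicate point is that $(-2\eta_m)^{1/2}$ must exist in the field. I would use the hypothesis $\eta_m\in\mathcal{D}$ together with $q\equiv1,3\pmod 8$. In both cases $-2$ is a square in $\mathbb{F}_q$, so $-2\eta_m$ is a square and $(-2\eta_m)^{1/2}\in\mathbb{F}_q\subseteq\mathbb{F}_{q^2}$. Even without this, it would lie in $\mathbb{F}_{q^2}$, where the code is defined. With (i)--(iii) verified, Case 1 of Theorem~\ref{line} gives the conclusion. I expect the separability claim and the index bookkeeping between $\eta_j$ and $\sigma_i$ to be the places needing care.
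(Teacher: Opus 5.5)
Your proposal is correct and follows the paper's proof. Both apply Case 1 of Theorem~\ref{line} with $n=rp^s$, $k=\frac{rp^s-1}{2}$ and verify (i)--(iii); your computation $b_1=\cdots=b_{k-1}=0$, $b_k=-(-2\eta_m)^{1/2}/\eta_m$ supplies the check of (iii) that the paper only calls straightforward. Your handling of separability is also right: your formula for $g'(x)$ is correct, whereas the paper's wrongly keeps an $x^{rp^s-1}$ term, and distinctness of the $\alpha_i$ cannot be proved from the stated hypotheses (e.g.\ $x^5+x^4+x^3+4$ over $\mathbb{F}_5$ has the double root $3$), so it must be taken as the standing assumption $\alpha_i\neq\alpha_j$.
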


\begin{proof}
    Since $g'(x) = x^{rp^s-1} - \sum\limits_{i=1}^{\frac{rp^s-1}{2}} i a_{i} x^{rp^s-i-1} $, we have $\gcd(g(x),g'(x)) = 1$, then $g(x)$ has $rp^s$ distinct roots in $\mathbb{F}_q$, namely, $\alpha_i \neq \alpha_j$ $(i \neq j)$. On the one hand, $u_i = (g'(\alpha_i))^{-1} \in \mathbb{F}_q$ is a square elements of $\mathbb{F}_{q^2}$, then there exist $v_i \in \mathbb{F}_{q^2}$ such that $u_i = v_i^2$, so the condition (i) of Case 1 of Theorem~{\ref{line}} holds.

    On the other hand, write $g(x) = \prod\limits_{i=1}^{rp^s} (x-\alpha_i) = \sum\limits_{i=0}^{rp^s} \sigma_i x^{rp^s-i} $, we have $\sigma_i = a_i$ for $1 \leq i \leq \frac{rp^s-1}{2} $. It is straightforward to verify from the construction of $\eta_{i} $ for $1 \leq i \leq \frac{rp^s+1}{2} $ that $\eta_{\frac{rp^s+1}{2}-i} = \eta_{\frac{rp^s-1}{2}} \sigma_i$ for $1 \leq i \leq \frac{rp^s-3}{2} $ and $\eta_{\frac{rp^s+1}{2}} \sigma_{rp^s-1} + \eta_1 b_{\frac{rp^s-1}{2}} - \eta_{\frac{rp^s+1}{2}} \sigma_{\frac{rp^s-1}{2}} b_{\frac{rp^s-1}{2}} = 2 $. That is, condition (ii) and (iii) of Case 1 of Theorem~{\ref{line}} hold. Thus, and by Case 1 of Theorem~{\ref{line}}, $\mathcal{C}_2 = ev_{\boldsymbol{\alpha},\boldsymbol{v}}(\mathcal{S}_2)$ is a $[rp^s, \frac{rp^s-1}{2}]_{q^2}$ self-orthogonal TGRS code.
\end{proof}

\begin{proposition}\label{MDS_line3}
    Let notations be the same as Theorem~{\upshape\ref{line3}}, $ M(n,k,\boldsymbol{\alpha},A_2(\eta),I) \neq 0 $. Then there exists a $[rp^s, \frac{rp^s-1}{2}, \frac{rp^s+3}{2}]_{q^2}$ self-orthogonal MDS TGRS code.
\end{proposition}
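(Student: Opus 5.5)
The plan is to combine Theorem~\ref{line3} with Lemma~\ref{MDS}, in the same way the earlier MDS propositions (Propositions~\ref{MDS_line1} and \ref{MDS_line2}) follow from their construction theorems. With $n=rp^s$ and $k=\frac{rp^s-1}{2}$, Theorem~\ref{line3} shows that $\mathcal{C}_2=ev_{\boldsymbol{\alpha},\boldsymbol{v}}(\mathcal{S}_2)$ is an $[rp^s,\frac{rp^s-1}{2}]_{q^2}$ self-orthogonal TGRS code. It therefore remains to prove that, under the extra hypothesis, its minimum distance equals $n-k+1=\frac{rp^s+3}{2}$.

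\textbf{Step 1: realize $\mathcal{C}_2$ as an $A_2(\eta)$-twisted code.} I will check that $\mathcal{S}_2$ in Theorem~\ref{line3} is exactly the space $\mathcal{S}_2$ of Section~\ref{sec2}, with twist vector $(\eta_1,\dots,\eta_{n-k})$, where $n-k=\frac{rp^s+1}{2}$. In the Section~\ref{sec2} description, $f_0$ multiplies $\sum_{j=1}^{n-k}\eta_j x^{k-1+j}$. The monomial $x^{k}=x^{\frac{rp^s-1}{2}}$ corresponds to $j=1$, and the definition of $\eta_1$ contributes the extra term $(-2\eta_{\frac{rp^s+1}{2}})^{1/2}$ there. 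The monomials $x^{rp^s-1-j}$ correspond to the indices $\eta_{\frac{rp^s+1}{2}-j}=\eta_{\frac{rp^s+1}{2}}a_j$. Hence the twist matrix of this code is precisely $A_2(\eta)$ with first row $(\eta_1,\dots,\eta_{n-k})$.

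\textbf{Step 2: apply the MDS criterion.} To use Lemma~\ref{MDS}, and hence the corollary for $A_2(\eta)$, I need $3\le k<n$. The bound $k<n$ is immediate. The bound $k\ge 3$ requires $rp^s\ge 7$; for smaller lengths, I would treat them separately or note that the hypothesis implicitly excludes them. The assumption $M(n,k,\boldsymbol{\alpha},A_2(\eta),I)\neq 0$ is understood to hold for every $k$-subset $I$. Then Lemma~\ref{MDS} gives that $\mathcal{C}_2$ is MDS, with $d=n-k+1=\frac{rp^s+3}{2}$.

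\textbf{Step 3: check the ground field.} Lemma~\ref{MDS} is stated over the field containing $\boldsymbol{\alpha}$, $\boldsymbol{v}$ and $\eta$, and here that field is $\mathbb{F}_{q^2}$. The $v_i$ lie in $\mathbb{F}_{q^2}$, and $(-2\eta)^{1/2}$ lies in $\mathbb{F}_{q}$ or $\mathbb{F}_{q^2}$. This causes no problem, because the MDS criterion does not depend on $\boldsymbol{v}$: scaling columns by nonzero $v_i$ preserves weights.

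\textbf{Main obstacle.} The delicate point is the index bookkeeping in Step 1. I must confirm that the coefficient pattern in $\mathcal{S}_2$ matches $A_2(\eta)$ exactly, and that the $M$-determinant in the hypothesis is computed for that same matrix. Once this matching is verified, the proposition follows directly from Theorem~\ref{line3} together with Lemma~\ref{MDS}.
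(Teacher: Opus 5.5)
Your proposal is correct and follows the argument the paper intends. The paper gives no separate proof: the proposition is read off from Theorem~\ref{line3} (self-orthogonality, with $n=rp^s$ and $k=\frac{rp^s-1}{2}$) together with the $A_2(\eta)$ case of Lemma~\ref{MDS}, which gives $d=n-k+1=\frac{rp^s+3}{2}$. Your index check, showing that $\mathcal{S}_2$ has twist row $(\eta_1,\dots,\eta_{n-k})$, is right, and the restriction $k\ge 3$ (i.e.\ $rp^s\ge 7$) that you flag is a gap the paper itself leaves unaddressed, not a defect of your argument.
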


By Lemma~{\ref{QECC}} and Theorem~{\ref{line3}}, we get the following proposition.

\begin{proposition}\label{QECC_line3}
    Let notations be the same as Theorem~{\upshape\ref{line3}}. Then there exists a $[[rp^{s} , 1, wt(\mathcal{C}_2^{\perp} \setminus \mathcal{C}_2)]]_{q^2}$ quantum stabilizer code.
\end{proposition}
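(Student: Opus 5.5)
This follows immediately from Lemma~\ref{QECC} applied to the code $\mathcal{C}_2$ of Theorem~\ref{line3}, in the same way as Propositions~\ref{QECC_block1}, \ref{QECC_block2}, \ref{QECC_line1} and \ref{QECC_block5}. First, I invoke Theorem~\ref{line3}. Under the stated hypotheses, $\mathcal{C}_2 = ev_{\boldsymbol{\alpha},\boldsymbol{v}}(\mathcal{S}_2)$ is a self-orthogonal linear code over $\mathbb{F}_{q^2}$ of length $n = rp^s$. Its dimension is $k = \frac{rp^s-1}{2}$, since $\mathcal{S}_2$ is a $k$-dimensional space of polynomials and the evaluation map is injective. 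Injectivity holds because $\deg f \leq rp^s - 1 < n$ for every $f \in \mathcal{S}_2$, the $\alpha_i$ are distinct, and all $v_i \neq 0$. So $\mathcal{C}_2 \subseteq \mathcal{C}_2^{\perp}$ with parameters $[rp^s, \frac{rp^s-1}{2}]_{q^2}$.

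Second, I apply Lemma~\ref{QECC} with the base field $\mathbb{F}_{q^2}$ in place of $\mathbb{F}_q$. The lemma is stated for an arbitrary finite field, so this substitution is allowed. It yields a quantum stabilizer code with parameters $[[n, n-2k, wt(\mathcal{C}_2^{\perp}\setminus\mathcal{C}_2)]]_{q^2}$. Computing $n - 2k = rp^s - (rp^s - 1) = 1$ gives exactly the claimed $[[rp^s, 1, wt(\mathcal{C}_2^{\perp}\setminus\mathcal{C}_2)]]_{q^2}$.

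Third, I note that the quantity $wt(\mathcal{C}_2^{\perp}\setminus\mathcal{C}_2)$ is well defined. Since $\dim\mathcal{C}_2^{\perp} = n - k = k + 1 > k$, the set $\mathcal{C}_2^{\perp}\setminus\mathcal{C}_2$ is nonempty. The only point needing care is the dimension count in the first step, so that $n - 2k = 1$; there is no genuine obstacle.
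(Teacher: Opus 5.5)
Your proposal is correct and follows the paper's route: the paper derives this proposition directly from Lemma~\ref{QECC}, applied to the self-orthogonal $[rp^s,\frac{rp^s-1}{2}]_{q^2}$ code $\mathcal{C}_2$ of Theorem~\ref{line3}, which gives $n-2k=1$. Your explicit checks of the dimension (injectivity of the evaluation map on $\mathcal{S}_2$) and of the nonemptiness of $\mathcal{C}_2^{\perp}\setminus\mathcal{C}_2$ are details the paper leaves implicit.
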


\begin{example}
    Let $s=1$, $t=1$ and $r=1$. Let $g(x) = x^{7} + x^{6} + x^{5} + 6x^4 + 2 \in \mathbb{F}_{7}[x] $, then by computer, the splitting field of $g(x)$ over $\mathbb{F}_{7}$ is $\mathbb{F}_{7^{2}} $ and $\mathbb{F}_{7^{2}}^{*} = \langle \beta \rangle $, where the minimal polynomial of $\beta$ is $x^2 + 6x + 3$. Let $\mathbb{F}_{7^{4}}^{*} = \langle \gamma \rangle $, where the minimal polynomial of $\gamma$ is $x^4 + 5x^2 + 4x + 3 $. Then the roots of $g(x)$ are $\left\{ 6, 3, 2, \beta^{12}, \beta^{23}, \beta^{17}, \beta^{36} \right\} $. 
    Set 
    \[ \boldsymbol{\alpha} = (6, 3, 2, \beta^{12}, \beta^{23}, \beta^{17}, \beta^{36}) \] 
    and 
    \[    
    \boldsymbol{v} =(\gamma^{1000}, \gamma^{200}, \gamma^{800}, \gamma^{825}, \gamma^{650}, \gamma^{950}, \gamma^{975}).
    \]
    Let $\eta_1 = \gamma^{64} $ and $\eta_i = \gamma^{492} $ for $2 \leq i \leq 4 $, then \[
    \mathcal{S}_2 = \left\{ \sum_{i=0}^{2} f_i x^i + f_0 ( \gamma^{64} x^3 + \gamma^{492} \sum_{i=4}^6 x^{i})  \mid f_i \in \mathbb{F}_{7^{4}} \text{ for } 0 \leq i \leq 2 \right\}.
    \]
    According to the Theorem~{\ref{line3}} and Proposition~{\ref{MDS_line3}}, $\mathcal{C}_2 = ev_{\boldsymbol{\alpha},\boldsymbol{v}}(\mathcal{S}_2)$ is a $[7,3,5]_{7^{4}} $ self-orthogonal MDS TGRS code and $\mathcal{C}_2^{\perp} $ is a $[7,4,4]_{7^4} $ MDS code. Applying the Proposition~{\ref{QECC_line3}}, we construct a quantum error-correcting code with parameters $[[7,1,4]]_{7^4} $, which are quantum MDS codes achieving the quantum Singleton bound.
\end{example}

\begin{theorem}\label{line4}
    Let $s$, $h$, $t$, $r$ be positive integers, and $t \mid h$, $s < h$. Let $p$ be an odd prime, $q=p^h$. Assume that $a_{2rp^s-1} \in \mathbb{F}_{p^t}^{*} $ and $a_i \in \mathbb{F}_{p^t}^{*}$ for $ 1 \leq i \leq rp^s-1 $ such that the polynomial of the form $g(x) = x^{2rp^s} + \sum\limits_{i=1}^{2rp^s} a_i x^{2rp^s-i} \in \mathbb{F}_{p^t}[x]$ has $2rp^s$ distinct roots in $\mathbb{F}_q$, $\alpha_1, \alpha_2, \cdots, \alpha_{2rp^s}$ are all roots of $g(x)$. There exist $v_i \in \mathbb{F}_{q^2}$ such that $v_i^2 = (g'(\alpha_i))^{-1} $ for $1 \leq i \leq {2rp^s}$. Let $\boldsymbol{\alpha} = (\alpha_1, \alpha_2, \cdots, \alpha_{2rp^s})$, $\boldsymbol{v} = (v_1, v_2, \cdots, v_{2rp^s})$. Let $\eta_{rp^s} = 2 a_{2rp^s-1}^{-1} $ and $\eta_i = \eta_{rp^s} a_{rp^s-i} $ for $1 \leq i \leq rp^s-1 $. Set
    \[
    \mathcal{S}_2 = \left\{ \sum_{i=0}^{rp^s-1} f_i x^i + 2 a_{2rp^s-1}^{-1} f_0 ( \sum_{j=1}^{rp^s-1} a_j x^{2rp^s-1-j} + x^{2rp^s-1})  \mid f_i \in \mathbb{F}_{q^2} \text{ for } 0 \leq i \leq rp^s-1 \right\}.
    \]
    Then $\mathcal{C}_2 = ev_{\boldsymbol{\alpha},\boldsymbol{v}}(\mathcal{S}_2)$ is a $[2rp^s, rp^s]$ self-dual TGRS code over $\mathbb{F}_{q^2}$.
\end{theorem}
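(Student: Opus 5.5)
The plan is to verify the three conditions of Case 2 of Theorem~\ref{line} with $n=2rp^s$ and $k=rp^s$. In $\mathcal{S}_2$ the twist is $f_0\sum_{j=1}^{n-k}\eta_j x^{k-1+j}$ with $n-k=rp^s$. Matching coefficients, $x^{2rp^s-1}=x^{k-1+rp^s}$ carries $\eta_{rp^s}$. For $1\le j\le rp^s-1$, the monomial $x^{2rp^s-1-j}=x^{k-1+(rp^s-j)}$ carries $\eta_{rp^s}a_j$, so $\eta_{rp^s-j}=\eta_{rp^s}a_j$, i.e.\ $\eta_i=\eta_{rp^s}a_{rp^s-i}$. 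This agrees with the stated definition, so $\mathcal{S}_2$ is indeed of the form used in Theorem~\ref{line}, with $\eta_{n-k}=\eta_{rp^s}=2a_{2rp^s-1}^{-1}\neq 0$.

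\textbf{Step 1: the distinct roots and condition (i).} Distinctness of the $\alpha_i$ is part of the hypothesis. Since $g(x)=\prod_{i=1}^{n}(x-\alpha_i)$, differentiating and evaluating at $\alpha_i$ gives $g'(\alpha_i)=\prod_{j\neq i}(\alpha_i-\alpha_j)=u_i^{-1}$. By hypothesis $v_i^2=(g'(\alpha_i))^{-1}=u_i$, so condition (i) holds with $\lambda=1$. Such $v_i\in\mathbb{F}_{q^2}$ exist because every element of $\mathbb{F}_q$ is a square in $\mathbb{F}_{q^2}$, the same argument as in Theorem~\ref{block6}. One point needs care: Theorem~\ref{line} is being applied over $\mathbb{F}_{q^2}$. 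That theorem and the parity check matrix $H_2$ only require $\alpha_i$ distinct and $v_i\neq 0$ in the field of definition, so working over $\mathbb{F}_{q^2}$ is legitimate.

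\textbf{Step 2: conditions (ii) and (iii).} Writing $g(x)=\sum_{i=0}^{n}\sigma_i x^{n-i}$, we have $\sigma_i=a_i$ for all $i$, with $\sigma_0=1$. Condition (ii) asks that $\eta_{k-i}=\eta_k\sigma_i$ for $1\le i\le k-1$. This is exactly the defining relation $\eta_{rp^s-i}=\eta_{rp^s}a_i$. Condition (iii) asks that $\eta_k\sigma_{n-1}=2$. This reads $2a_{2rp^s-1}^{-1}\cdot a_{2rp^s-1}=2$, which holds.

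Applying Case 2 of Theorem~\ref{line} then shows that $\mathcal{C}_2$ is self-dual of length $2rp^s$ and dimension $rp^s$.

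The main obstacle is not computational but bookkeeping: the index matching between the twist polynomial in $\mathcal{S}_2$ and the vector $\eta$. I will check carefully that the term $x^{2rp^s-1-j}$ corresponds to $\eta_{rp^s-j}$ and that no extra monomials of degree below $k$ appear. This holds because $2rp^s-1-j\ge rp^s$ for $j\le rp^s-1$. A secondary issue is the dimension: it is $k$ because $\mathcal{S}_2$ is parametrized injectively by $(f_0,\dots,f_{k-1})$ and $ev_{\boldsymbol\alpha,\boldsymbol v}$ is injective on polynomials of degree less than $n$. Here the twist has degree $2rp^s-1=n-1$, so injectivity holds.
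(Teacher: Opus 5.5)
Your proposal is correct and follows the paper's own proof. Like the paper, you verify conditions (i)--(iii) of Case 2 of Theorem~\ref{line} using $u_i=(g'(\alpha_i))^{-1}=v_i^2$, $\sigma_i=a_i$, $\eta_{rp^s-i}=\eta_{rp^s}\sigma_i$ and $\eta_{rp^s}\sigma_{2rp^s-1}=2$; your extra bookkeeping (matching each monomial $x^{2rp^s-1-j}$ to $\eta_{rp^s-j}$, and counting the dimension via injectivity of evaluation on polynomials of degree less than $n$) is left implicit in the paper and is correct.
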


\begin{proof}
    On the one hand, $u_i = (g'(\alpha_i))^{-1} \in \mathbb{F}_q$ is a square elements of $\mathbb{F}_{q^2}$, then there exist $v_i \in \mathbb{F}_{q^2}$ such that $u_i = v_i^2$, so the condition (i) of Case 2 of Theorem~{\ref{line}} holds.

    On the other hand, write $g(x) = \prod\limits_{i=1}^{2rp^s} (x-\alpha_i) = \sum\limits_{i=0}^{2rp^s} \sigma_i x^{2rp^s-i} $, we have $\sigma_i = a_i$ for $1 \leq i \leq rp^s-1$ and $\sigma_{2rp^s-1} = a_{2rp^s-1} $. It is straightforward to verify from the construction of $\eta_{i} $ for $1 \leq i \leq rp^s $ that $\eta_{rp^{s}-i} = \eta_{rp^{s}} \sigma_i$ for $1 \leq i \leq rp^{s}-1 $ and $\eta_{rp^{s}} \sigma_{2rp^s-1} = 2 $. That is, condition (ii) and (iii) of Case 2 of Theorem~{\ref{line}} hold. Thus, and by Case 2 of Theorem~{\ref{line}}, $\mathcal{C}_2 = ev_{\boldsymbol{\alpha},\boldsymbol{v}}(\mathcal{S}_2)$ is a $[2rp^s, rp^{s}]_{q^2}$ self-dual TGRS code.
\end{proof}

\begin{proposition}\label{MDS_line4}
    Let notations be the same as Theorem~{\upshape\ref{line4}}, $ M(n,k,\boldsymbol{\alpha},A_2(\eta),I) \neq 0 $. Then there exists a $[2rp^s, rp^s, rp^s+1]_{q^2}$ self-dual MDS TGRS code.
\end{proposition}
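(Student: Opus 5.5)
The statement to prove is Proposition~\ref{MDS_line4}: the self-dual code of Theorem~\ref{line4} is MDS once $M(n,k,\boldsymbol{\alpha},A_2(\eta),I)\neq 0$. The plan is to combine Theorem~\ref{line4} with the MDS criterion (Lemma~\ref{MDS} and its specialisation to $A_2(\eta)$ stated as a corollary above). The code is the same $\mathcal{C}_2 = ev_{\boldsymbol{\alpha},\boldsymbol{v}}(\mathcal{S}_2)$ with $n=2rp^s$, $k=rp^s$, $\eta_{rp^s}=2a_{2rp^s-1}^{-1}$ and $\eta_i=\eta_{rp^s}a_{rp^s-i}$.

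First, Theorem~\ref{line4} already gives that $\mathcal{C}_2$ is a $[2rp^s, rp^s]_{q^2}$ self-dual code. The $\alpha_i$ are distinct by hypothesis, and the $v_i$ are nonzero because $v_i^2=(g'(\alpha_i))^{-1}$. So only the minimum distance remains.

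Second, we check the range $3\le k<n$ required by Lemma~\ref{MDS}. We have $k=rp^s\ge 3$ since $p$ is odd and $r,s\ge 1$, and $k<n=2k$. The matrix $A_2(\eta)$ has only its first row nonzero, namely $(\eta_1,\dots,\eta_{n-k})$ with $\eta_{n-k}=\eta_{rp^s}\neq 0$. This is exactly the setting of the corollary to Lemma~\ref{MDS} for $A_2(\eta)$. Lemma~\ref{MDS} is stated over $\mathbb{F}_q$, but it is field-independent and applies verbatim over $\mathbb{F}_{q^2}$: the $\alpha_i$, the $\eta_i$ and the coefficients $c_j$ of $G(x)=\prod_{i\in I}(x-\alpha_i)$ all lie in $\mathbb{F}_q\subseteq\mathbb{F}_{q^2}$, and scaling coordinates by the nonzero $v_i$ does not change weights.

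Third, the hypothesis is read as $M(n,k,\boldsymbol{\alpha},A_2(\eta),I)\neq 0$ for every $k$-subset $I$ of $\{1,\dots,n\}$. Then $A_2(\eta)\in\Omega$, and Lemma~\ref{MDS} shows that $\mathcal{C}_2$ is MDS with $d=n-k+1=rp^s+1$. Together with self-duality from Theorem~\ref{line4}, $\mathcal{C}_2$ is a $[2rp^s,rp^s,rp^s+1]_{q^2}$ self-dual MDS TGRS code. As an optional check, the dual of an MDS code is MDS, consistent with $\mathcal{C}_2^\perp=\mathcal{C}_2$.

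The only point needing care is the quantifier over $I$ and the transfer of Lemma~\ref{MDS} to $\mathbb{F}_{q^2}$, both handled in the second and third steps. Beyond that the argument is a direct combination of earlier results and needs no new computation.
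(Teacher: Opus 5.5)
Your proof is correct and follows the paper's (implicit) argument: self-duality comes from Theorem~\ref{line4}, and the MDS property comes from Lemma~\ref{MDS}, via its corollary for $A_2(\eta)$, once $M(n,k,\boldsymbol{\alpha},A_2(\eta),I)\neq 0$ for every $k$-subset $I$, which gives $d=n-k+1=rp^s+1$. Your extra checks, that $k=rp^s\ge 3$ and that the criterion carries over to $\mathbb{F}_{q^2}$ because the $v_i$ are nonzero, are details the paper leaves unstated.
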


\begin{example}
    Let $s=1$, $t=1$ and $r=1$. Let $g(x) = x^6 + x^5 + 2x^4 + 2x^3 + x^2 + x + 2 \in \mathbb{F}_{3}[x] $, then by computer, the splitting field of $g(x)$ over $\mathbb{F}_{3}$ is $\mathbb{F}_{3^{3}} $ and $\mathbb{F}_{3^{3}}^{*} = \langle \beta \rangle $, where the minimal polynomial of $\beta$ is $x^3 + 2x + 1$. Let $\mathbb{F}_{3^{6}}^{*} = \langle \gamma \rangle $, where the minimal polynomial of $\gamma$ is $x^6 + 2x^4 + x^2 + 2x + 2 $. Then the roots of $g(x)$ are $\left\{ \beta,\beta^{3},\beta^{9},\beta^{12},\beta^{4},\beta^{10} \right\} $. 
    Set 
    \[ \boldsymbol{\alpha} = (\beta,\beta^{3},\beta^{9},\beta^{12},\beta^{4},\beta^{10}) \] 
    and 
    \[   
    \boldsymbol{v} =(\gamma^{98},\gamma^{294},\gamma^{154},\gamma^{336},\gamma^{112},\gamma^{280}).
    \]
    Let $\eta_1 = 1 $ and $\eta_2 = \eta_3 = 2 $, then 
    \[
    \mathcal{S}_2 = \left\{ \sum_{i=0}^{2} f_i x^i + f_0 (x^3 + x^4 + x^5)  \mid f_i \in \mathbb{F}_{3^{6}} \text{ for } 0 \leq i \leq 2 \right\}.
    \]
    According to the Theorem~{\ref{line4}} and Proposition~{\ref{MDS_line4}}, $\mathcal{C}_2 = ev_{\boldsymbol{\alpha},\boldsymbol{v}}(\mathcal{S}_2)$ is a $[6,3,4]_{3^{6}} $ self-dual MDS TGRS code.
\end{example}

\section{Conclusion}\label{sec5}

In this paper, the form of the parity check matrices of $\mathcal{C}_1$ and $\mathcal{C}_2$ is determined. Furthermore, the necessary and sufficient conditions under which the TGRS codes $\mathcal{C}_1$ and $\mathcal{C}_2$ are self-orthogonal are given. We also provide several explicit constructions of self-orthogonal TGRS codes, which further yield self-orthogonal AMDS, NMDS, MDS TGRS codes and quantum MDS stabilizer codes that achieve the quantum Singleton bound. 

Recall that $A_1(\eta) = \begin{bmatrix}
    0 & 0 \\
    \Gamma & 0
\end{bmatrix}$, where $\Gamma = \begin{bmatrix}
    \eta_{11} & \eta_{12} \\
    \eta_{21} & \eta_{22}
\end{bmatrix}$. Table~\ref{table} summarizes the existence conditions for the self-orthogonal code $\mathcal{C}_1$ for different choices of $\Gamma$.

\begin{longtable}{p{1.3cm}p{1.3cm}p{3.1cm}p{1.6cm}p{3.6cm}}

\caption{The existence of the self-orthogonal code $\mathcal{C}_1$}\label{table} \\ 

\toprule
$\Gamma$ & $n$ & conditions & self-orthogonal & reference\\
\midrule
\endfirsthead

\toprule
$\Gamma$ & $n$ & conditions & self-orthogonal & reference\\
\midrule
\endhead

\bottomrule
\endfoot

\bottomrule
\endlastfoot

    \multirow{3}{*}[-1ex]{$\begin{bmatrix}
        \eta & 0  \\
        0 & 0 
    \end{bmatrix}$} & $\geq 2k+2$ & & $\checkmark$ & Th 6, \cite{liu2021construction} \\
    \cmidrule{2-5}
     & $2k+1$ &  & $\times$ & Th~{\ref{block}}, Case 4 \\
    \cmidrule{2-5}
     & $2k$ &  & $\times$ & Cor 5.2, \cite{sui2023new} \\
     
     \midrule
     \multirow{5}{*}[-2ex]{$\begin{bmatrix}
         0 & \eta \\
         0 & 0
     \end{bmatrix}$} & $\geq 2k+4$ & & $\checkmark$ & Th~{\ref{block}}, Case 1 \\
    \cmidrule{2-5}
      & $2k+3$ & & $\times$ & Th~{\ref{block}}, Case 2\\
      \cmidrule{2-5}
      & $2k+2$ & $\sigma_1 = 0$ & $\checkmark$ & Th~{\ref{block}}, Case 3\\
      \cmidrule{2-5}
      & $2k+1$ & & $\times$ & Th~{\ref{block}}, Case 4\\
      \cmidrule{2-5}
      & $2k$ & \begin{tabular}[c]{@{}c@{}}$\sigma_1 = 0$\\ $\eta \sigma_3 = 2$\end{tabular} & $\checkmark$ & Cor 5.3, \cite{sui2023new} \\
      
      \multirow{3}{*}[-1ex]{$\begin{bmatrix}
        0 & 0  \\
        \eta & 0 
    \end{bmatrix}$} & $\geq 2k+2$ & & $\checkmark$ & Th 6, \cite{liu2021construction} \\
    \cmidrule{2-5}
      & $2k+1$ &  & $\times$ & Th 4.1 (4), \cite{liang2025multi} \\
    \cmidrule{2-5}
      & $2k$ & $\eta \sigma_1 = 2$ & $\checkmark$ & \begin{tabular}[c]{@{}l@{}}Th 4.1 (1)(2)(3), 4.2 \cite{liang2025multi}\\ Th 2.8, \cite{huang2021mds}\end{tabular} \\
      
      \midrule
      \multirow{5}{*}[-2ex]{$\begin{bmatrix}
         0 & 0 \\
         0 & \eta
     \end{bmatrix}$} & $\geq 2k+4$ & & $\checkmark$ & Th~{\ref{block}}, Case 1\\
     \cmidrule{2-5}
      & $2k+3$ & & $\times$ & Cor~{\ref{block_single}}\\
      \cmidrule{2-5}
      & $2k+2$ & $\sigma_1 = 0$ & $\checkmark$ & Th 4.4 (2), \cite{zhang2025almost} \\
      \cmidrule{2-5}
      & $2k+1$ & $\eta \sigma_2 - \eta \sigma_1^2 = 2$ & $\checkmark$ & Th 4.6, \cite{zhang2025almost} \\
      \cmidrule{2-5}
      & $2k$ &  & $\times$ & \begin{tabular}[c]{@{}c@{}}Th 4.3, \cite{zhang2025almost}\\  Cor 5.2, \cite{sui2023new}\end{tabular}  \\
      
      \midrule
      \multirow{3}{*}[-2ex]{$\begin{bmatrix}
          \eta_1 & \eta_2 \\
          0 & 0
      \end{bmatrix}$} & $\geq 2k+4$ & & $\checkmark$ & Th~{\ref{block}}, Case 1\\
      \cmidrule{2-5}
      & $2k+3$ & & $\times$ & Th~{\ref{block}}, Case 2\\
      \cmidrule{2-5}
      & $2k+2$ & $\eta_2 \sigma_1 = 2 \eta_1$ & $\checkmark$ & Th~{\ref{block}}, Case 3\\
      
      \pagebreak
      & $2k+1$ & & $\times$ & Th~{\ref{block}}, Case 4\\
      \cmidrule{2-5}
      & $2k$ & \begin{tabular}[c]{@{}c@{}}$\eta_2 \sigma_1 = \eta_1$\\ $\eta_2 \sigma_3 = 2$\end{tabular} & $\checkmark$ & \begin{tabular}[c]{@{}c@{}}Cor 5.8, \cite{sui2023new}\\ Cor 4.3, \cite{ding2025new}\end{tabular} \\ 
      
      \midrule
      \multirow{5}{*}[-2ex]{$\begin{bmatrix}
          0 & 0 \\
          \eta_1 & \eta_2 
      \end{bmatrix}$} & $\geq 2k+4$ & & $\checkmark$ & Th 4.6, \cite{liang2025multi} \\
      \cmidrule{2-5}
      & $2k+3$ & & $\times$ & Th 4.8 (3), \cite{liang2025multi}\\
      \cmidrule{2-5}
      & $2k+2$ & $\eta_2 \sigma_1 = 2 \eta_1$ & $\checkmark$ & Th 4.5, \cite{liang2025multi} \\
      \cmidrule{2-5}
      & \multirow{2}{*}{$2k+1$} & $2 \eta_1 \eta_2 \sigma_1 + \eta_2^2 \sigma_2 - \eta_2^2 \sigma_1 -2 \eta_2 - \eta_1^2 = 0$ & $\checkmark$ & \multirow{2}{*}{Th 4.3, 4.4, \cite{liang2025multi}}\\
      \cmidrule{2-5}
      & $2k$ & & $\times$ & \begin{tabular}[c]{@{}l@{}}Cor 5.11, \cite{sui2023new}\\ Th 4.8 (1), \cite{liang2025multi}\end{tabular} \\
      
      \midrule
      \multirow{3}{*}[-1ex]{$\begin{bmatrix}
        \eta_1 & 0  \\
        \eta_2 & 0 
      \end{bmatrix}$} & $\geq 2k+2$ & & $\checkmark$ & Th~{\ref{block}}, Case 1, 2, 3\\
      \cmidrule{2-5}
      & $2k+1$ &  & $\times$ & Th~{\ref{block}}, Case 4\\
     \cmidrule{2-5}
     & $2k$ &  & $\times$ & Cor 5.11, \cite{sui2023new} \\
     
     \midrule
    \multirow{5}{*}[-2ex]{$\begin{bmatrix}
          0 & \eta_1 \\
          0 & \eta_2  
      \end{bmatrix}$} & $\geq 2k+4$ & & $\checkmark$ & Th~{\ref{block}}, Case 1 \\
      \cmidrule{2-5}
      & $2k+3$ & & $\times$ & Th~{\ref{block}}, Case 2\\
      \cmidrule{2-5}
      & $2k+2$ & $\sigma_1 = 0$ & $\checkmark$ & Th~{\ref{block}}, Case 3 \\
      \cmidrule{2-5}
      & $2k+1$ &  & $\times$ & Th~{\ref{block}}, Case 4\\
      \cmidrule{2-5}
      & $2k$ & \parbox{3.1cm}{$\eta_2 = - \eta_1 \sigma_1$\\ $\eta_1 \sigma_1^3  - 2 \eta_1 \sigma_1 \sigma_2 + \eta_1 \sigma_3= 2$} & $\checkmark$ & Cor 5.16, \cite{sui2023new} \\

      \midrule
      \multirow{5}{*}[-2ex]{$\begin{bmatrix}
          \eta_1 & 0 \\
          0 & \eta_2 
      \end{bmatrix}$} & $\geq 2k+4$ & & $\checkmark$ & Th~{\ref{block}}, Case 1 \\
      \cmidrule{2-5}
      & $2k+3$ & & $\times$ & Th~{\ref{block}}, Case 2\\
      \cmidrule{2-5}
      & $2k+2$ &  & $\times$ & Cor~{\ref{block_special}} \\
      \cmidrule{2-5}
      & $2k+1$ &  & $\times$ & Th~{\ref{block}}, Case 4\\
      \cmidrule{2-5}
      & $2k$ & \begin{tabular}[c]{@{}l@{}}$\sigma_1 = \sigma_3 = 0$\\ $\eta_1 \eta_2 \sigma_2 = \eta_1 + \eta_2 $\end{tabular} & $\checkmark$ & \begin{tabular}[c]{@{}l@{}}Cor 5.12, \cite{sui2023new}\\ Th 5.2, \cite{gu2023twisted}\end{tabular} \\

      \midrule
      \multirow{5}{*}[-2ex]{$\begin{bmatrix}
          0 & \eta_2 \\
          \eta_1 & 0 
      \end{bmatrix}$} & $\geq 2k+4$ & & $\checkmark$ & Th~{\ref{block}}, Case 1 \\
      \cmidrule{2-5}
      & $2k+3$ & & $\times$ & Th~{\ref{block}}, Case 2\\
      \cmidrule{2-5}
      & $2k+2$ &  & $\times$ & Cor~{\ref{block_special}} \\
      \cmidrule{2-5}
      & $2k+1$ &  & $\times$ & Th~{\ref{block}}, Case 4\\
      \cmidrule{2-5}
      & $2k$ & \begin{tabular}[c]{@{}l@{}}$\eta_1 \sigma_1 = 2 $\\ $\eta_2 \sigma_3 = 2$\\ $\eta_1 \sigma_2 =  \sigma_1$\end{tabular} & $\checkmark$ & \begin{tabular}[c]{@{}l@{}}Cor 5.15, \cite{sui2023new}\\ Th 5.1, \cite{sui2022mds}\end{tabular} \\
      \pagebreak

      \multirow{5}{*}[-2ex]{$\begin{bmatrix}
          \eta_{11} & \eta_{12} \\
          \eta_{21} & \eta_{22} 
      \end{bmatrix}$} & $\geq 2k+4$ & & $\checkmark$ & Th~{\ref{block}}, Case 1 \\
      \cmidrule{2-5}
      & $2k+3$ & $\eta_{12} = \eta_{22} = 0$ & $\checkmark$ & Th~{\ref{block}}, Case 2\\
      \cmidrule{2-5}
      & $2k+2$ & \parbox{3.1cm}{$\eta_{i2}(\eta_{i2} \sigma_1 - 2 \eta_{i1}) = 0 $\\ $  \eta_{11}\eta_{22} + \eta_{12}\eta_{21} = \eta_{12} \eta_{22} \sigma_1 $} & $\checkmark$ & Th~{\ref{block}}, Case 3 \\
      \cmidrule{2-5}
      & $2k+1$ & $G_1^{\prime} = A^{\prime} H_1^{\prime} $ & $\checkmark$ & Th~{\ref{block}}, Case 4\\
      \cmidrule{2-5}
      & $2k$ & $G_1^{\prime\prime} = A^{\prime\prime} H_1^{\prime} $ & $\checkmark$ & \begin{tabular}[c]{@{}l@{}}Th~{\ref{block}}, Case 4\\ Th 5.1, \cite{sui2023new}\end{tabular} \\
\end{longtable}

Table~{\ref{table}} summarizes all corollaries derived from Theorem~{\ref{block}} and covers various special cases of $\Gamma$ presented in previous studies. It follows that our work generalizes the relevant prior results. In future work, we should use the similar methods to derive the sufficient and necessary conditions for codes with $\Gamma = \begin{bmatrix}
    \eta_{11} & \eta_{12} & \cdots & \eta_{1l} \\
    \eta_{21} & \eta_{22} & \cdots & \eta_{2l}
\end{bmatrix}$ to be self-orthogonal.

\section*{Statements and Declarations}

\begin{itemize}
\item \textbf{Funding} This work was supported by the Fundamental Research Funds for the Central Universities (26CX03010A), China Education Innovation Research Fund (2022BL027), Shandong Provincial Natural Science Foundation of China (ZR2023LLZ013) and the Key Project of Computing Power Internet and Information Security, Ministry of Education (2024ZD006).
\item \textbf{Competing interests} The authors declare no competing interests.
\item \textbf{Data availability} No datasets were generated or analyzed during the current study.
\item \textbf{Author contribution} Authors Yanxin Chen and Yanli Wang were primarily responsible for deriving main results and manuscript writing. Tongjiang Yan contributed to the development and verification of the proofs and offered guidance throughout the project.
\end{itemize}

\bibliography{main}

@article{carlet2018euclidean,
  title={{Euclidean and Hermitian LCD MDS codes}},
  author={Carlet, Claude and Mesnager, Sihem and Tang, Chunming and Qi, Yanfeng},
  journal={Designs, Codes and Cryptography},
  volume={86},
  number={11},
  pages={2605--2618},
  year={2018},
  publisher={Springer}
}

@article{steane2002enlargement,
  title={{Enlargement of calderbank-shor-steane quantum codes}},
  author={Steane, Andrew M},
  journal={IEEE Transactions on Information Theory},
  volume={45},
  number={7},
  pages={2492--2495},
  year={2002},
  publisher={IEEE}
}

@article{ketkar2006nonbinary,
  title={{Nonbinary stabilizer codes over finite fields}},
  author={Ketkar, Avanti and Klappenecker, Andreas and Kumar, Santosh and Sarvepalli, Pradeep Kiran},
  journal={IEEE transactions on information theory},
  volume={52},
  number={11},
  pages={4892--4914},
  year={2006},
  publisher={IEEE}
}

@article{calderbank1997quantum,
  title={{Quantum error correction and orthogonal geometry}},
  author={Calderbank, A Robert and Rains, Eric M and Shor, Peter W and Sloane, Neil JA},
  journal={Physical Review Letters},
  volume={78},
  number={3},
  pages={405},
  year={1997},
  publisher={APS}
}

@article{gottesman1996class,
  title={{Class of quantum error-correcting codes saturating the quantum Hamming bound}},
  author={Gottesman, Daniel},
  journal={Physical Review A},
  volume={54},
  number={3},
  pages={1862},
  year={1996},
  publisher={APS}
}

@book{huffman2010fundamentals,
  title={{Fundamentals of error-correcting codes}},
  author={Huffman, W Cary and Pless, Vera},
  publisher={Cambridge university press}
}

@book{macwilliams1977theory,
  title={{The theory of error-correcting codes}},
  author={MacWilliams, Florence Jessie and Sloane, Neil James Alexander},
  volume={16},
  publisher={Elsevier}
}

@article{baicheva200410,
  title={{On the [10, 5, 6] Reed-Solomon and Glynn Codes}},
  author={Baicheva, T and Bouyukliev, I and Dodunekov, S and Willems, W},
  journal={Update},
  volume={2004},
  pages={01--14},
  year={2004}
}

@article{georgiou2002mds,
  title={{MDS self-dual codes over large prime fields}},
  author={Georgiou, S and Koukouvinos, C},
  journal={Finite Fields and Their Applications},
  volume={8},
  number={4},
  pages={455--470},
  year={2002},
  publisher={Elsevier}
}

@inproceedings{grassl2008self,
  title={{On self-dual MDS codes}},
  author={Grassl, Markus and Gulliver, T Aaron},
  booktitle={2008 IEEE International Symposium on Information Theory},
  pages={1954--1957},
  year={2008},
  organization={IEEE}
}

@article{guenda2012new,
  title={{New MDS self-dual codes over finite fields}},
  author={Guenda, Kenza},
  journal={Designs, Codes and Cryptography},
  volume={62},
  number={1},
  pages={31--42},
  year={2012},
  publisher={Springer}
}

@article{kim2004euclidean,
  title={{Euclidean and Hermitian self-dual MDS codes over large finite fields}},
  author={Kim, Jon-Lark and Lee, Yoonjin},
  journal={Journal of combinatorial theory, series A},
  volume={105},
  number={1},
  pages={79--95},
  year={2004},
  publisher={Elsevier}
}

@article{shi2018self,
  title={{Self-dual codes and orthogonal matrices over large finite fields}},
  author={Shi, Minjia and Sok, Lin and Sol{\'e}, Patrick and {\c{C}}alkavur, Selda},
  journal={Finite Fields and Their Applications},
  volume={54},
  pages={297--314},
  year={2018},
  publisher={Elsevier}
}

@article{sheekey2015new,
  title={{A new family of linear maximum rank distance codes}},
  author={Sheekey, John},
  journal={arXiv preprint arXiv:1504.01581},
  year={2015}
}

@inproceedings{beelen2017twisted,
  title={{Twisted reed-solomon codes}},
  author={Beelen, Peter and Puchinger, Sven and n{\'e} Nielsen, Johan Rosenkilde},
  booktitle={2017 IEEE International Symposium on Information Theory (ISIT)},
  pages={336--340},
  year={2017},
  organization={IEEE}
}

@article{ashikhmin2002nonbinary,
  title={{Nonbinary quantum stabilizer codes}},
  author={Ashikhmin, Alexei and Knill, Emanuel},
  journal={IEEE Transactions on Information Theory},
  volume={47},
  number={7},
  pages={3065--3072},
  year={2002},
  publisher={IEEE}
}

@article{zhao2025research,
  title={{Research on the construction of maximum distance separable codes via arbitrary twisted generalized Reed-Solomon codes}},
  author={Zhao, Cune’e and Ma, Wenping and Yan, Tongjiang and Sun, Yuhua},
  journal={IEEE Transactions on Information Theory},
  year={2025},
  publisher={IEEE}
}

@article{gu2023twisted,
  title={{On twisted generalized Reed-Solomon codes with l twists}},
  author={Gu, Haojie and Zhang, Jun},
  journal={IEEE Transactions on Information Theory},
  volume={70},
  number={1},
  pages={145--153},
  year={2023},
  publisher={IEEE}
}

@article{zhu20241,
  title={{The [1, 0]-twisted generalized Reed-Solomon code}},
  author={Zhu, Canze and Liao, Qunying},
  journal={Cryptography and Communications},
  volume={16},
  number={4},
  pages={857--878},
  year={2024},
  publisher={Springer}
}

@article{huang2023mds,
  title={{MDS or NMDS LCD codes from twisted Reed-Solomon codes}},
  author={Huang, Daitao and Yue, Qin and Niu, Yongfeng},
  journal={Cryptography and communications},
  volume={15},
  number={2},
  pages={221--237},
  year={2023},
  publisher={Springer}
}

@article{sui2023new,
  title={{New constructions of self-dual codes via twisted generalized Reed-Solomon codes}},
  author={Sui, Junzhen and Yue, Qin and Sun, Fuqing},
  journal={Cryptography and communications},
  volume={15},
  number={5},
  pages={959--978},
  year={2023},
  publisher={Springer}
}

@article{sui2022mds,
  title={{MDS, near-MDS or 2-MDS self-dual codes via twisted generalized Reed-Solomon codes}},
  author={Sui, Junzhen and Yue, Qin and Li, Xia and Huang, Daitao},
  journal={IEEE Transactions on Information Theory},
  volume={68},
  number={12},
  pages={7832--7841},
  year={2022},
  publisher={IEEE}
}

@article{zhu2021self,
  title={{Self-dual twisted generalized Reed-Solomon codes}},
  author={Zhu, Canze and Liao, Qunying},
  journal={arXiv preprint arXiv:2111.11901},
  year={2021}
}

@article{zhu2022self,
  title={{Self-orthogonal generalized twisted Reed-Solomon codes}},
  author={Zhu, Canze and Liao, Qunying},
  journal={arXiv preprint arXiv:2201.02758},
  year={2022}
}

@article{liu2021construction,
  title={{Construction of MDS twisted Reed--Solomon codes and LCD MDS codes}},
  author={Liu, Hongwei and Liu, Shengwei},
  journal={Designs, Codes and Cryptography},
  volume={89},
  number={9},
  pages={2051--2065},
  year={2021},
  publisher={Springer}
}

@article{huang2021mds,
  title={{MDS or NMDS self-dual codes from twisted generalized Reed--Solomon codes}},
  author={Huang, Daitao and Yue, Qin and Niu, Yongfeng and Li, Xia},
  journal={Designs, Codes and Cryptography},
  volume={89},
  number={9},
  pages={2195--2209},
  year={2021},
  publisher={Springer}
}

@article{ding2025new,
  title={{New self-dual codes from TGRS codes with general l twists}},
  author={Ding, Yun and Zhu, Shixin},
  journal={Advances in Mathematics of Communications},
  volume={19},
  number={2},
  pages={662--675},
  year={2025},
  publisher={Advances in Mathematics of Communications}
}

@article{zhu2024class,
  title={{A class of double-twisted generalized Reed-Solomon codes}},
  author={Zhu, Canze and Liao, Qunying},
  journal={Finite Fields and Their Applications},
  volume={95},
  pages={102395},
  year={2024},
  publisher={Elsevier}
}

@article{singh2024mds,
  title={{MDS multi-twisted Reed-Solomon codes with small dimensional hull}},
  author={Singh, Harshdeep and Meena, Kapish Chand},
  journal={Cryptography and Communications},
  volume={16},
  number={3},
  pages={557--578},
  year={2024},
  publisher={Springer}
}

@article{meena2025class,
  title={{A class of triple-twisted GRS codes}},
  author={Meena, Kapish Chand and Pachauri, Piyush and Awasthi, Ambrish and Bhaintwal, Maheshanand},
  journal={Designs, Codes and Cryptography},
  volume={93},
  number={7},
  pages={2369--2393},
  year={2025},
  publisher={Springer}
}

@article{hu2025p,
  title={{On (L,P)-Twisted Generalized Reed-Solomon Codes}},
  author={Hu, Zhao and Wang, Liang and Li, Nian and Zeng, Xiangyong and Tang, Xiaohu},
  journal={IEEE Transactions on Information Theory},
  year={2025},
  publisher={IEEE}
}

@article{liang2025four,
  title={{Four classes of LCD codes from (*)-(L, P)-twisted generalized Reed-Solomon codes}},
  author={Liang, Zhonghao and Liao, Qunying},
  journal={arXiv preprint arXiv:2509.14878},
  year={2025}
}

@article{zhao2025hermitian,
  title={{Hermitian Self-dual Twisted Generalized Reed-Solomon Codes}},
  author={Zhao, Chun'e and Han, Yuxin and Ma, Wenping and Yan, Tongjiang and Sun, Yuhua},
  journal={arXiv preprint arXiv:2508.09687},
  year={2025}
}

@article{guo2023duality,
  title={{Duality of generalized twisted Reed-Solomon codes and Hermitian self-dual MDS or NMDS codes}},
  author={Guo, Guanmin and Li, Ruihu and Liu, Yang and Song, Hao},
  journal={Cryptography and Communications},
  volume={15},
  number={2},
  pages={383--395},
  year={2023},
  publisher={Springer}
}

@article{zhang2025almost,
  title={{Almost self-dual MDS codes and NMDS codes from twisted generalized Reed--Solomon codes}},
  author={Zhang, Yuqi and Ding, Yang},
  journal={Journal of Algebra and Its Applications},
  pages={2650227},
  year={2025},
  publisher={World Scientific}
}

@article{yang2025two,
  title={{Two classes of twisted generalized Reed-Solomon codes with two twists}},
  author={Yang, Shudi and Wang, Jinlong and Wu, Yansheng},
  journal={Finite Fields and Their Applications},
  volume={104},
  pages={102595},
  year={2025},
  publisher={Elsevier}
}

@article{luo2022two,
  title={{Two new classes of Hermitian self-orthogonal non-GRS MDS codes and their applications}},
  author={Luo, Gaojun and Cao, Xiwang and Ezerman, Martianus Frederic and Ling, San},
  journal={Advances in Mathematics of Communications},
  volume={16},
  number={4},
  pages={921--933},
  year={2022},
  publisher={Advances in Mathematics of Communications}
}

@article{sui2022onlymds,
  title={{MDS and near-MDS codes via twisted Reed--Solomon codes}},
  author={Sui, Junzhen and Zhu, Xiaomeng and Shi, Xueying},
  journal={Designs, Codes and Cryptography},
  volume={90},
  number={8},
  pages={1937--1958},
  year={2022},
  publisher={Springer}
}

@article{liang2025multi,
  title={Multi-Twisted Generalized Reed-Solomon Codes: Structure, Properties, and Constructions},
  author={Liang, Zhonghao and Jia, Chenlu and Huang, Dongmei and Liao, Qunying and Tang, Chunming},
  journal={arXiv preprint arXiv:2511.03398},
  year={2025}
}

\end{document}